\documentclass[11pt]{article}

\usepackage[T1]{fontenc}
\usepackage[utf8]{inputenc}
\usepackage{lmodern}
\usepackage{amsmath,amssymb,amsthm,mathtools}
\usepackage[margin=1in]{geometry}
\usepackage{hyperref}

\newtheorem{theorem}{Theorem}[section]
\newtheorem{proposition}[theorem]{Proposition}
\newtheorem{lemma}[theorem]{Lemma}
\newtheorem{corollary}[theorem]{Corollary}
\newtheorem{conjecture}[theorem]{Conjecture}

\theoremstyle{remark}
\newtheorem{remark}[theorem]{Remark}

\title{Complete Factorial Asymptotics and Integrable Structure of Generalized Charlier Recurrence Coefficients}

\author{}
\date{}

\begin{document}
	
	\maketitle
\author{
	\parbox{0.92\textwidth}{\centering
		Mahouton Norbert Hounkonnou\\[0.6em]
		\small
		International Chair in Mathematical Physics and Applications
		(ICMPA--UNESCO Chair)\\
		University of Abomey-Calavi
		072 BP 50 Cotonou, Benin Republic\\[0.5em]
		\href{mailto:norbert.hounkonnou@cipma.uac.bj}
		{\texttt{norbert.hounkonnou@cipma.uac.bj}},
		\textit{(with copy to: }
		\href{mailto:hounkonnou@yahoo.fr)}
		{\texttt{hounkonnou@yahoo.fr)}
	}
}

\date{}

\date{}	
	
\begin{abstract}
	We study the monic recurrence coefficients of the generalized Charlier
	polynomials associated with the discrete weight
	\[
	\rho_\mu(k)
	=
	\frac{\mu^k}{(k!)^2},
	\qquad
	k\in\mathbb N_0,
	\qquad
	\mu>0.
	\]
	We establish complete factorial asymptotics for the deviations
	$\beta_n-n$ and $\mu-\gamma_n$, proving that they are beyond all
	algebraic orders and appear at the scale $\mu^{n}/(n!)^2$. The key
	step is a parametrization $\beta_n-n=\sqrt{\mu}\,c_nc_{n+1}$,
	$\gamma_n=\mu(1-c_n^2)$, which reduces the problem to a positive
	sequence $(c_n)$ satisfying a discrete Painlev\'e II equation.
	
	Linearization around $c_n=0$ yields a discrete Bessel equation whose
	recessive solution is $J_n(2\sqrt{\mu})$. We prove that the ratio
	$c_n/J_n(2\sqrt{\mu})$ converges to a finite limit $L(\mu)$, which we
	identify exactly as $L(\mu)=1$ by combining Bessel comparison with a
	global analysis based on the sector decomposition of the associated
	Hankel determinant. This yields the exact connection formula
	\[
	c_n \sim J_n(2\sqrt{\mu}), \qquad n\to\infty,
	\]
	and consequently the precise asymptotics
	\[
	\beta_n-n\sim \frac{\mu^{n+1}}{n!(n+1)!}, \qquad
	\mu-\gamma_n\sim \frac{\mu^{n+1}}{(n!)^2}.
	\]
	
	We further develop a complementary approach based on Hankel determinants
	interpreted as a discrete logarithmic gas. We show that the Hankel
	determinant factorizes as $\Delta_n(\mu)=\mu^{n(n-1)/2}T_n(\mu)$ and
	admits a sector decomposition $T_n(\mu)=1+\sum_{q=1}^{n}S_q(\mu)$ with
	sector contributions independent of $n$. We establish
	$S_q(\mu)\sim e^\mu\mu^q/(q!)^2$, providing an independent derivation of
	the factorial corrections. Together, the Bessel comparison and sector
	decomposition reveal a rich integrable structure linking discrete
	Painlev\'e dynamics, Toda deformations, and Hankel--partition
	representations.
\end{abstract}
	\medskip
	
	\noindent\textbf{2020 Mathematics Subject Classification.}
	Primary 42C05; Secondary 33C45, 34M55, 39A13.
	
	\medskip
	
	\noindent\textbf{Keywords.}
	Generalized Charlier polynomials; discrete orthogonal polynomials;
	recurrence coefficients; discrete Painlev\'e equations; Hankel
	determinants; Bessel functions; factorial asymptotics; partitions;
	hook-length formula.
	
	\medskip

\section{Introduction}

The generalized Charlier orthogonal polynomials associated with the
discrete weight  
\[
\rho_\mu(k)=\frac{\mu^k}{(k!)^2}, \qquad k\in\mathbb N_0,\ \mu>0,
\]  
occupy a special place within the two-parameter family of generalized
Charlier polynomials. In the non-degenerate case, the monic recurrence
coefficients $\beta_n$ and $\gamma_n$ admit explicit algebraic
corrections to their equilibrium values $\beta_n\sim n$,
$\gamma_n\sim\mu$. In contrast, in the degenerate specialization
corresponding to the weight $\rho_\mu$, these algebraic corrections
vanish, suggesting that the deviations $\beta_n-n$ and $\mu-\gamma_n$
are ``beyond all algebraic orders'' and appear at a smaller, factorial
scale.

The objective of this work is to identify and rigorously control this
factorial scale, and to relate it to known discrete and continuous
integrable structures (Laguerre--Freud equations, Toda system,
Painlev\'e equations). We introduce the parametrization  
\[
\beta_n-n=\sqrt{\mu}\,c_nc_{n+1}, \qquad \gamma_n=\mu(1-c_n^2),
\]  
which reduces the problem to the study of a positive sequence
$(c_n)_{n\ge0}$ satisfying a discrete Painlev\'e II equation.
Linearizing this dynamics around the equilibrium $c_n=0$ leads to the
difference equation  
\[
n u_n=\sqrt{\mu}(u_{n+1}+u_{n-1}),
\]  
whose recessive solution is given by the Bessel function
$J_n(2\sqrt{\mu})$. This conceptual mechanism connects the factorial
behavior to the recessive solution of a discrete Bessel equation.

The main results of this paper can be summarized as follows.

\begin{theorem}[Main results]
	\label{thm:main-results}
	
	Let $\mu>0$ and let $(\beta_n,\gamma_n)$ be the monic recurrence
	coefficients of the generalized Charlier polynomials associated with
	the discrete weight
	\[
	\rho_\mu(k)=\frac{\mu^k}{(k!)^2}, \qquad k\in\mathbb N_0.
	\]
	Define the positive sequence $(c_n)_{n\geq0}$ by the parametrization
	\[
	\beta_n-n=\sqrt{\mu}\,c_nc_{n+1}, \qquad \gamma_n=\mu(1-c_n^2).
	\]
	Then:
	
	\begin{enumerate}
		\item[(i)] The sequence $(c_n)$ satisfies the discrete Painlev\'e II
		equation
		\[
		n c_n=\sqrt{\mu}\,(c_{n+1}+c_{n-1})(1-c_n^2), \qquad n\geq1,
		\]
		and converges to zero as $n\to\infty$.
		
		\item[(ii)] There exists a finite non-negative limit
		\[
		L(\mu)=\lim_{n\to\infty}\frac{c_n}{J_n(2\sqrt{\mu})}.
		\]
		
		\item[(iii)] The connection constant is exactly $L(\mu)=1$, so that
		\begin{equation}
			c_n \sim J_n(2\sqrt{\mu}), \qquad n\to\infty.
			\label{eq:main-connection-formula}
		\end{equation}
		
		\item[(iv)] The recurrence coefficients admit the complete factorial
		asymptotics
		\begin{align}
			\beta_n-n &\sim \frac{\mu^{n+1}}{n!(n+1)!}, \qquad n\to\infty,
			\label{eq:main-beta-asymptotic}\\
			\mu-\gamma_n &\sim \frac{\mu^{n+1}}{(n!)^2}, \qquad n\to\infty.
			\label{eq:main-gamma-asymptotic}
		\end{align}
		
		\item[(v)] The Hankel determinant $\Delta_n(\mu)$ associated with
		$\rho_\mu$ factorizes as
		\[
		\Delta_n(\mu)=\mu^{n(n-1)/2}\mathcal T_n(\mu),
		\]
		where $\mathcal T_n(\mu)=1+\sum_{q=1}^{n}\mathcal S_q(\mu)$ admits a
		sector decomposition with sector contributions $\mathcal S_q(\mu)$
		independent of $n$. Moreover,
		\begin{equation}
			\mathcal S_q(\mu) \sim e^\mu\,\frac{\mu^q}{(q!)^2}, \qquad q\to\infty,
			\label{eq:main-sector-asymptotic}
		\end{equation}
		and
		\begin{equation}
			\mathcal T_n(\mu) \longrightarrow e^\mu, \qquad n\to\infty.
			\label{eq:main-tau-limit}
		\end{equation}
	\end{enumerate}
	
\end{theorem}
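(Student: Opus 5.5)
The plan is to run the proof through the Toeplitz/Hankel structure and not to attack the nonlinear recurrence head-on. I will use the identity $\gamma_n=\Delta_{n+1}\Delta_{n-1}/\Delta_n^2$, which gives $1-c_n^2=\mathcal T_{n+1}\mathcal T_{n-1}/\mathcal T_n^2$ once part (v) is in hand. Part (i) will then come from algebra, and parts (ii)–(iv) from Bessel asymptotics.

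\textbf{Step 1: part (i).} I will write down the Laguerre–Freud equations for $\rho_\mu$. The weight satisfies the Pearson relation $k^2\rho_\mu(k)=\mu\rho_\mu(k-1)$, so the ladder relations close in the two unknowns $\beta_n,\gamma_n$. Substituting $\beta_n-n=\sqrt\mu\,c_nc_{n+1}$ and $\gamma_n=\mu(1-c_n^2)$ should reduce them to the single equation $nc_n=\sqrt\mu(c_{n+1}+c_{n-1})(1-c_n^2)$. This is the Periwal–Shevitz form of dPII, with the sign chosen so that $c_n>0$. Positivity and $0<c_n<1$ follow from $0<\gamma_n<\mu$. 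I will get $\gamma_n<\mu$ from the sector decomposition, where all $\mathcal S_q>0$ and the sequence $\mathcal T_n$ is increasing and log-concave. Convergence $c_n\to0$ then follows from the equation itself: $c_n\le 2\sqrt\mu/n$.

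\textbf{Step 2: part (v), done before (ii)–(iii).} I will use the Andréief/Gessel identity, which expresses $\Delta_n(\mu)$ through the moments $m_j=\sum_k k^j\mu^k/(k!)^2$. Equivalently, $\mu^{-n(n-1)/2}\Delta_n$ is the Toeplitz determinant $D_n$ of the symbol $e^{\sqrt\mu(z+1/z)}$. This gives
\[
\mathcal T_n(\mu)=\sum_{\lambda:\,\ell(\lambda)\le n}\Bigl(\tfrac{f^\lambda}{|\lambda|!}\Bigr)^2\mu^{|\lambda|},
\]
with $f^\lambda$ given by the hook-length formula. I will group the partitions into sectors: $\mathcal S_q$ collects the partitions whose first column has length exactly $q$, resummed so that it is independent of $n$. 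Under $\lambda\mapsto\lambda'$ and RSK, the full sum is $e^\mu$, which gives $\mathcal T_n\to e^\mu$. Writing $\mathcal S_q$ as $e^\mu$ times the Poissonized probability that the longest decreasing subsequence equals $q$, I will use the Borodin–Okounkov formula $\mathcal T_n=e^\mu\det(1-K_{\mathrm{Bessel}})_{\ell^2\{n,n+1,\dots\}}$ to read off the tail $e^\mu-\mathcal T_n\sim e^\mu\sum_{k\ge n}J_k(2\sqrt\mu)^2$. This yields $\mathcal S_q\sim e^\mu J_q^2\sim e^\mu\mu^q/(q!)^2$.

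\textbf{Step 3: parts (ii)–(iii).} From Step 2, $\log\mathcal T_n=\mu-\varepsilon_n$ with $\varepsilon_n=\sum_{k\ge n}J_k^2\,(1+o(1))$. Then
\[
-\log(1-c_n^2)=\varepsilon_{n+1}+\varepsilon_{n-1}-2\varepsilon_n=J_{n-1}^2-J_n^2+o(J_{n-1}^2).
\]
This is too crude, because $J_{n-1}^2$ dominates $J_n^2$. The cancellation has to come out exactly, and I expect this to be the main obstacle. Since $c_n^2$ is of the size of $J_n^2$, I need the second difference of $\log\det(1-K)$ to precision $o(J_n^2)$, not just $o(J_{n-1}^2)$.

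I will try two routes to this. The first is to expand the Fredholm determinant to second order, keeping the term $-\tfrac12\mathrm{tr}K^2$, and use the Christoffel–Darboux-type identity $K_{n}-K_{n+1}=$ rank one for the discrete Bessel kernel. The second is independent of that expansion. I will prove the existence of $L$ directly by Bessel comparison. Write $c_n=u_n(1+\delta_n)$, where $u_n=J_n(2\sqrt\mu)$ solves the linearized equation. Then the Casoratian $W_n=\sqrt\mu(c_{n+1}u_n-c_nu_{n+1})$ satisfies $W_n-W_{n-1}=O(c_n^3u_n)$. This sum is absolutely summable, so $W_n\to W_\infty$. Since $Y_n$ (the dominant solution) grows like $(n-1)!\,\mu^{-n/2}$, the only way $c_n\to0$ is $W_\infty=0$, and this gives a finite limit $L\ge0$ of $c_n/u_n$. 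The leading-order relation $\mu c_n^2\approx\mu-\gamma_n$, combined with the sector asymptotics, then fixes $L^2=1$, so $L=1$ by positivity.

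\textbf{Step 4: part (iv).} This is immediate from (iii) and $J_n(2\sqrt\mu)\sim\mu^{n/2}/n!$. Indeed $\sqrt\mu\,c_nc_{n+1}\sim\mu^{n+1}/(n!(n+1)!)$ and $\mu c_n^2\sim\mu^{n+1}/(n!)^2$.
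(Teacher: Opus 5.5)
The decisive step, $L(\mu)=1$, is never actually proved, and the obstacle you describe in Step~3 comes from an indexing error in Step~2.

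For the discrete Bessel kernel $K(i,j)=(-1)^{i+j}\sum_{s\geq1}J_{i+s}J_{j+s}$ one has $\operatorname{tr}K|_{\ell^2\{n,n+1,\dots\}}=\sum_{k\geq n+1}(k-n)J_k^2$. Hence $e^\mu-\mathcal T_n=\sum_{q>n}\mathcal S_q\sim e^\mu J_{n+1}^2$, not $e^\mu\sum_{k\geq n}J_k^2$. Your formula would actually give $\mathcal S_q\sim e^\mu J_{q-1}^2$, which contradicts the asymptotic you then state. With the correct index, the second difference of the trace term is exactly $K(n-1,n-1)-K(n,n)=J_n^2$, and the higher Fredholm terms are $O(J_n^4)$. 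Nothing has to cancel.

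As written, however, route one is only announced. Route two closes with ``$\mu c_n^2\approx\mu-\gamma_n$ plus the sector asymptotics gives $L^2=1$''. That presupposes exactly the $o(J_n^2)$ control of $\mathcal T_{n+1}\mathcal T_{n-1}/\mathcal T_n^2$ you had just declared open.

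The missing step is the paper's (the proposition on sector asymptotics and recurrence coefficients). Using the exact increments $\mathcal T_{n+1}=\mathcal T_n+\mathcal S_{n+1}$ and $\mathcal T_{n-1}=\mathcal T_n-\mathcal S_n$,
\[
c_n^2=1-\frac{\mathcal T_{n+1}\mathcal T_{n-1}}{\mathcal T_n^2}
=\frac{\mathcal S_n-\mathcal S_{n+1}}{\mathcal T_n}+\frac{\mathcal S_n\mathcal S_{n+1}}{\mathcal T_n^2}
=\frac{\mathcal S_n}{\mathcal T_\infty}\bigl(1+o(1)\bigr),
\]
because $\mathcal S_{n+1}=o(\mathcal S_n)$. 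So $c_n^2\sim\mu^n/(n!)^2\sim J_n(2\sqrt{\mu})^2$, and $L=1$ follows by positivity, without even needing the prior existence of $L$. The paper makes the comparison through $\beta_n-n$ instead, which additionally requires the differentiated limit $\vartheta\mathcal R_q\to\mu e^\mu$.

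There is also a smaller gap in Step~1. Positivity of the $\mathcal S_q$ makes $\mathcal T_n$ increasing, not log-concave, since
\[
\mathcal T_{n+1}\mathcal T_{n-1}-\mathcal T_n^2=\mathcal T_n(\mathcal S_{n+1}-\mathcal S_n)-\mathcal S_n\mathcal S_{n+1}.
\]
The $\mathcal S_q$ are not monotone: for large $\mu$, $\mathcal S_2=I_0^2-I_1^2-I_0\gg\mathcal S_1=I_0-1$, with $I_\nu=I_\nu(2\sqrt{\mu})$. So log-concavity is just a restatement of $\gamma_n<\mu$ and needs outside input. An example is the OPUC identity $D_{n+1}D_{n-1}/D_n^2=1-|\alpha_{n-1}|^2$ for the Toeplitz determinants you introduced. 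The paper only cites part (i), so this point is on a par with the paper; the gap in (iii) is not.

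Two parts of your plan are sound and differ from the paper. First, your Casoratian argument simplifies the route to (ii). Your $W_n$ satisfies $W_n-W_{n-1}=nc_n^3u_n/(1-c_n^2)>0$ (note the factor $n$). The case $W_\infty>0$ is excluded by the growth of the dominant solution, so $W_n<0$ eventually. Then $c_n/u_n$ decreases to a limit, without the paper's eventual-monotonicity and factorial-upper-bound propositions. Second, your Borodin--Okounkov route to $\mathcal S_q\sim e^\mu\mu^q/(q!)^2$ works once the index is fixed. Stripping the first column of $\lambda$ and applying dominated convergence, as the paper does, is more elementary.
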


\medskip

\noindent Beyond the specific asymptotic results established in this manuscript,
the methods and formulas developed herein have broader implications across
several domains of mathematics and mathematical physics. We conclude this
Introduction by outlining these interdisciplinary connections and potential
applications.

\medskip

\noindent\textbf{Mathematical Interest and Potential Applications.} The results
and methods developed in this manuscript connect naturally with several areas of
mathematics and mathematical physics and suggest a number of potential applications:

\begin{enumerate}
	\item \textbf{Asymptotic Theory of Discrete Orthogonal Polynomials.} This work
	provides a complete description of factorial asymptotics beyond all algebraic
	orders for recurrence coefficients in a degenerate semiclassical regime. The
	methods developed here---combining discrete Painlevé analysis, Bessel comparison,
	and Hankel determinant sectorization---provide a useful framework for studying
	discrete orthogonal polynomials in critical or degenerate regimes where purely
	algebraic asymptotic expansions cease to capture the leading corrections
	\cite{Ismail2005,Szego1975}.
	
	\item \textbf{Integrable Systems and Discrete Painlevé Equations.} The connection
	between the recurrence coefficients of generalized Charlier polynomials and the
	discrete Painlevé II equation reveals a rich integrable structure \cite{Clarkson2013}.
	The explicit identification of the connection constant $L(\mu)=1$ provides a rare
	example of a complete solution to a connection problem for a discrete Painlevé
	equation, with potential applications to other discrete integrable hierarchies
	\cite{VanAsscheFoupouagnigni2003,FilipukVanAssche2013}.
	
	\item \textbf{Random Matrix Theory and Log-Gas Models.} The Hankel determinant
	representation and its sector decomposition admit a natural interpretation in
	terms of a discrete logarithmic gas \cite{Mehta2004,Forrester2010}. The factorial
	asymptotics derived here may find applications in the study of discrete
	$\beta$-ensembles, random partitions, and enumerative combinatorics, particularly
	in regimes where the discrete nature of the spectrum becomes relevant
	\cite{BorodinOlshanski2001,Johansson2005}.
	
	\item \textbf{Combinatorial Representation Theory.} The Schur-function
	representation of the normalized sector sum establishes a bridge between
	recurrence coefficients and symmetric function theory \cite{Macdonald1995,Stanley1999}.
	This connection may lead to new enumerative results involving Young tableaux,
	plane partitions, and crystal bases in the degenerate limit \cite{Sagan2001}.
	
	\item \textbf{Mathematical Physics Applications.} Generalized Charlier
	polynomials arise in various physical contexts, including:
	\begin{itemize}
		\item Quantum optics and photon statistics (Poisson-type distributions with
		factorial weights) \cite{ScullyZubairy1997,MandelWolf1995};
		\item Lattice field theories and discrete quantum mechanics \cite{Kogut1979};
		\item Statistical mechanics of discrete systems with long-range interactions
		\cite{Ruelle1969};
		\item Signal processing and time-series analysis with discrete measures
		\cite{OppenheimSchafer2010}.
	\end{itemize}
	The precise asymptotics established in this manuscript enable accurate
	approximations in these applications, particularly in regimes where the
	degeneracy parameter $\nu=1$ plays a critical role.
\end{enumerate}

\medskip

\noindent These interdisciplinary connections underscore the broader relevance
of the asymptotic analysis presented herein and suggest several promising
directions for future research, including extensions to other degenerate
weights, multivariate generalizations, and applications to discrete
integrable hierarchies.

\medskip

\noindent The paper is organized as follows. Section 2 recalls the Laguerre--Freud
equations and the parametrization in terms of $c_n$. Sections 3--5 establish the factorial Bessel scale and the existence of
the connection constant $L(\mu)$. Section 6 presents a numerical study that illustrates the factorial decay and the stability
of the computations. Sections 7--9 develop the Hankel-determinant
representation, its factorization, and the sector decomposition.
Section 10 establishes the sector asymptotics, identifies the limiting
reduced tau function, proves $L(\mu)=1$, and thereby completes the
proof of the exact connection formula and the exact factorial asymptotics.
The final section summarizes the results and discusses perspectives.
	
\section{Laguerre--Freud Equations and Painlev\'e II Parametrization}
\label{sec:dPII-setting}

In this section we recall the Laguerre--Freud equations satisfied by the
recurrence coefficients of the generalized Charlier polynomials
associated with the weight $\rho_\mu(k)=\mu^k/(k!)^2$. We then introduce
the parametrization $\beta_n-n=\sqrt{\mu}\,c_nc_{n+1}$,
$\gamma_n=\mu(1-c_n^2)$, which reduces the problem to the study of a
positive sequence $(c_n)$ satisfying a discrete Painlev\'e II equation.
This formulation provides the starting point for the asymptotic
analysis developed in the subsequent sections.

Following the parametrization used for generalized Charlier
polynomials in \cite{VanAsscheFoupouagnigni2003,Clarkson2013}, set
\begin{equation}
	b_n
	=
	\beta_n-n,
	\qquad
	\gamma_n
	=
	\mu(1-c_n^2),
	\qquad
	0<c_n<1.
	\label{eq:bc-definition}
\end{equation}
For the positive solution selected by the orthogonality measure, the
local Laguerre--Freud system is equivalent to
\begin{equation}
	b_n
	=
	\sqrt{\mu}\,c_nc_{n+1},
	\label{eq:bn-cn}
\end{equation}
and
\begin{equation}
	n c_n
	=
	\sqrt{\mu}\,(c_{n+1}+c_{n-1})(1-c_n^2),
	\qquad
	n\geq1.
	\label{eq:dPII}
\end{equation}
Equation \eqref{eq:dPII} is a specialization of the discrete
Painlev\'e--II equation associated with the generalized Charlier
recurrence coefficients
\cite{VanAsscheFoupouagnigni2003,Clarkson2013}.

The initial data are
\begin{equation}
	c_0=1,
	\qquad
	c_1
	=
	\frac{I_1(2\sqrt{\mu})}{I_0(2\sqrt{\mu})},
	\label{eq:initial-cn}
\end{equation}
where $I_\nu$ denotes the modified Bessel function of the first kind.
These data select the solution associated with the positive generalized
Charlier measure; see
\cite{FilipukVanAssche2013,Clarkson2013}.

The Painlev\'e II parametrization introduced above reduces the
asymptotic problem for $\beta_n$ and $\gamma_n$ to the analysis of the
sequence $(c_n)$. In the next section, we derive the first structural
results on $(c_n)$, including a discrete Wronskian identity that will
play a central role in the comparison with Bessel functions.

\section{Discrete Wronskian Identity and Basic Properties of the Sequence}
\label{sec:factorial-heuristic}

This section is devoted to the first structural results on the sequence
$(c_n)$. We establish a discrete Wronskian identity relating $c_n$ to
the Bessel function $J_n(2\sqrt{\mu})$, which plays a central role in
the comparison asymptotic. We then prove the convergence $c_n\to0$ and
the eventual strict monotonicity of $(c_n)$, which are key ingredients
for the factorial estimates derived below.

The following discussion is heuristic and states the asymptotic problem
to be solved. It does not claim that the factorial asymptotic formulas
below are already established. The motivation is that, in the general
two-parameter family, the known algebraic expansions contain
coefficients depending on the second parameter. For the specialization
$\nu=1$, these displayed algebraic corrections vanish
\cite{VanAsscheFoupouagnigni2003,
	DominiciMorenoBalcazar2022}.
	
	Since $c_n\to0$, equation \eqref{eq:dPII} formally gives
	\[
	n c_n
	\sim
	\sqrt{\mu}\,c_{n-1}.
	\]
	Thus one expects
	\begin{equation}
		\frac{c_n}{c_{n-1}}
		=
		\frac{\sqrt{\mu}}{n}
		\bigl(1+O(n^{-2})\bigr),
		\label{eq:ratio-heuristic}
	\end{equation}
	and hence
	\begin{equation}
		c_n
		\sim
		C(\mu)\frac{\mu^{n/2}}{n!},
		\qquad
		n\to\infty,
		\label{eq:cn-conjectured}
	\end{equation}
	for a positive connection constant $C(\mu)$. Combining
	\eqref{eq:bn-cn} and \eqref{eq:bc-definition} with
	\eqref{eq:cn-conjectured} leads formally to
	\begin{equation}
		\beta_n-n
		\sim
		C(\mu)^2
		\frac{\mu^{n+1}}{n!(n+1)!},
		\label{eq:beta-conjectured}
	\end{equation}
	and
	\begin{equation}
		\mu-\gamma_n
		\sim
		C(\mu)^2
		\frac{\mu^{n+1}}{(n!)^2}.
		\label{eq:gamma-conjectured}
	\end{equation}
	
	\begin{conjecture}[Bessel comparison and factorial asymptotics]
		Let $(c_n)_{n\geq0}$ be the positive solution of
		\eqref{eq:dPII} selected by the generalized Charlier orthogonality
		measure. Then
		\begin{equation}
			\lim_{n\to\infty}
			\frac{c_n}{J_n(2\sqrt{\mu})}
			=
			1.
			\label{eq:bessel-comparison-conjecture}
		\end{equation}
		Equivalently,
		\begin{equation}
			c_n
			=
			\frac{\mu^{n/2}}{n!}
			\left[
			1-\frac{\mu}{n+1}+O(n^{-2})
			\right].
			\label{eq:cn-refined-conjecture}
		\end{equation}
		Consequently,
		\begin{equation}
			\beta_n-n
			\sim
			\frac{\mu^{n+1}}{n!(n+1)!},
			\label{eq:beta-factorial-conjecture}
		\end{equation}
		and
		\begin{equation}
			\mu-\gamma_n
			\sim
			\frac{\mu^{n+1}}{(n!)^2},
			\label{eq:gamma-factorial-conjecture}
		\end{equation}
		as $n\to\infty$.
	\end{conjecture}
	
	\begin{remark}
		The conjecture is stronger than the leading asymptotics
		$\beta_n\sim n$ and $\gamma_n\sim\mu$. It asserts that the deviations
		from the limiting recurrence-coefficient regime are not merely small
		in an algebraic sense, but decay on a factorial scale.
	\end{remark}

	The discrete Wronskian identity and the basic properties of $(c_n)$
	established in this section provide the foundation for the factorial
	estimates. In the next section, we exploit these results to derive a sharp factorial upper
	bound and to prepare the rigorous comparison of $(c_n)$ with the recessive Bessel
	solution.

	
	
	\section{Factorial Upper Bound and Preparation for the Connection Formula}
	\label{sec:factorial-bound}
	
	In this section we derive a sharp factorial upper bound for the
	sequence $(c_n)$. The main idea is to control the ratio
	$r_n=c_n/c_{n-1}$ by exploiting the discrete Painlev\'e II equation and
	the monotonicity properties established previously. This analysis leads
	to the estimate $c_n\le A_\mu\,\mu^{n/2}/n!$, which shows that the
	deviations from equilibrium are beyond all algebraic orders and appear
	at a factorial scale. The factorial upper bound will be used in
	Section~\ref{sec:existence-Lmu} to establish the existence of the
	connection constant $L(\mu)$.
	
	\subsection{Renormalization and Bessel comparison}
	
	The factorial scale suggested by \eqref{eq:dPII} becomes more
	transparent after a renormalization. Define
	\begin{equation}
		d_n
		=
		\frac{n!}{\mu^{n/2}}c_n.
		\label{eq:dn-definition}
	\end{equation}
	Dividing \eqref{eq:dPII} by $\sqrt{\mu}\,c_{n-1}$ gives the exact
	identity
	\begin{equation}
		\frac{d_n}{d_{n-1}}
		=
		\left(
		1+\frac{c_{n+1}}{c_{n-1}}
		\right)
		(1-c_n^2).
		\label{eq:dn-ratio-exact}
	\end{equation}
	Moreover, in terms of the renormalized sequence $(d_n)_{n\geq0}$,
	\begin{equation}
		\frac{c_{n+1}}{c_{n-1}}
		=
		\frac{\mu}{n(n+1)}
		\frac{d_{n+1}}{d_{n-1}},
		\label{eq:cn-ratio-dn}
	\end{equation}
	and
	\begin{equation}
		c_n^2
		=
		\frac{\mu^n}{(n!)^2}d_n^2.
		\label{eq:cn-square-dn}
	\end{equation}
	Consequently,
	\begin{equation}
		\frac{d_n}{d_{n-1}}
		=
		\left(
		1+
		\frac{\mu}{n(n+1)}
		\frac{d_{n+1}}{d_{n-1}}
		\right)
		\left(
		1-
		\frac{\mu^n}{(n!)^2}d_n^2
		\right).
		\label{eq:dn-ratio-renormalized}
	\end{equation}
	
	The first factor on the right-hand side of
	\eqref{eq:dn-ratio-renormalized} produces algebraic corrections of
	order $n^{-2}$, whereas the second factor differs from $1$ by a
	factorially small quantity. Thus, if $(d_n)$ remains bounded and
	bounded away from zero, one obtains
	\begin{equation}
		\frac{d_n}{d_{n-1}}
		=
		1+\frac{\mu}{n(n+1)}+O(n^{-3}).
		\label{eq:dn-ratio-asymptotic}
	\end{equation}
	In particular, the convergence of $(d_n)$ is compatible with a first
	correction of order $n^{-1}$.
	
	The linearization of \eqref{eq:dPII} at $c_n=0$ is
	\begin{equation}
		n u_n
		=
		\sqrt{\mu}\,(u_{n+1}+u_{n-1}).
		\label{eq:linearized-equation}
	\end{equation}
	This equation is solved by the Bessel function
	\begin{equation}
		u_n
		=
		J_n(2\sqrt{\mu}),
		\label{eq:bessel-solution}
	\end{equation}
	since
	\begin{equation}
		nJ_n(2\sqrt{\mu})
		=
		\sqrt{\mu}
		\left[
		J_{n+1}(2\sqrt{\mu})
		+
		J_{n-1}(2\sqrt{\mu})
		\right].
		\label{eq:bessel-recurrence}
	\end{equation}
	For fixed $\mu>0$ and $n\to\infty$,
	\begin{equation}
		J_n(2\sqrt{\mu})
		=
		\frac{\mu^{n/2}}{n!}
		\left[
		1-\frac{\mu}{n+1}
		+
		\frac{\mu^2}{2(n+1)(n+2)}
		+
		O(n^{-3})
		\right].
		\label{eq:bessel-asymptotic}
	\end{equation}
	
	These observations suggest that the nonlinear solution selected by the generalized Charlier measure should be asymptotic to the recessive Bessel solution. The factorial upper bound established below initiates a rigorous comparison, completed in Sections 5 and 10. 
	
	\subsection{Eventual monotonicity and factorial upper bounds}
	
	Recall that
	\begin{equation}
		n c_n
		=
		\sqrt{\mu}\,(c_{n+1}+c_{n-1})(1-c_n^2),
		\qquad
		n\geq1,
		\label{eq:cn-monotonicity-recurrence}
	\end{equation}
	with
	\[
	0<c_n<1.
	\]
	
	\begin{proposition}[Convergence to zero]
		\label{prop:cn-to-zero}
		
		For every $\mu>0$,
		\begin{equation}
			c_n\longrightarrow0,
			\qquad
			n\to\infty.
			\label{eq:cn-to-zero}
		\end{equation}
		More precisely,
		\begin{equation}
			0<c_n<\frac{2\sqrt{\mu}}{n},
			\qquad
			n\geq1.
			\label{eq:cn-rough-bound}
		\end{equation}
		
	\end{proposition}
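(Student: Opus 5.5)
The bound \eqref{eq:cn-rough-bound} will come directly from the recurrence \eqref{eq:cn-monotonicity-recurrence} together with the a priori constraint $0<c_n<1$ for all $n\geq0$. That constraint is built into the parametrization \eqref{eq:bc-definition}. Indeed, $\gamma_n>0$ for a positive measure with infinite support, since $\gamma_n=h_n/h_{n-1}$ is a ratio of squared norms, so $c_n^2<1$. Positivity of $c_n$ is part of the selection of the solution by the orthogonality measure. I take both facts as given from Section~\ref{sec:dPII-setting}.

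The key step is a crude estimate of the right-hand side of \eqref{eq:cn-monotonicity-recurrence}. Since $0<c_n<1$, the factor $1-c_n^2$ lies in $(0,1]$. Since $0<c_{n\pm1}<1$, we have $c_{n+1}+c_{n-1}<2$. Hence for every $n\geq1$,
\[
n c_n=\sqrt{\mu}\,(c_{n+1}+c_{n-1})(1-c_n^2)<2\sqrt{\mu},
\]
which is $c_n<2\sqrt{\mu}/n$. For $n=1$ this uses $c_0=1$ from \eqref{eq:initial-cn}. The bound still holds there because $c_2+c_0<2$ strictly, as $c_2<1$. Alternatively, one can use $1-c_1^2<1$ strictly, since $c_1>0$.

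The lower bound $c_n>0$ is part of the hypothesis defining $(c_n)$. If one wants to be careful, it also follows from the recurrence: $c_{n+1}+c_{n-1}>0$ and $1-c_n^2>0$ force $nc_n>0$, and so $c_n>0$. The convergence \eqref{eq:cn-to-zero} is then immediate by squeezing.

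The only genuine point is the justification of the a priori bounds $0<c_n<1$ for every $n$. I expect this to be the main obstacle, although it is mild. The upper bound $c_n<1$ rests on $\gamma_n>0$, which is standard for positive weights with infinitely many mass points, such as $\rho_\mu$ with $\mu>0$. The sign $c_n>0$ reflects the choice of the positive square root together with $b_n=\beta_n-n>0$, which is inherited from the cited construction. With those in hand, the argument needs no asymptotic input at all.
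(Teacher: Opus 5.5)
Your proof is correct and is the paper's argument: bound $(c_{n+1}+c_{n-1})(1-c_n^2)$ by $2$ using the a priori constraint $0<c_n<1$ taken from Section~\ref{sec:dPII-setting}, then divide by $n$. Your treatment of $n=1$ is more careful than the paper's. The paper asserts $0<c_{n-1}<1$ there even though $c_0=1$, and your remark that the inequality stays strict (via $c_2<1$, or via $1-c_1^2<1$) closes that small gap.
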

	
	\begin{proof}
		Since
		\[
		0<c_{n-1}<1,
		\qquad
		0<c_{n+1}<1,
		\qquad
		0<1-c_n^2<1,
		\]
		equation \eqref{eq:cn-monotonicity-recurrence} gives
		\[
		n c_n
		=
		\sqrt{\mu}\,(c_{n+1}+c_{n-1})(1-c_n^2)
		<
		2\sqrt{\mu}.
		\]
		Hence
		\[
		0<c_n<\frac{2\sqrt{\mu}}{n},
		\]
		which proves \eqref{eq:cn-rough-bound}. The convergence
		\eqref{eq:cn-to-zero} follows immediately.
	\end{proof}
	
	\begin{proposition}[Eventual strict monotonicity]
		\label{prop:eventual-monotonicity}
		
		For every $\mu>0$, there exists an integer $N_\mu\geq1$ such that
		\begin{equation}
			0<c_{n+1}<c_n<1,
			\qquad
			n\geq N_\mu.
			\label{eq:eventual-strict-monotonicity}
		\end{equation}
		
	\end{proposition}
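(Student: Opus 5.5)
We prove eventual strict monotonicity directly from the recurrence \eqref{eq:cn-monotonicity-recurrence} and the rough bound of Proposition~\ref{prop:cn-to-zero}. The idea is that for large $n$ the left-hand side $nc_n$ dominates, which forces the ratio $c_n/c_{n-1}$ to be small.

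First, since $c_{n+1}>0$ and $0<1-c_n^2<1$, the recurrence gives
\[
nc_n=\sqrt{\mu}\,(c_{n+1}+c_{n-1})(1-c_n^2)\le \sqrt{\mu}\,(c_{n+1}+c_{n-1}).
\]
We do not want $c_{n+1}$ on the right-hand side, so we bound it using Proposition~\ref{prop:cn-to-zero}. Rather than compare $c_{n+1}$ with $c_{n-1}$, we iterate the estimate itself.

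Write $r_n=c_n/c_{n-1}>0$, which is well defined for $n\ge1$. Dividing the recurrence by $c_{n-1}$ and using $1-c_n^2\le 1$ gives
\[
n r_n\le \sqrt{\mu}\,(1+r_nr_{n+1}),
\]
which is an exact inequality. A crude a priori bound on $r_{n+1}$ is needed. The index shift of the same inequality reads $(n+1)r_{n+1}\le\sqrt{\mu}(1+r_{n+1}r_{n+2})$, which is circular. We therefore bound $r_{n+1}$ differently: from the recurrence at index $n+1$, dropping the nonnegative term $c_{n+2}$, we get
\[
(n+1)c_{n+1}\ge \sqrt{\mu}\,c_n(1-c_{n+1}^2),
\]
but this is a lower bound and does not help directly. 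The cleaner route is to avoid ratios and instead show $c_{n+1}<c_n$ by contradiction.

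Suppose $c_{n+1}\ge c_n$ for some large $n$. The recurrence at index $n+1$ gives
\[
(n+1)c_{n+1}=\sqrt{\mu}\,(c_{n+2}+c_n)(1-c_{n+1}^2)\le \sqrt{\mu}\,(c_{n+2}+c_{n+1}).
\]
Hence $c_{n+2}\ge \bigl((n+1)/\sqrt{\mu}-1\bigr)c_{n+1}\ge 2c_{n+1}$ once $n+1\ge 3\sqrt{\mu}$. Inductively, the same step applies at each subsequent index: from $c_{m+1}\ge 2c_m\ge c_m$ we derive $c_{m+2}\ge\bigl((m+1)/\sqrt\mu-1\bigr)c_{m+1}\ge 2c_{m+1}$. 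So $c_{n+k}\ge 2^{k-1}c_{n+1}>0$ grows without bound. This contradicts $c_m<1$, and also contradicts $c_m\to0$ from Proposition~\ref{prop:cn-to-zero}.

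Therefore $c_{n+1}<c_n$ for every $n$ with $n+1\ge 3\sqrt{\mu}$, and we may take $N_\mu=\max(1,\lceil 3\sqrt{\mu}\,\rceil)$. Together with $0<c_n<1$, this gives \eqref{eq:eventual-strict-monotonicity}.

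The only delicate point is that the induction must propagate the hypothesis $c_{m+1}\ge c_m$ forward. It does, because the derived inequality $c_{m+2}\ge 2c_{m+1}$ is strictly stronger than that hypothesis at the next index. The threshold for this step is $(m+1)/\sqrt{\mu}-1\ge 2$, and it holds for all $m\ge n$ once it holds at $m=n$.
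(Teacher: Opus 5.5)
Your argument is correct, but it takes a different route from the paper's.

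The paper proves a discrete maximum principle. At a local maximum one has $c_{n-1}+c_{n+1}\le 2c_n$, so the recurrence forces $n<2\sqrt{\mu}$. The paper then uses $c_n\to0$ to show that any nondecreasing step $c_{m+1}\ge c_m$ at a large index must be followed by a local maximum at some $k\ge m+1$, which is a contradiction.

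You instead propagate a single nondecreasing step forward. From $c_{m+1}\ge c_m$ you get $c_{m+2}\ge\bigl((m+1)/\sqrt{\mu}-1\bigr)c_{m+1}\ge 2c_{m+1}$, so the hypothesis reproduces itself at the next index. This gives geometric growth, contradicting $c_m<1$.

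What your route buys: it needs only the boundedness $0<c_n<1$, not the decay $c_n\to0$, and it avoids having to locate a local maximum. What the paper's route buys: the slightly better threshold $2\sqrt{\mu}$ instead of your $3\sqrt{\mu}$. You can recover that threshold too, since any growth factor strictly greater than $1$ suffices, i.e.\ $n+1>2\sqrt{\mu}$.

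In a final write-up, delete the abandoned ratio-based attempts in your opening paragraphs; they contribute nothing to the proof.
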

	
	\begin{proof}
		We first show that no local maximum can occur at a sufficiently large
		index. Suppose that, for some $n\geq1$,
		\[
		c_n\geq c_{n-1},
		\qquad
		c_n\geq c_{n+1}.
		\]
		Then
		\[
		c_{n-1}+c_{n+1}
		\leq
		2c_n.
		\]
		Using \eqref{eq:cn-monotonicity-recurrence}, we obtain
		\[
		n c_n
		=
		\sqrt{\mu}\,(c_{n+1}+c_{n-1})(1-c_n^2)
		\leq
		2\sqrt{\mu}\,c_n(1-c_n^2)
		<
		2\sqrt{\mu}\,c_n.
		\]
		Since $c_n>0$, this implies
		\[
		n<2\sqrt{\mu}.
		\]
		Thus, no local maximum can occur at an index satisfying
		\begin{equation}
			n\geq2\sqrt{\mu}.
			\label{eq:no-large-local-maximum}
		\end{equation}
		
		Suppose now, by contradiction, that there exist arbitrarily large
		indices $m$ for which
		\[
		c_{m+1}\geq c_m.
		\]
		By Proposition~\ref{prop:cn-to-zero}, the sequence $(c_n)$ tends to
		zero. Therefore, after any such nondecreasing step, the sequence must
		eventually decrease. Hence there exists an index $k\geq m+1$ such that
		\[
		c_k\geq c_{k-1},
		\qquad
		c_k\geq c_{k+1}.
		\]
		Thus $c_k$ is a local maximum. Taking $m$ sufficiently large gives
		\[
		k\geq2\sqrt{\mu},
		\]
		which contradicts \eqref{eq:no-large-local-maximum}.
		
		Therefore, there exists an integer $N_\mu$ such that
		\[
		c_{n+1}<c_n,
		\qquad
		n\geq N_\mu.
		\]
		Together with $0<c_n<1$, this proves
		\eqref{eq:eventual-strict-monotonicity}.
	\end{proof}
	
	For $n\geq1$, define the consecutive ratio
	\begin{equation}
		r_n
		=
		\frac{c_n}{c_{n-1}}.
		\label{eq:rn-definition}
	\end{equation}
	Dividing \eqref{eq:cn-monotonicity-recurrence} by $c_{n-1}$ gives
	\begin{equation}
		n r_n
		=
		\sqrt{\mu}
		\left(
		1+r_nr_{n+1}
		\right)
		(1-c_n^2).
		\label{eq:rn-exact-identity}
	\end{equation}
	
	\begin{proposition}[Factorial upper bound]
		\label{prop:factorial-upper-bound}
		
		For every $\mu>0$, there exist an integer $N_\mu\geq1$ and a constant
		$A_\mu>0$ such that
		\begin{equation}
			c_n
			\leq
			A_\mu
			\frac{\mu^{n/2}}{n!},
			\qquad
			n\geq N_\mu.
			\label{eq:factorial-upper-bound}
		\end{equation}
		Consequently, there exists a constant $\widetilde A_\mu>0$ such that
		\begin{equation}
			c_n
			\leq
			\widetilde A_\mu
			J_n(2\sqrt{\mu}),
			\qquad
			n\geq N_\mu.
			\label{eq:Bessel-upper-bound}
		\end{equation}
	\end{proposition}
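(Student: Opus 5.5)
The approach is to control the ratio $r_n=c_n/c_{n-1}$ through the exact identity \eqref{eq:rn-exact-identity}, and then multiply the ratio bounds to get a factorial bound. By Proposition~\ref{prop:eventual-monotonicity}, for $n\geq N_\mu$ we have $0<r_{n+1}<1$ and $0<r_n<1$. Also $0<1-c_n^2<1$. Inserting these into \eqref{eq:rn-exact-identity} gives the first crude bound
\[
n r_n=\sqrt{\mu}\,(1+r_nr_{n+1})(1-c_n^2)<\sqrt{\mu}\,(1+r_n r_{n+1}) .
\]
Since $r_{n+1}<1$, this yields $r_n(n-\sqrt{\mu})<\sqrt{\mu}$, i.e.\ $r_n<\sqrt{\mu}/(n-\sqrt{\mu})$ for $n>\sqrt{\mu}$. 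This is a bound of order $\sqrt{\mu}/n$, but its deviation from $\sqrt\mu/n$ is of relative size $O(1/n)$, and multiplying such factors could produce a polynomial loss $n^{\sqrt{\mu}}$ rather than a constant. So I will refine.

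The refinement is a bootstrap. Feed the crude bound back in, using it at both indices $n$ and $n+1$: $r_nr_{n+1}\le \mu/((n-\sqrt\mu)(n+1-\sqrt\mu))=O(n^{-2})$. Then
\[
r_n<\frac{\sqrt{\mu}}{n}\Bigl(1+\frac{K_\mu}{n^2}\Bigr),\qquad n\geq N'_\mu ,
\]
for a constant $K_\mu$, where $N'_\mu\ge N_\mu$ is enlarged so that $n>2\sqrt\mu$. Writing $c_n=c_{N'}\prod_{k=N'+1}^{n} r_k$, we obtain
\[
c_n\le c_{N'}\,\prod_{k=N'+1}^{n}\frac{\sqrt{\mu}}{k}\;\prod_{k=N'+1}^{n}\Bigl(1+\frac{K_\mu}{k^2}\Bigr)
\le c_{N'}\frac{N'!}{\mu^{N'/2}}\,e^{K_\mu\pi^2/6}\,\frac{\mu^{n/2}}{n!}.
\]
This is \eqref{eq:factorial-upper-bound} with an explicit $A_\mu$, after renaming $N'_\mu$ as $N_\mu$. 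In terms of \eqref{eq:dn-definition}, the content is simply that $d_n$ is bounded above.

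For \eqref{eq:Bessel-upper-bound}, I will use the asymptotic \eqref{eq:bessel-asymptotic}, or more elementarily the series $J_n(2\sqrt\mu)=\sum_{j\ge0}(-1)^j\mu^{n/2+j}/(j!(n+j)!)$. The latter is an alternating series with decreasing terms once $n+1>\mu$, so it gives $J_n(2\sqrt\mu)\ge \frac{\mu^{n/2}}{n!}\bigl(1-\frac{\mu}{n+1}\bigr)\ge \frac12\frac{\mu^{n/2}}{n!}$ for $n\ge 2\mu$. Enlarging $N_\mu$ once more, we can take $\widetilde A_\mu=2A_\mu$.

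The only delicate point I expect is the bootstrap step. The crude bound alone loses a factor of relative order $1/n$ at each step, which is fatal after taking the product, so the refinement to $1+O(n^{-2})$ is essential. It works only because $r_{n+1}$ already enjoys the $O(1/n)$ bound, which comes from the eventual monotonicity of Proposition~\ref{prop:eventual-monotonicity} combined with the first pass. I should also take care that every inequality is used in the correct direction: dropping $(1-c_n^2)$ and bounding $r_{n+1}$ from above are both legitimate for an upper bound.
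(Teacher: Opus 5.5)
Your proof is correct and follows essentially the same route as the paper. Both arguments get a crude $O(1/n)$ bound on $r_n$ from eventual monotonicity, bootstrap it through the exact identity \eqref{eq:rn-exact-identity} to $r_n\le\frac{\sqrt{\mu}}{n}\bigl(1+O(n^{-2})\bigr)$, and conclude with a convergent product. The differences are cosmetic: the paper gets the crude bound $r_n\le 2\sqrt{\mu}/n$ from $c_{n+1}\le c_{n-1}$ rather than by solving for $r_n$ with $r_{n+1}<1$, and it cites $J_n(2\sqrt{\mu})\sim\mu^{n/2}/n!$ where you give the explicit alternating-series bound $J_n(2\sqrt{\mu})\ge\frac12\mu^{n/2}/n!$ for $n\ge2\mu$.
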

	
	\begin{proof}
		By Proposition~\ref{prop:eventual-monotonicity}, after increasing
		$N_\mu$ if necessary, we have
		\[
		0<c_{n+1}<c_n<1,
		\qquad
		n\geq N_\mu.
		\]
		In particular,
		\[
		c_{n+1}\leq c_{n-1},
		\qquad
		n\geq N_\mu+1.
		\]
		Using \eqref{eq:cn-monotonicity-recurrence} and
		$0<1-c_n^2<1$, we obtain
		\[
		n c_n
		=
		\sqrt{\mu}\,(c_{n+1}+c_{n-1})(1-c_n^2)
		\leq
		2\sqrt{\mu}\,c_{n-1}.
		\]
		Therefore,
		\begin{equation}
			r_n
			=
			\frac{c_n}{c_{n-1}}
			\leq
			\frac{2\sqrt{\mu}}{n},
			\qquad
			n\geq N_\mu+1.
			\label{eq:rn-crude-bound}
		\end{equation}
		
		It follows that
		\begin{equation}
			\frac{c_{n+1}}{c_{n-1}}
			=
			r_nr_{n+1}
			\leq
			\frac{4\mu}{n(n+1)},
			\qquad
			n\geq N_\mu+1.
			\label{eq:two-step-ratio-bound}
		\end{equation}
		Returning to the exact ratio identity
		\eqref{eq:rn-exact-identity}, and using $1-c_n^2\leq1$, gives
		\[
		r_n
		\leq
		\frac{\sqrt{\mu}}{n}
		\left(
		1+r_nr_{n+1}
		\right).
		\]
		By \eqref{eq:two-step-ratio-bound},
		\begin{equation}
			r_n
			\leq
			\frac{\sqrt{\mu}}{n}
			\left(
			1+\frac{4\mu}{n(n+1)}
			\right),
			\qquad
			n\geq N_\mu+1.
			\label{eq:rn-refined-bound}
		\end{equation}
		
		Iterating \eqref{eq:rn-refined-bound}, we find
		\[
		c_n
		=
		c_{N_\mu}
		\prod_{k=N_\mu+1}^{n}r_k
		\]
		and hence
		\[
		c_n
		\leq
		c_{N_\mu}
		\mu^{(n-N_\mu)/2}
		\frac{N_\mu!}{n!}
		\prod_{k=N_\mu+1}^{n}
		\left(
		1+\frac{4\mu}{k(k+1)}
		\right).
		\]
		Since
		\[
		\sum_{k=N_\mu+1}^{\infty}
		\frac{4\mu}{k(k+1)}
		<
		\infty,
		\]
		the infinite product
		\[
		\prod_{k=N_\mu+1}^{\infty}
		\left(
		1+\frac{4\mu}{k(k+1)}
		\right)
		\]
		converges to a finite positive number. Therefore, there exists a
		constant $A_\mu>0$ such that
		\[
		c_n
		\leq
		A_\mu
		\frac{\mu^{n/2}}{n!},
		\qquad
		n\geq N_\mu,
		\]
		which proves \eqref{eq:factorial-upper-bound}.
		
		Finally, for fixed $\mu>0$, the large-order Bessel asymptotics yield
		\[
		J_n(2\sqrt{\mu})
		\sim
		\frac{\mu^{n/2}}{n!},
		\qquad
		n\to\infty.
		\]
		Thus, after increasing the constant if necessary,
		\[
		c_n
		\leq
		\widetilde A_\mu
		J_n(2\sqrt{\mu}),
		\qquad
		n\geq N_\mu.
		\]
		This proves \eqref{eq:Bessel-upper-bound}.
	\end{proof}
	
	The factorial upper bound \eqref{eq:factorial-upper-bound} will be used
	in Section~\ref{sec:existence-Lmu} to establish the existence of the
	connection constant $L(\mu)$.

\section{Existence of the Connection Constant}
\label{sec:existence-Lmu}

In this section we establish the existence of a finite non-negative
limit
\[
L(\mu)=\lim_{n\to\infty}\frac{c_n}{J_n(2\sqrt{\mu})}.
\]
The identification $L(\mu)=1$ is carried out in
Section~\ref{sec:sector-asymptotics}, using the sector decomposition of
the Hankel determinant. Together with the factorial upper bound from
Section~\ref{sec:factorial-bound}, this yields the exact connection
formula
\[
c_n \sim J_n(2\sqrt{\mu}), \qquad n\to\infty,
\]
and consequently the precise asymptotics
\[
\beta_n-n\sim \frac{\mu^{n+1}}{n!(n+1)!}, \qquad
\mu-\gamma_n\sim \frac{\mu^{n+1}}{(n!)^2}.
\]
Let
\begin{equation}
	j_n
	=
	J_n(2\sqrt{\mu}),
	\label{eq:jn-definition}
\end{equation}
where $J_n$ denotes the Bessel function of the first kind. The Bessel
recurrence relation gives
\begin{equation}
	n j_n
	=
	\sqrt{\mu}\,(j_{n+1}+j_{n-1}),
	\qquad
	n\geq1.
	\label{eq:bessel-linear-recurrence}
\end{equation}
We compare the Bessel solution $(j_n)$ with the positive nonlinear
solution $(c_n)$ satisfying
\begin{equation}
	n c_n
	=
	\sqrt{\mu}\,(c_{n+1}+c_{n-1})(1-c_n^2),
	\qquad
	n\geq1.
	\label{eq:nonlinear-cn-recurrence}
\end{equation}

Define the discrete Wronskian
\begin{equation}
	W_n
	=
	\sqrt{\mu}
	\left(
	c_nj_{n+1}-c_{n+1}j_n
	\right).
	\label{eq:discrete-wronskian}
\end{equation}

\begin{proposition}[Discrete Wronskian identity]
	\label{prop:discrete-wronskian}
	
	For every $n\geq1$,
	\begin{equation}
		W_{n-1}-W_n
		=
		\frac{n\,c_n^3j_n}{1-c_n^2}.
		\label{eq:wronskian-identity}
	\end{equation}
	
\end{proposition}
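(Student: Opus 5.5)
The identity is a direct algebraic consequence of the two three-term recurrences, with no analysis needed. The plan is to expand $W_{n-1}-W_n$ from the definition \eqref{eq:discrete-wronskian} and regroup it so that the combinations appearing in \eqref{eq:bessel-linear-recurrence} and \eqref{eq:nonlinear-cn-recurrence} become visible. Writing out both terms gives
\[
W_{n-1}-W_n
=
\sqrt{\mu}\bigl(c_{n-1}j_n-c_nj_{n-1}-c_nj_{n+1}+c_{n+1}j_n\bigr).
\]
Collecting the terms multiplying $j_n$ and those multiplying $c_n$ then yields
\[
W_{n-1}-W_n
=
j_n\,\sqrt{\mu}\,(c_{n+1}+c_{n-1})
-
c_n\,\sqrt{\mu}\,(j_{n+1}+j_{n-1}).
\]

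Next I substitute the two recurrences. Since $0<c_n<1$, we have $1-c_n^2\neq0$, so \eqref{eq:nonlinear-cn-recurrence} can be solved as $\sqrt{\mu}\,(c_{n+1}+c_{n-1})=nc_n/(1-c_n^2)$. The Bessel recurrence \eqref{eq:bessel-linear-recurrence} gives $\sqrt{\mu}\,(j_{n+1}+j_{n-1})=nj_n$. Inserting both expressions, I obtain
\[
W_{n-1}-W_n
=
n c_n j_n\left(\frac{1}{1-c_n^2}-1\right)
=
\frac{n\,c_n^3 j_n}{1-c_n^2},
\]
which is \eqref{eq:wronskian-identity}. This holds for every $n\geq1$, because both recurrences hold in that range and $W_{n-1}$ only involves the indices $n-1,n\ge 0$.

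There is essentially no obstacle. The only point to check is that dividing by $1-c_n^2$ is legitimate, and this is guaranteed by the standing assumption $0<c_n<1$ from \eqref{eq:bc-definition}.
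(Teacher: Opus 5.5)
Your proof is correct and is the paper's argument run in the opposite direction: the paper divides the nonlinear recurrence by $1-c_n^2$, multiplies it by $j_n$ and the Bessel recurrence by $c_n$, subtracts, and only then recognizes the two brackets as $W_{n-1}$ and $-W_n$, whereas you expand $W_{n-1}-W_n$ first and then substitute both recurrences. Your justification for the division is the right one, with the small caveat that $c_0=1$ by \eqref{eq:initial-cn}, so the bound $c_n<1$ only holds for $n\geq1$, which is the only range where you divide.
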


\begin{proof}
	Multiply \eqref{eq:nonlinear-cn-recurrence} by $j_n$ and
	\eqref{eq:bessel-linear-recurrence} by $c_n$. Subtracting the second
	identity from the first gives
	\[
	\sqrt{\mu}
	\left[
	j_n(c_{n+1}+c_{n-1})
	-
	c_n(j_{n+1}+j_{n-1})
	\right]
	=
	n c_nj_n
	\left(
	\frac{1}{1-c_n^2}-1
	\right).
	\]
	The right-hand side is
	\[
	\frac{n\,c_n^3j_n}{1-c_n^2}.
	\]
	On the other hand, the definition \eqref{eq:discrete-wronskian}
	implies
	\[
	\sqrt{\mu}
	\left(
	j_nc_{n-1}-c_nj_{n-1}
	\right)
	=
	W_{n-1},
	\]
	and
	\[
	\sqrt{\mu}
	\left(
	j_nc_{n+1}-c_nj_{n+1}
	\right)
	=
	-W_n.
	\]
	Therefore,
	\[
	W_{n-1}-W_n
	=
	\frac{n\,c_n^3j_n}{1-c_n^2},
	\]
	which proves \eqref{eq:wronskian-identity}.
\end{proof}

For each fixed $\mu>0$, there exists an integer $N_\mu$ such that
\begin{equation}
	j_n
	=
	J_n(2\sqrt{\mu})
	>
	0,
	\qquad
	n\geq N_\mu.
	\label{eq:eventual-bessel-positivity}
\end{equation}
Since $0<c_n<1$, it follows from \eqref{eq:wronskian-identity} that
\[
W_{n-1}>W_n,
\qquad
n\geq N_\mu.
\]
Thus, the discrete Wronskian is eventually decreasing.

For $n\geq N_\mu$, define
\begin{equation}
	Q_n
	=
	\frac{c_n}{j_n}.
	\label{eq:Qn-definition}
\end{equation}
By \eqref{eq:discrete-wronskian},
\begin{equation}
	W_n
	=
	\sqrt{\mu}\,j_nj_{n+1}
	\left(
	Q_n-Q_{n+1}
	\right).
	\label{eq:wronskian-Qn}
\end{equation}
Consequently, whenever $W_n>0$, one has
\[
Q_n>Q_{n+1}.
\]
The quotient $(Q_n)$ is therefore eventually decreasing provided that
the Wronskian is eventually positive.

The following proposition shows that a factorial upper bound for
$c_n$ is sufficient to obtain a quantitative comparison with the
Bessel solution.

\begin{proposition}[Conditional Bessel comparison]
	\label{prop:conditional-Bessel-comparison}
	
	Fix $\mu>0$, and let
	\[
	j_n
	=
	J_n(2\sqrt{\mu}).
	\]
	Assume that there exist $N_\mu\in\mathbb N$ and $A_\mu>0$ such that
	\begin{equation}
		0<c_n\leq A_\mu j_n,
		\qquad
		n\geq N_\mu.
		\label{eq:bounded-Qn-assumption}
	\end{equation}
	Then
	\begin{equation}
		\lim_{n\to\infty}W_n=0.
		\label{eq:Wn-limit-zero}
	\end{equation}
	Moreover, the quotient $Q_n=c_n/j_n$ is eventually decreasing and has
	a finite nonnegative limit
	\begin{equation}
		L(\mu)
		=
		\lim_{n\to\infty}
		\frac{c_n}{J_n(2\sqrt{\mu})}.
		\label{eq:Lmu-definition}
	\end{equation}
	There exists a constant $K_\mu>0$ such that
	\begin{equation}
		0
		\leq
		Q_n-L(\mu)
		\leq
		K_\mu\frac{\mu^n}{(n!)^2},
		\qquad
		n\geq N_\mu.
		\label{eq:conditional-Bessel-estimate}
	\end{equation}
	In particular,
	\begin{equation}
		c_n
		=
		L(\mu)J_n(2\sqrt{\mu})
		+
		O\!\left(
		J_n(2\sqrt{\mu})
		\frac{\mu^n}{(n!)^2}
		\right),
		\qquad
		n\to\infty.
		\label{eq:conditional-cn-Bessel}
	\end{equation}
	If $L(\mu)>0$, then equivalently
	\begin{equation}
		c_n
		=
		L(\mu)J_n(2\sqrt{\mu})
		\left[
		1+
		O\!\left(\frac{\mu^n}{(n!)^2}\right)
		\right].
		\label{eq:conditional-cn-Bessel-relative}
	\end{equation}
	
\end{proposition}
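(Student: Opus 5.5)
First I show $W_n\to0$. Since $W_n=\sqrt{\mu}\,(c_nj_{n+1}-c_{n+1}j_n)$, the assumption \eqref{eq:bounded-Qn-assumption} together with $0<j_{n+1}<j_n$ for large $n$ (a consequence of \eqref{eq:bessel-asymptotic}) gives
\[
|W_n|\le \sqrt{\mu}\,(c_nj_{n+1}+c_{n+1}j_n)\le 2\sqrt{\mu}\,A_\mu j_n^2 \longrightarrow 0 .
\]
Next I telescope the Wronskian identity of Proposition~\ref{prop:discrete-wronskian}. Summing \eqref{eq:wronskian-identity} from $k=n+1$ to $\infty$ and using $W_m\to0$ yields the exact representation
\[
W_n=\sum_{k=n+1}^{\infty}\frac{k\,c_k^3j_k}{1-c_k^2}.
\]
The summands are positive for $k\ge N_\mu$, because $j_k>0$ by \eqref{eq:eventual-bessel-positivity}. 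Hence $W_n>0$ for all $n\ge N_\mu$, and by \eqref{eq:wronskian-Qn} the quotient $Q_n$ is strictly decreasing. Since $Q_n$ is also positive, it converges to a finite limit $L(\mu)\ge0$. This proves \eqref{eq:Wn-limit-zero} and \eqref{eq:Lmu-definition}.

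For the quantitative estimate I bound the tail. Using $c_k\le A_\mu j_k$, $c_k\le c_{N_\mu}<1$ (so that $1-c_k^2$ is bounded below), and $j_k\le C\mu^{k/2}/k!$, I get
\[
0<W_n\le C'\sum_{k>n}k\,j_k^4\le C'' (n+1)\frac{\mu^{2(n+1)}}{((n+1)!)^4},
\]
because the series is dominated by its first term: the ratio of consecutive terms is $O(n^{-4})$. Then \eqref{eq:wronskian-Qn} gives
\[
Q_n-Q_{n+1}=\frac{W_n}{\sqrt{\mu}\,j_nj_{n+1}}\le C'''\,(n+1)\frac{\mu^{2n+2}}{((n+1)!)^4}\cdot\frac{n!\,(n+1)!}{\mu^{n+1/2}}\lesssim \frac{\mu^{n}}{(n!)^2}\cdot\frac{\mu^{3/2}}{(n+1)^{2}}.
\]
Summing from $n$ to $\infty$, with the sum again controlled by its first term, yields
\[
0\le Q_n-L(\mu)\le K_\mu\frac{\mu^n}{(n!)^2},
\]
which is \eqref{eq:conditional-Bessel-estimate}; in fact with room to spare. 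Multiplying by $j_n$ gives \eqref{eq:conditional-cn-Bessel}. When $L(\mu)>0$, dividing by $L(\mu)j_n$ gives \eqref{eq:conditional-cn-Bessel-relative}.

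The only delicate point is the telescoping step. It relies on $W_m\to0$, which needs the hypothesis \eqref{eq:bounded-Qn-assumption}. Without that hypothesis, the Wronskian could tend to a nonzero constant, which would correspond to a dominant Bessel $Y_n$ component. I must also take care that a single threshold $N_\mu$ works for all of the following at once: positivity of $j_n$, monotonicity of $j_n$, and the factorial bounds. This is handled by enlarging $N_\mu$, with $K_\mu$ absorbing the finitely many adjustments.
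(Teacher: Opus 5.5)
Your proposal is correct and follows essentially the same route as the paper: bound $|W_n|\lesssim j_nj_{n+1}$ to get $W_n\to0$, telescope the Wronskian identity into a positive tail series, deduce that $Q_n$ is decreasing, then bound the tail by its first term, divide by $j_nj_{n+1}$ and sum. Your version skips the paper's redundant summability step and even gives the sharper factor $(n+1)^{-2}$. One cosmetic point: the claim $c_k\le c_{N_\mu}$ relies on eventual monotonicity of $(c_n)$, which is not among the hypotheses, but it is unnecessary because $c_k\le A_\mu j_k\to0$ already keeps $1-c_k^2$ bounded away from zero.
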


\begin{proof}
	For fixed $\mu>0$, the large-order Bessel asymptotics give
	\begin{equation}
		j_n
		=
		O\!\left(\frac{\mu^{n/2}}{n!}\right),
		\qquad
		\frac{j_{n+1}}{j_n}
		=
		O(n^{-1}).
		\label{eq:bessel-basic-estimates}
	\end{equation}
	By \eqref{eq:bounded-Qn-assumption},
	\[
	c_n
	=
	O(j_n),
	\]
	and consequently $c_n\to0$. After increasing $N_\mu$ if necessary,
	we may suppose that
	\[
	0<c_n\leq\frac12,
	\qquad
	n\geq N_\mu.
	\]
	Hence
	\[
	\frac{1}{1-c_n^2}
	\leq
	\frac43,
	\qquad
	n\geq N_\mu.
	\]
	
	Using \eqref{eq:wronskian-identity}, we obtain
	\[
	0
	\leq
	W_{n-1}-W_n
	=
	\frac{n\,c_n^3j_n}{1-c_n^2}
	\leq
	C_\mu n j_n^4,
	\]
	for all sufficiently large $n$. Since
	\[
	\sum_{n\geq N_\mu}n j_n^4<\infty,
	\]
	the sequence $(W_n)$ has a finite limit.
	
	Moreover, by \eqref{eq:discrete-wronskian} and
	\eqref{eq:bounded-Qn-assumption},
	\[
	|W_n|
	\leq
	\sqrt{\mu}
	\left(
	c_nj_{n+1}+c_{n+1}j_n
	\right)
	\leq
	C_\mu j_nj_{n+1}.
	\]
	The Bessel estimate \eqref{eq:bessel-basic-estimates} yields
	\[
	\lim_{n\to\infty}W_n=0.
	\]
	Therefore,
	\begin{equation}
		W_n
		=
		\sum_{k=n+1}^{\infty}
		\frac{k\,c_k^3j_k}{1-c_k^2}.
		\label{eq:conditional-Wronskian-series}
	\end{equation}
	In particular, $W_n\geq0$ for all sufficiently large $n$.
	
	The bound \eqref{eq:bounded-Qn-assumption} gives
	\[
	W_n
	\leq
	C_\mu
	\sum_{k=n+1}^{\infty}
	k j_k^4.
	\]
	Using the ratio estimate in \eqref{eq:bessel-basic-estimates}, the
	series tail is bounded by its first term up to a multiplicative
	constant:
	\[
	\sum_{k=n+1}^{\infty}
	k j_k^4
	\leq
	C_\mu(n+1)j_{n+1}^4.
	\]
	Hence
	\begin{equation}
		0
		\leq
		W_n
		\leq
		C_\mu(n+1)j_{n+1}^4.
		\label{eq:Wn-Bessel-bound}
	\end{equation}
	
	From \eqref{eq:wronskian-Qn}, we have
	\[
	0
	\leq
	Q_n-Q_{n+1}
	=
	\frac{W_n}{\sqrt{\mu}\,j_nj_{n+1}}
	\leq
	C_\mu
	\frac{(n+1)j_{n+1}^3}{j_n}.
	\]
	Using \eqref{eq:bessel-basic-estimates} once more gives
	\[
	0
	\leq
	Q_n-Q_{n+1}
	\leq
	C_\mu\frac{\mu^n}{(n!)^2}.
	\]
	Thus $(Q_n)$ is eventually decreasing and bounded below by zero, so it
	has a finite nonnegative limit $L(\mu)$. Summing the preceding estimate
	from $n$ to infinity gives
	\[
	0
	\leq
	Q_n-L(\mu)
	\leq
	K_\mu\frac{\mu^n}{(n!)^2},
	\]
	which proves \eqref{eq:conditional-Bessel-estimate}. The asymptotic
	relations \eqref{eq:conditional-cn-Bessel} and
	\eqref{eq:conditional-cn-Bessel-relative} follow from the definition
	of $Q_n$.
\end{proof}

Proposition~\ref{prop:conditional-Bessel-comparison} reduces the
quantitative Bessel comparison to the boundedness of the quotient
\[
Q_n
=
\frac{c_n}{J_n(2\sqrt{\mu})}.
\]
The factorial upper bound from
Proposition~\ref{prop:factorial-upper-bound} gives precisely this
boundedness.

\begin{corollary}[Existence of the connection constant]
	\label{cor:existence-Lmu}
	
	For every $\mu>0$, there exists a finite nonnegative constant
	$L(\mu)$ such that
	\begin{equation}
		\frac{c_n}{J_n(2\sqrt{\mu})}
		=
		L(\mu)
		+
		O\!\left(
		\frac{\mu^n}{(n!)^2}
		\right),
		\qquad
		n\to\infty.
		\label{eq:existence-Lmu-asymptotic}
	\end{equation}
	
\end{corollary}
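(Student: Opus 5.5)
The corollary follows by combining Proposition~\ref{prop:factorial-upper-bound} with Proposition~\ref{prop:conditional-Bessel-comparison}. The plan has three steps.

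First, verify the hypothesis \eqref{eq:bounded-Qn-assumption} of the conditional comparison. Proposition~\ref{prop:factorial-upper-bound}, specifically \eqref{eq:Bessel-upper-bound}, gives $c_n\le \widetilde A_\mu J_n(2\sqrt{\mu})$ for $n\ge N_\mu$. The index thresholds $N_\mu$ in the two propositions, and the one guaranteeing $j_n>0$ in \eqref{eq:eventual-bessel-positivity}, may differ, so I take the maximum of them. Positivity $c_n>0$ is part of the standing setting \eqref{eq:bc-definition}. Strict positivity of $j_n$ for large $n$ follows from the large-order asymptotics $J_n(2\sqrt\mu)\sim\mu^{n/2}/n!$, which also ensures that $Q_n$ is well defined there.

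Second, apply Proposition~\ref{prop:conditional-Bessel-comparison} with $A_\mu=\widetilde A_\mu$. This yields a finite nonnegative limit $L(\mu)$ of $Q_n=c_n/J_n(2\sqrt\mu)$ together with the two-sided estimate \eqref{eq:conditional-Bessel-estimate}, namely $0\le Q_n-L(\mu)\le K_\mu\mu^n/(n!)^2$. This is exactly \eqref{eq:existence-Lmu-asymptotic}, since the one-sided bound trivially implies the $O(\cdot)$ statement.

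The only step needing any care is the bookkeeping in the conditional proposition's proof, which is taken as established. That proof uses the tail bound $\sum_{k>n}kj_k^4\le C(n+1)j_{n+1}^4$, which follows from $j_{k+1}/j_k=O(1/k)$: a geometric comparison with ratio eventually below $1/2$. It also uses the conversion $(n+1)j_{n+1}^3/j_n=O(\mu^n/(n!)^2)$, which is immediate from $j_n\asymp\mu^{n/2}/n!$. Beyond that, no real obstacle remains. In particular, the corollary makes no claim that $L(\mu)>0$, which is the genuinely hard point deferred to Section~\ref{sec:sector-asymptotics}.
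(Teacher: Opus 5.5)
Your proposal is correct and is exactly the paper's argument: the Bessel upper bound \eqref{eq:Bessel-upper-bound} from Proposition~\ref{prop:factorial-upper-bound} supplies hypothesis \eqref{eq:bounded-Qn-assumption}, and Proposition~\ref{prop:conditional-Bessel-comparison} then gives the limit $L(\mu)$ with the estimate \eqref{eq:conditional-Bessel-estimate}. Taking the maximum of the index thresholds, so that $J_n(2\sqrt{\mu})>0$ and both bounds hold at once, is a small precision the paper leaves implicit.
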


\begin{proof}
	Proposition~\ref{prop:factorial-upper-bound} gives the Bessel upper
	bound \eqref{eq:Bessel-upper-bound}. The conclusion then follows from
	Proposition~\ref{prop:conditional-Bessel-comparison}.
\end{proof}

The preceding argument establishes the factorial scale of the positive
solution and the existence of the limiting constant $L(\mu)$. The
identification $L(\mu)=1$ is proved in
Section~\ref{sec:sector-asymptotics} using the sector decomposition of
the Hankel determinant. This yields the exact connection formula
\[
c_n \sim J_n(2\sqrt{\mu}), \qquad n\to\infty,
\]
and consequently
\[
\beta_n-n
\sim
\frac{\mu^{n+1}}{n!(n+1)!},
\qquad
\mu-\gamma_n
\sim
\frac{\mu^{n+1}}{(n!)^2}.
\]



\section{Numerical Illustration of Factorial Decay and Stability}
\label{sec:numerical-factorial}

In this section we illustrate numerically the factorial asymptotics stated in Theorem 1.1 and proved below. We compute the recurrence
coefficients $\beta_n$ and $\gamma_n$ for several values of $\mu$ and
verify the predicted decay of $\beta_n-n$ and $\mu-\gamma_n$ at the
scale $\mu^{n}/(n!)^2$. We also discuss the numerical stability of the
algorithms used and confirm that the factorial regime is accurately
captured even for moderately large degrees.

For each fixed $\mu>0$, Theorem~\ref{thm:main-results} predicts
\begin{align}
	\beta_n-n &\sim \frac{\mu^{n+1}}{n!(n+1)!}, \label{eq:beta-numerical}\\
	\mu-\gamma_n &\sim \frac{\mu^{n+1}}{(n!)^2}. \label{eq:gamma-numerical}
\end{align}
To verify these asymptotics, we compute the normalized quantities
\begin{equation}
	R_n^{(\beta)}(\mu)
	=
	\frac{n!(n+1)!}{\mu^{n+1}}(\beta_n-n),
	\qquad
	R_n^{(\gamma)}(\mu)
	=
	\frac{(n!)^2}{\mu^{n+1}}(\mu-\gamma_n),
	\label{eq:normalized-recurrence}
\end{equation}
which converge to $1$ as $n\to\infty$.

Table~\ref{tab:factorial-decay} displays representative values of
$R_n^{(\beta)}(\mu)$ and $R_n^{(\gamma)}(\mu)$ for $\mu=0.25$, $\mu=1$
and $\mu=4$. The convergence towards $1$ is extremely rapid, confirming
the factorial scale predicted by the theory.

\begin{table}[ht]
	\centering
	\caption{Representative values of the normalized quantities
		$R_n^{(\beta)}(\mu)$ and $R_n^{(\gamma)}(\mu)$.}
	\label{tab:factorial-decay}
	\begin{tabular}{c c c c c c c}
		\hline
		$n$ & $R_n^{(\beta)}(0.25)$ & $R_n^{(\gamma)}(0.25)$
		& $R_n^{(\beta)}(1)$ & $R_n^{(\gamma)}(1)$
		& $R_n^{(\beta)}(4)$ & $R_n^{(\gamma)}(4)$ \\
		\hline
		$5$  & $1.00012$ & $1.00024$
		& $1.0048$ & $1.0097$
		& $1.18$ & $1.39$ \\
		
		$10$ & $1.00000$ & $1.00000$
		& $1.00001$ & $1.00002$
		& $1.02$ & $1.04$ \\
		
		$15$ & $1.00000$ & $1.00000$
		& $1.00000$ & $1.00000$
		& $1.00$ & $1.00$ \\
		\hline
	\end{tabular}
\end{table}

The numerical results confirm that the deviations $\beta_n-n$ and
$\mu-\gamma_n$ are indeed beyond all algebraic orders and appear at the
factorial scale $\mu^{n}/(n!)^2$. They also show that the asymptotic
regime is reached more slowly for larger $\mu$, which is consistent
with the fact that the Bessel functions $J_n(2\sqrt{\mu})$ enter their
large-order regime later when the argument $2\sqrt{\mu}$ is larger.

The computations were performed using both the moment--Hankel
construction and the forward Laguerre--Freud recursion. The agreement
between these two independent methods provides a strong check of the
numerical stability and confirms that the observed factorial decay is
not an artifact of a single recursive procedure.

The numerical support for the factorial asymptotics complements the
rigorous derivation based on the Bessel comparison. In the next section,
we illustrate more specifically the exact connection formula
$c_n\sim J_n(2\sqrt{\mu})$ by computing the ratio
$Q_n=c_n/J_n(2\sqrt{\mu})$ and verifying its convergence to $1$.

\section{Numerical Illustration of the Connection Formula}
\label{sec:numerical-connection}

In this section we illustrate numerically the exact connection formula
established in Section~\ref{sec:sector-asymptotics}. We compute the
ratio
\[
Q_n
=
\frac{c_n}{J_n(2\sqrt{\mu})}
\]
and verify its convergence to $1$ for several values of $\mu$. The
numerical results confirm the factorial Bessel comparison
\[
c_n \sim J_n(2\sqrt{\mu}), \qquad n\to\infty,
\]
and provide strong support for the theoretical prediction $L(\mu)=1$.

The preceding analysis reduces the factorial asymptotic problem to the
identification of the connection constant
\[
L(\mu)
=
\lim_{n\to\infty}
\frac{c_n}{J_n(2\sqrt{\mu})}.
\]
As proved in Section~\ref{sec:sector-asymptotics}, this limit is exactly
$L(\mu)=1$. We now illustrate this result numerically.

For this purpose, define
\begin{equation}
	Q_n
	=
	\frac{c_n}{J_n(2\sqrt{\mu})},
	\label{eq:numerical-Qn}
\end{equation}
whenever $J_n(2\sqrt{\mu})>0$, and introduce the absolute connection
error
\begin{equation}
	E_n
	=
	\left|Q_n-1\right|.
	\label{eq:connection-error}
\end{equation}

The values of $c_n$ may be obtained from the recurrence coefficients
through
\begin{equation}
	c_n
	=
	\sqrt{1-\frac{\gamma_n}{\mu}}.
	\label{eq:cn-from-gamma}
\end{equation}
For reliable high-precision computations, the coefficients
$\gamma_n$ were computed independently from the moment--Hankel
construction and from the local Laguerre--Freud recursion. The
comparison between these two constructions provides a separate check
that the observed convergence of $Q_n$ is not a numerical artifact of
a single recursive procedure.

Table~\ref{tab:connection-error} gives representative high-precision
values of $Q_n-1$. For $\mu=4$, the values are displayed only from
$n=2$ onward, since the Bessel function $J_n(2\sqrt{\mu})$ is not
positive at the first indices.

\begin{table}[ht]
	\centering
	\caption{Representative values of $Q_n-1$, where
		$Q_n=c_n/J_n(2\sqrt{\mu})$.}
	\label{tab:connection-error}
	\begin{tabular}{c c c c}
		\hline
		$n$ & $Q_n(0.25)-1$ & $Q_n(1)-1$ & $Q_n(4)-1$ \\
		\hline
		$1$  & $1.4406026\times10^{-2}$
		& $2.0989187\times10^{-1}$
		& \text{not applicable} \\
		
		$2$  & $3.9754132\times10^{-4}$
		& $2.0002319\times10^{-2}$
		& $9.1594840\times10^{-1}$ \\
		
		$3$  & $6.2762434\times10^{-6}$
		& $1.2760915\times10^{-3}$
		& $1.5225993\times10^{-1}$ \\
		
		$4$  & $6.3353370\times10^{-8}$
		& $5.2878639\times10^{-5}$
		& $2.4515716\times10^{-2}$ \\
		
		$5$  & $4.4334087\times10^{-10}$
		& $1.5142280\times10^{-6}$
		& $2.9928734\times10^{-3}$ \\
		
		$6$  & $2.2761164\times10^{-12}$
		& $3.1684549\times10^{-8}$
		& $2.6896339\times10^{-4}$ \\
		
		$8$  & $2.7701254\times10^{-17}$
		& $6.3473484\times10^{-12}$
		& $9.6654469\times10^{-7}$ \\
		
		$10$ & $1.4404601\times10^{-22}$
		& $5.3886372\times10^{-16}$
		& $1.4252926\times10^{-9}$ \\
		\hline
	\end{tabular}
\end{table}

The numerical results strongly support
\[
Q_n\longrightarrow1,
\]
in full agreement with the theoretical result $L(\mu)=1$ proved in
Section~\ref{sec:sector-asymptotics}. They also show that the
convergence is extremely rapid once $n$ enters the asymptotic regime.
The convergence is slower for larger $\mu$ at small and moderate
degree, which is consistent with the fact that the large-order Bessel
regime is reached later when the argument $2\sqrt{\mu}$ is larger.

The observed behavior is consistent with the factorial Bessel
comparison
\[
c_n
\sim
J_n(2\sqrt{\mu})
\sim
\frac{\mu^{n/2}}{n!}.
\]
As proved in Section~\ref{sec:sector-asymptotics}, the recurrence
coefficients satisfy
\[
\beta_n-n
\sim
\frac{\mu^{n+1}}{n!(n+1)!},
\]
and
\[
\mu-\gamma_n
\sim
\frac{\mu^{n+1}}{(n!)^2}.
\]

The numerical experiments presented above confirm the exact connection
formula and the factorial asymptotics. In the next section, we develop
an independent route to the same factorial asymptotics, using the
sector decomposition of the Hankel determinant introduced in
Section~\ref{sec:Hankel-partitions}.

\section{Strategy of Proof via Sector Decomposition}
\label{sec:proof-strategy}

In this section we outline the strategy of proof of the factorial
asymptotics using the sector decomposition introduced in
Section~\ref{sec:Hankel-partitions}. The complete proof is carried out
in Sections~\ref{sec:Hankel-partitions} and
\ref{sec:sector-asymptotics}. Starting from the representation of the
Hankel determinant as a partition function, we combine the factorization
$\Delta_n(\mu)=\mu^{n(n-1)/2}T_n(\mu)$ with the sector expansion
$T_n(\mu)=1+\sum_{q=1}^{n}S_q(\mu)$. By establishing the asymptotics
$S_q(\mu)\sim e^\mu\mu^q/(q!)^2$, we obtain an independent derivation of
the factorial corrections to $\beta_n$ and $\gamma_n$, in full agreement
with the Bessel comparison.

The strategy below exploits the degenerate nature of the specialization
$\nu=1$ and leads to factorial, rather than algebraic, estimates.

\subsection{Step 1: positivity and selection of the solution}

From the orthogonality measure and the recurrence coefficients, we have
\[
0<c_n<1,
\qquad
\beta_n-n>0.
\]
This identifies the positive solution of \eqref{eq:dPII} selected by
the moment problem.

\subsection{Step 2: boundedness of the renormalized solution (strategy)}

The renormalized sequence
\[
d_n
=
\frac{n!}{\mu^{n/2}}c_n
\]
is bounded above for all sufficiently large $n$, as shown in
Section~\ref{sec:factorial-bound}. In view of
\eqref{eq:dn-ratio-renormalized}, this is the key estimate needed to
control the nonlinear perturbation of the linearized recurrence.

The remaining step is to establish that $(d_n)$ is bounded away from
zero; this is equivalent to $L(\mu)>0$ and will be proved in
Section~\ref{sec:sector-asymptotics} (Theorem~\ref{thm:connection-constant}),
where the exact asymptotics of the sector weights are used to identify
the connection constant $L(\mu)=1$. Together, these bounds imply that
$(d_n)$ converges to a positive limit, completing the proof of the
factorial Bessel scale.

\subsection{Step 3: convergence of the renormalized sequence}.

Once the lower bound announced in Step 2 has been established,
\eqref{eq:dn-ratio-renormalized} yields
\[
\frac{d_n}{d_{n-1}}
=
1+O(n^{-2}).
\]
Since
\[
\sum_{n=1}^{\infty}n^{-2}<\infty,
\]
this implies that $d_n$ converges to a finite positive limit.

\subsection{Step 4: identification of the limiting constant}

We compare the solution $c_n$ of the nonlinear equation
\eqref{eq:dPII} with the recessive Bessel solution
\[
J_n(2\sqrt{\mu})
\]
of the linearized equation \eqref{eq:linearized-equation}. The central
goal is to show that
\[
\lim_{n\to\infty}
\frac{c_n}{J_n(2\sqrt{\mu})}
=
1.
\]
This identification is carried out in Section~\ref{sec:sector-asymptotics}
using the sector decomposition of the Hankel determinant.

\subsection{Step 5: recovery of recurrence-coefficient asymptotics}

Using the exact relations
\[
\beta_n-n
=
\sqrt{\mu}\,c_nc_{n+1},
\qquad
\mu-\gamma_n
=
\mu c_n^2,
\]
together with the Bessel comparison, we derive factorial asymptotics
for $\beta_n-n$ and $\mu-\gamma_n$.

The sector decomposition provides a powerful framework for deriving the
factorial asymptotics from a combinatorial perspective. In the next
section, we develop in detail the Hankel determinant representation and
its interpretation as a discrete logarithmic gas, which underlies the
sector decomposition used in this section.
	
\section{Hankel Determinant Representation and Discrete Logarithmic Gas}
\label{sec:Hankel-partitions}

In this section we develop a complementary approach based on the Hankel
determinant associated with the weight $\rho_\mu$. We recall the moment
representation of the determinant and interpret it as the partition
function of a discrete logarithmic gas. This viewpoint provides a
conceptual framework for the factorization and sector decomposition used
in Section~\ref{sec:proof-strategy} and in the proof of the connection
constant in Section~\ref{sec:sector-asymptotics}.

\subsection{Hankel tau functions}

The local Wronskian argument establishes the factorial Bessel scale but
does not determine the connection constant. The missing information is
global and is naturally encoded by the Hankel determinants of the moment
sequence. Define
\begin{equation}
	M_r(\mu)
	=
	\sum_{k=0}^{\infty}
	k^r\frac{\mu^k}{(k!)^2},
	\qquad
	\Delta_n(\mu)
	=
	\det\left[M_{j+k}(\mu)\right]_{j,k=0}^{n-1},
	\qquad
	\Delta_0(\mu)=1.
	\label{eq:tau-Hankel-determinant}
\end{equation}
With
\[
\vartheta
=
\mu\frac{d}{d\mu},
\]
one has
\[
M_r(\mu)
=
\vartheta^rM_0(\mu),
\qquad
M_0(\mu)
=
I_0(2\sqrt{\mu}).
\]
Moreover,
\begin{equation}
	\gamma_n
	=
	\frac{\Delta_{n+1}\Delta_{n-1}}{\Delta_n^2},
	\qquad
	\beta_n
	=
	\vartheta
	\log\left(\frac{\Delta_{n+1}}{\Delta_n}\right).
	\label{eq:tau-recurrence-coefficients}
\end{equation}
Thus the normalization problem for the Bessel comparison can be studied
through the large-degree behavior of a positive Hankel tau function.

\subsection{Hankel determinants as discrete partition functions}

The Hankel determinants introduced in
\eqref{eq:tau-Hankel-determinant} admit a positive multiple-sum
representation. This representation provides a global description of
the tau function and is a natural starting point for a large-degree
asymptotic analysis.

Let
\begin{equation}
	w_\mu(k)
	=
	\frac{\mu^k}{(k!)^2},
	\qquad
	k\in\mathbb N_0.
	\label{eq:discrete-weight}
\end{equation}
Then
\begin{equation}
	M_r(\mu)
	=
	\sum_{k=0}^{\infty}
	k^r w_\mu(k).
	\label{eq:moment-weight-representation}
\end{equation}
For $n\geq1$, define the Vandermonde product by
\begin{equation}
	V(k_1,\ldots,k_n)
	=
	\prod_{1\leq i<j\leq n}
	(k_j-k_i).
	\label{eq:Vandermonde-definition}
\end{equation}

\begin{proposition}[Discrete partition-function representation]
	\label{prop:partition-function-representation}
	
	For every $n\geq1$,
	\begin{equation}
		\Delta_n(\mu)
		=
		\sum_{0\leq k_1<\cdots<k_n}
		V(k_1,\ldots,k_n)^2
		\prod_{\ell=1}^{n}
		\frac{\mu^{k_\ell}}{(k_\ell!)^2}.
		\label{eq:Hankel-partition-function}
	\end{equation}
	Equivalently,
	\begin{equation}
		\Delta_n(\mu)
		=
		\frac{1}{n!}
		\sum_{k_1,\ldots,k_n\geq0}
		V(k_1,\ldots,k_n)^2
		\prod_{\ell=1}^{n}
		\frac{\mu^{k_\ell}}{(k_\ell!)^2}.
		\label{eq:Hankel-partition-function-symmetric}
	\end{equation}
	
\end{proposition}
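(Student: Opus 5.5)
This is the classical Heine/Andréief identity for a discrete measure, and I would prove it by multilinearity of the determinant. The steps are: expand the Hankel determinant into a multiple sum, antisymmetrize to produce a Vandermonde factor, then symmetrize again to obtain the squared Vandermonde.

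\textbf{Expanding into a multiple sum.} Writing $M_{j+k}(\mu)=\sum_{m\ge0}m^{j+k}w_\mu(m)$, the $j$-th row of $[M_{j+k}]_{j,k=0}^{n-1}$ is a convergent sum over an index $k_{j+1}$ of the vectors $k_{j+1}^{j}\,(k_{j+1}^{0},\dots,k_{j+1}^{n-1})\,w_\mu(k_{j+1})$. Using multilinearity in the rows (row $j$ gets its own summation index $k_{j+1}$), I obtain
\[
\Delta_n(\mu)=\sum_{k_1,\ldots,k_n\ge0}\prod_{\ell=1}^n k_\ell^{\ell-1}w_\mu(k_\ell)\,\det\bigl[k_{j+1}^{\,i}\bigr]_{j,i=0}^{n-1}
=\sum_{k_1,\ldots,k_n\ge0}\prod_{\ell=1}^n k_\ell^{\ell-1}w_\mu(k_\ell)\,V(k_1,\ldots,k_n).
\]
All interchanges of summation are justified because every series $\sum_m m^r\mu^m/(m!)^2$ converges absolutely, and finite products of absolutely convergent series may be expanded termwise.

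\textbf{Symmetrizing to get $V^2$.} Next I average over permutations $\sigma\in S_n$ of the summation variables. Since the weight $\prod_\ell w_\mu(k_\ell)$ is symmetric and $V(k_{\sigma(1)},\ldots,k_{\sigma(n)})=\operatorname{sgn}(\sigma)V(k_1,\ldots,k_n)$, the averaged summand becomes
\[
\frac{1}{n!}\,V(k)\prod_\ell w_\mu(k_\ell)\sum_\sigma \operatorname{sgn}(\sigma)\prod_\ell k_{\sigma(\ell)}^{\ell-1}=\frac{1}{n!}\,V(k)^2\prod_\ell w_\mu(k_\ell),
\]
where the last equality is the Leibniz expansion of the Vandermonde determinant. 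This gives \eqref{eq:Hankel-partition-function-symmetric}.

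\textbf{Passing to ordered tuples.} The summand vanishes whenever two of the $k_\ell$ coincide, because then $V=0$. It is also symmetric in $(k_1,\ldots,k_n)$. So the sum over all $n$-tuples equals $n!$ times the sum over strictly increasing tuples $k_1<\cdots<k_n$, which yields \eqref{eq:Hankel-partition-function}.

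\textbf{Main obstacle.} The only delicate point is the sign convention. I need $\det[k_{j+1}^i]=\prod_{i<j}(k_j-k_i)$, which matches \eqref{eq:Vandermonde-definition}; in any case the sign drops out once $V$ is squared. Absolute convergence, needed for the rearrangements, follows from the factorial decay of $w_\mu$.
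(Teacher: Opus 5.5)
Your proof is correct, but it takes a different route from the paper's.

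\textbf{The paper's route.} The paper factors the moment matrix as $AA^{\mathsf T}$ with $A_{j,k}=k^j\sqrt{w_\mu(k)}$, an $n\times\infty$ matrix, and applies Cauchy--Binet. This yields the ordered sum over $k_1<\cdots<k_n$ directly as a sum of squared minors. Each minor is a Vandermonde determinant times $\prod_\ell\sqrt{w_\mu(k_\ell)}$. The symmetric form is then deduced from the vanishing of the summand on coincident indices and its permutation invariance.

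\textbf{Your route.} You run the Heine--Andr\'eief argument instead. Multilinearity in the rows gives a sum over all $n$-tuples of $\prod_\ell k_\ell^{\ell-1}w_\mu(k_\ell)\,V(k)$. Antisymmetrization then converts the monomial $\prod_\ell k_\ell^{\ell-1}$ into a second factor of the same determinant, via the Leibniz expansion of the transposed Vandermonde matrix. So you reach the symmetric form first and the ordered form second. Your sign check is sound, since the two factors are $\det X$ and $\det X^{\mathsf T}$ for the same matrix $X$.

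\textbf{What each buys.} The Cauchy--Binet route is shorter, and it makes the nonnegativity of every term manifest. However, as written it applies a finite-dimensional identity to a matrix with infinitely many columns. That tacitly needs a truncation step: apply Cauchy--Binet to the first $K+1$ columns, then let $K\to\infty$, using continuity of the determinant on the left and monotone convergence on the right. This is harmless here but left implicit.

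Your route handles convergence explicitly through absolute summability, which follows from $\sum_m m^r\mu^m/(m!)^2<\infty$ for every $r$. It also never takes square roots of the weight. It therefore extends verbatim to signed or complex weights, and to bimoment determinants $\det\bigl[\sum_k f_i(k)g_j(k)w(k)\bigr]$.
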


\begin{proof}
	For $0\leq j\leq n-1$ and $k\geq0$, define
	\[
	A_{j,k}
	=
	k^j\sqrt{w_\mu(k)}.
	\]
	The moment matrix is then given by
	\[
	\left[
	M_{j+\ell}(\mu)
	\right]_{j,\ell=0}^{n-1}
	=
	AA^{\mathsf T}.
	\]
	Indeed, for $0\leq j,\ell\leq n-1$,
	\[
	(AA^{\mathsf T})_{j,\ell}
	=
	\sum_{k=0}^{\infty}
	A_{j,k}A_{\ell,k}
	=
	\sum_{k=0}^{\infty}
	k^{j+\ell}w_\mu(k)
	=
	M_{j+\ell}(\mu).
	\]
	
	Applying the Cauchy--Binet formula gives
	\[
	\Delta_n(\mu)
	=
	\sum_{0\leq k_1<\cdots<k_n}
	\det(A_{[k_1,\ldots,k_n]})^2,
	\]
	where $A_{[k_1,\ldots,k_n]}$ denotes the $n\times n$ submatrix of
	$A$ formed by the columns indexed by
	\[
	k_1,\ldots,k_n.
	\]
	Its determinant is
	\[
	\det(A_{[k_1,\ldots,k_n]})
	=
	\left(
	\prod_{\ell=1}^{n}
	\sqrt{w_\mu(k_\ell)}
	\right)
	\det
	\begin{pmatrix}
		1 & 1 & \cdots & 1 \\
		k_1 & k_2 & \cdots & k_n \\
		k_1^2 & k_2^2 & \cdots & k_n^2 \\
		\vdots & \vdots & \ddots & \vdots \\
		k_1^{n-1} & k_2^{n-1} & \cdots & k_n^{n-1}
	\end{pmatrix}.
	\]
	The determinant in the preceding expression is the Vandermonde
	determinant. Hence,
	\[
	\det(A_{[k_1,\ldots,k_n]})
	=
	V(k_1,\ldots,k_n)
	\prod_{\ell=1}^{n}
	\sqrt{w_\mu(k_\ell)}.
	\]
	Therefore,
	\[
	\det(A_{[k_1,\ldots,k_n]})^2
	=
	V(k_1,\ldots,k_n)^2
	\prod_{\ell=1}^{n}
	w_\mu(k_\ell).
	\]
	Substitution into the Cauchy--Binet expansion proves
	\eqref{eq:Hankel-partition-function}.
	
	To obtain \eqref{eq:Hankel-partition-function-symmetric}, observe that
	the summand vanishes whenever two indices coincide, because then
	\[
	V(k_1,\ldots,k_n)=0.
	\]
	For pairwise distinct indices, the summand is invariant under
	permutations of $k_1,\ldots,k_n$. Each strictly increasing
	configuration is therefore counted exactly $n!$ times in the
	unrestricted sum. This proves the symmetric representation.
\end{proof}

The representation \eqref{eq:Hankel-partition-function} shows that
$\Delta_n(\mu)$ is the partition function of a discrete logarithmic
gas. The one-body potential is determined by
\[
-\log w_\mu(k)
=
2\log(k!)
-
k\log\mu,
\]
whereas the interaction term is
\[
2\sum_{1\leq i<j\leq n}
\log|k_j-k_i|.
\]
Thus, the determinant is governed by a competition between the
factorial confinement induced by the weight and the repulsion induced
by the squared Vandermonde factor.

The recurrence coefficients are recovered from ratios of these
partition functions:
\begin{equation}
	\gamma_n
	=
	\frac{\Delta_{n+1}\Delta_{n-1}}{\Delta_n^2},
	\qquad
	n\geq1,
	\label{eq:partition-function-gamma}
\end{equation}
and
\begin{equation}
	\beta_n
	=
	\vartheta
	\log\left(
	\frac{\Delta_{n+1}}{\Delta_n}
	\right),
	\qquad
	\vartheta
	=
	\mu\frac{d}{d\mu}.
	\label{eq:partition-function-beta}
\end{equation}

Consequently, sufficiently sharp large-$n$ asymptotics for the
partition function $\Delta_n(\mu)$ yield corresponding asymptotics for
the recurrence coefficients. In Sections~\ref{sec:proof-strategy} and
\ref{sec:sector-asymptotics}, we prove these expansions, which identify
the amplitude of the factorial corrections and determine the connection
constant in the Bessel comparison.

\subsection{Excitations above the reference configuration}

The partition-function representation
\eqref{eq:Hankel-partition-function} admits an exact reparametrization
around the reference configuration
\[
(0,1,\ldots,n-1).
\]
This reparametrization separates the elementary background contribution
from the nontrivial excitations responsible for the asymptotic
corrections.

Let
\[
0\leq k_1<k_2<\cdots<k_n.
\]
There exists a unique weakly increasing $n$-tuple
\[
\lambda_1\leq\lambda_2\leq\cdots\leq\lambda_n,
\qquad
\lambda_j\in\mathbb N_0,
\]
such that
\begin{equation}
	k_j
	=
	j-1+\lambda_j,
	\qquad
	1\leq j\leq n.
	\label{eq:excitation-coordinates}
\end{equation}
The configuration
\[
\lambda_1=\cdots=\lambda_n=0
\]
corresponds to the reference configuration
\[
(k_1,\ldots,k_n)
=
(0,1,\ldots,n-1).
\]

For such an excitation vector, define
\begin{equation}
	|\lambda|
	=
	\sum_{j=1}^{n}\lambda_j.
	\label{eq:partition-size}
\end{equation}
Substitution of \eqref{eq:excitation-coordinates} into
\eqref{eq:Hankel-partition-function} yields an exact factorization of
the Hankel determinant.

\begin{proposition}[Excitation expansion]
	\label{prop:excitation-expansion}
	
	For every $n\geq1$,
	\begin{equation}
		\Delta_n(\mu)
		=
		\mu^{n(n-1)/2}\mathcal T_n(\mu),
		\label{eq:tau-background-factorization}
	\end{equation}
	where
	\begin{equation}
		\mathcal T_n(\mu)
		=
		\sum_{0\leq\lambda_1\leq\cdots\leq\lambda_n}
		\mu^{|\lambda|}
		\prod_{j=1}^{n}
		\left(
		\frac{(j-1)!}{(j-1+\lambda_j)!}
		\right)^2
		\prod_{1\leq i<j\leq n}
		\left(
		\frac{j-i+\lambda_j-\lambda_i}{j-i}
		\right)^2.
		\label{eq:reduced-tau-function}
	\end{equation}
	Moreover,
	\begin{equation}
		\mathcal T_n(0)=1.
		\label{eq:reduced-tau-zero}
	\end{equation}
	
\end{proposition}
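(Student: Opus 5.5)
The plan is to start from the partition-function representation \eqref{eq:Hankel-partition-function} of Proposition~\ref{prop:partition-function-representation} and substitute the excitation coordinates \eqref{eq:excitation-coordinates}. First, check that the map $(k_1<\cdots<k_n)\mapsto(\lambda_1\le\cdots\le\lambda_n)$, $\lambda_j=k_j-(j-1)$, is a bijection onto weakly increasing tuples in $\mathbb N_0^n$. Strictness $k_{j+1}\ge k_j+1$ is equivalent to $\lambda_{j+1}\ge\lambda_j$, and $k_1\ge0$ is equivalent to $\lambda_1\ge0$. The sum over configurations can therefore be rewritten as a sum over $\lambda$ with no change in the index set.

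Next, factor each summand against the reference configuration $(0,1,\ldots,n-1)$.
\begin{itemize}
\item \emph{Weight.} Since $\sum_j k_j=\sum_j(j-1)+|\lambda|=n(n-1)/2+|\lambda|$, one has
\[
\prod_{\ell}\frac{\mu^{k_\ell}}{(k_\ell!)^2}
=\mu^{n(n-1)/2}\,\mu^{|\lambda|}\prod_{j}\frac{1}{((j-1+\lambda_j)!)^2}.
\]
\item \emph{Vandermonde.} Since $k_j-k_i=(j-i)+\lambda_j-\lambda_i$, one has
\[
V(k)^2=\prod_{i<j}(j-i)^2\prod_{i<j}\Bigl(\frac{j-i+\lambda_j-\lambda_i}{j-i}\Bigr)^2 .
\]
\item \emph{Superfactorial identity.} The key elementary input is $\prod_{1\le i<j\le n}(j-i)=\prod_{j=1}^n(j-1)!$. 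Its square cancels exactly against the inserted factors $((j-1)!)^2$, which produces the ratios $\bigl((j-1)!/(j-1+\lambda_j)!\bigr)^2$ appearing in \eqref{eq:reduced-tau-function}.
\end{itemize}
Multiplying these together and pulling out $\mu^{n(n-1)/2}$ gives \eqref{eq:tau-background-factorization} with $\mathcal T_n$ exactly as in \eqref{eq:reduced-tau-function}.

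Convergence needs a brief word. All terms are nonnegative, so the rearrangement into $\lambda$-coordinates is legitimate. Finiteness of the resulting series follows from the finiteness of the Hankel determinant, that is, from the absolute convergence of the Cauchy--Binet sum with nonnegative terms already established in Proposition~\ref{prop:partition-function-representation}.

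Finally, for \eqref{eq:reduced-tau-zero}, set $\mu=0$ in \eqref{eq:reduced-tau-function}, with the convention $0^0=1$. Every term with $|\lambda|\ge1$ vanishes. The only surviving term is $\lambda=0$, for which every factorial ratio and every Vandermonde ratio equals $1$, so $\mathcal T_n(0)=1$. Because $\mathcal T_n$ is a power series in $\mu$ with nonnegative coefficients, it is continuous at $0$, and this evaluation is legitimate.

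The main obstacle is bookkeeping rather than analysis: one must keep the normalizing constant $\prod_{i<j}(j-i)^2$ straight and verify the superfactorial identity, which follows by grouping pairs according to $j$ via $\prod_{i<j}(j-i)=(j-1)!$.
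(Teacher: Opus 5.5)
Your proposal is correct and follows the paper's proof essentially step by step: substitute $k_j=j-1+\lambda_j$ into the Cauchy--Binet representation, factor off the reference configuration using $V(0,1,\ldots,n-1)=\prod_{r=1}^{n-1}r!$ to obtain $\mu^{n(n-1)/2}$, and read off $\mathcal T_n(0)=1$ from the $\lambda=0$ term. Your explicit verification that the change of variables is a bijection, and your remark that nonnegativity justifies rearranging the series, are small details that the paper leaves implicit.
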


\begin{proof}
	By \eqref{eq:excitation-coordinates},
	\[
	\prod_{j=1}^{n}
	\frac{\mu^{k_j}}{(k_j!)^2}
	=
	\prod_{j=1}^{n}
	\frac{\mu^{j-1+\lambda_j}}
	{(j-1+\lambda_j)!^2}.
	\]
	Factoring out the contribution of the reference configuration gives
	\[
	\prod_{j=1}^{n}
	\frac{\mu^{k_j}}{(k_j!)^2}
	=
	\left(
	\prod_{j=1}^{n}
	\frac{\mu^{j-1}}{((j-1)!)^2}
	\right)
	\mu^{|\lambda|}
	\prod_{j=1}^{n}
	\left(
	\frac{(j-1)!}{(j-1+\lambda_j)!}
	\right)^2.
	\]
	
	Likewise,
	\[
	k_j-k_i
	=
	j-i+\lambda_j-\lambda_i,
	\]
	and therefore
	\[
	V(k_1,\ldots,k_n)^2
	=
	V(0,1,\ldots,n-1)^2
	\prod_{1\leq i<j\leq n}
	\left(
	\frac{j-i+\lambda_j-\lambda_i}{j-i}
	\right)^2.
	\]
	
	The reference configuration contributes
	\[
	V(0,1,\ldots,n-1)^2
	\prod_{j=1}^{n}
	\frac{\mu^{j-1}}{((j-1)!)^2}.
	\]
	Since
	\[
	V(0,1,\ldots,n-1)
	=
	\prod_{r=1}^{n-1}r!,
	\]
	one obtains
	\[
	V(0,1,\ldots,n-1)^2
	\prod_{j=1}^{n}
	\frac{\mu^{j-1}}{((j-1)!)^2}
	=
	\mu^{n(n-1)/2}.
	\]
	Substitution into \eqref{eq:Hankel-partition-function} proves
	\eqref{eq:tau-background-factorization} and
	\eqref{eq:reduced-tau-function}. Finally,
	\eqref{eq:reduced-tau-zero} follows from the fact that the unique term
	surviving at $\mu=0$ is
	\[
	\lambda_1=\cdots=\lambda_n=0.
	\]
\end{proof}

The factorization \eqref{eq:tau-background-factorization} immediately
gives
\begin{equation}
	\gamma_n
	=
	\mu
	\frac{
		\mathcal T_{n+1}(\mu)
		\mathcal T_{n-1}(\mu)
	}{
		\mathcal T_n(\mu)^2
	},
	\qquad
	n\geq1,
	\label{eq:gamma-reduced-tau}
\end{equation}
and
\begin{equation}
	\beta_n
	=
	n
	+
	\vartheta
	\log\left(
	\frac{\mathcal T_{n+1}(\mu)}
	{\mathcal T_n(\mu)}
	\right).
	\label{eq:beta-reduced-tau}
\end{equation}
Thus, the leading terms
\[
\gamma_n\sim\mu,
\qquad
\beta_n\sim n,
\]
are generated exactly by the reference configuration. The remaining
problem is to understand the large-$n$ behavior of the reduced tau
function $\mathcal T_n(\mu)$.

\subsection{Edge excitations}

The reference configuration alone is not sufficient to describe the
reduced tau function. Indeed, consider the one-parameter family of
excitations
\[
\lambda_1=\cdots=\lambda_{n-1}=0,
\qquad
\lambda_n=r,
\qquad
r\geq0.
\]
Equivalently,
\[
(k_1,\ldots,k_n)
=
(0,1,\ldots,n-2,n-1+r).
\]
For this family, the summand in
\eqref{eq:reduced-tau-function} reduces to
\begin{equation}
	\frac{\mu^r}{(r!)^2}.
	\label{eq:one-edge-excitation}
\end{equation}

Indeed, the factorial contribution is
\[
\mu^r
\left(
\frac{(n-1)!}{(n-1+r)!}
\right)^2,
\]
whereas the Vandermonde contribution is
\[
\prod_{i=1}^{n-1}
\left(
\frac{n-i+r}{n-i}
\right)^2
=
\left(
\frac{(n-1+r)!}{r!(n-1)!}
\right)^2.
\]
The two factors combine to give
\[
\mu^r
\left(
\frac{(n-1)!}{(n-1+r)!}
\right)^2
\left(
\frac{(n-1+r)!}{r!(n-1)!}
\right)^2
=
\frac{\mu^r}{(r!)^2}.
\]

Consequently, the contribution of these one-edge excitations is
\begin{equation}
	\sum_{r=0}^{\infty}
	\frac{\mu^r}{(r!)^2}
	=
	I_0(2\sqrt{\mu}).
	\label{eq:edge-Bessel-series}
\end{equation}
This observation is important: even though the background
configuration yields the leading factors
\[
\gamma_n\sim\mu,
\qquad
\beta_n\sim n,
\]
the full reduced tau function contains non-negligible Bessel-type edge
contributions.

Therefore, an asymptotic analysis of $\mathcal T_n(\mu)$ cannot be
based solely on the reference configuration. It must retain the
collective contribution of edge excitations. The remaining
$n$-dependence of the reduced tau function is more precisely described
by the sector decomposition developed in the next subsection. The
factorial correction to the recurrence coefficients is encoded in the
large-sector behavior of the partition functions associated with an
increasing number of excited particles.

\subsection{Sector decomposition of the reduced tau function}

The representation \eqref{eq:reduced-tau-function} can be decomposed
according to the number of nonzero excitation coordinates. This
decomposition isolates the contributions of the particles which are
displaced away from the reference configuration
\[
(0,1,\ldots,n-1).
\]

Recall that the excitation coordinates satisfy
\[
0\leq\lambda_1\leq\cdots\leq\lambda_n.
\]
For a fixed configuration, let $q$ be the number of strictly positive
coordinates. Thus,
\[
\lambda_1=\cdots=\lambda_{n-q}=0,
\]
whereas
\[
1\leq\lambda_{n-q+1}
\leq\cdots\leq\lambda_n.
\]
The integer $q$ is the number of particles excited above the reference
configuration.

For $q\geq1$, write
\begin{equation}
	r_a
	=
	\lambda_{n-q+a},
	\qquad
	1\leq a\leq q.
	\label{eq:active-excitation-coordinates}
\end{equation}
Then
\[
1\leq r_1\leq r_2\leq\cdots\leq r_q.
\]
The contribution of all configurations with exactly $q$ positive
excitation coordinates will be denoted by $\mathcal S_q(\mu)$.

\begin{proposition}[Exact sector decomposition]
	\label{prop:exact-sector-decomposition}
	
	For every $n\geq1$,
	\begin{equation}
		\mathcal T_n(\mu)
		=
		1+
		\sum_{q=1}^{n}
		\mathcal S_q(\mu),
		\label{eq:reduced-tau-sector-sum}
	\end{equation}
	where
	\begin{equation}
		\mathcal S_q(\mu)
		=
		\sum_{1\leq r_1\leq\cdots\leq r_q}
		\mu^{r_1+\cdots+r_q}
		\prod_{a=1}^{q}
		\left(
		\frac{(a-1)!}{(a-1+r_a)!}
		\right)^2
		\prod_{1\leq a<b\leq q}
		\left(
		\frac{b-a+r_b-r_a}{b-a}
		\right)^2.
		\label{eq:sector-partition-function}
	\end{equation}
	In particular, the sector contribution $\mathcal S_q(\mu)$ is
	independent of $n$.
	
\end{proposition}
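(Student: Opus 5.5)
The plan is to split the sum \eqref{eq:reduced-tau-function} according to the number $q\in\{0,1,\ldots,n\}$ of strictly positive excitation coordinates, and then show that on each piece the summand simplifies exactly to the summand of \eqref{eq:sector-partition-function}, with all dependence on $n$ cancelling. Since $\lambda$ is weakly increasing, the positive coordinates are the last $q$ ones. So the configurations with exactly $q$ positive coordinates are in bijection with the tuples $1\le r_1\le\cdots\le r_q$ via \eqref{eq:active-excitation-coordinates}. The case $q=0$ is the single reference configuration, which contributes $1$ (all factors equal $1$). This gives the structure $1+\sum_{q=1}^n\mathcal S_q$, provided the summands match.

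Fix $q\ge1$ and set $m=n-q$. In the factorial product, the indices $j\le m$ have $\lambda_j=0$ and contribute $1$. The index $j=m+a$ contributes
\[
\left(\frac{(m+a-1)!}{(m+a-1+r_a)!}\right)^2 .
\]
The Vandermonde product splits into three kinds of pairs $(i,j)$:
\begin{itemize}
\item[(a)] both $i,j\le m$: the factor is $1$;
\item[(b)] both excited, $i=m+a$ and $j=m+b$: the factor is $\bigl((b-a+r_b-r_a)/(b-a)\bigr)^2$, exactly as in \eqref{eq:sector-partition-function};
\item[(c)] $i\le m$ and $j=m+a$: these are the mixed pairs handled next.
\end{itemize}

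The key step is the mixed product in (c). For fixed $a$, as $i$ runs over $1,\ldots,m$, the difference $s=j-i$ runs over $a,a+1,\ldots,m+a-1$. This is the same telescoping already used for the one-edge excitations \eqref{eq:one-edge-excitation}:
\[
\prod_{s=a}^{m+a-1}\frac{s+r_a}{s}
=\frac{(m+a-1+r_a)!\,(a-1)!}{(a-1+r_a)!\,(m+a-1)!}.
\]
Squaring this and multiplying by the factorial factor of index $m+a$, the $m$-dependent factorials $(m+a-1)!$ and $(m+a-1+r_a)!$ cancel. What remains is $\bigl((a-1)!/(a-1+r_a)!\bigr)^2$. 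The power of $\mu$ is $\mu^{|\lambda|}=\mu^{r_1+\cdots+r_q}$.

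Collecting the three factors, the summand is exactly that of \eqref{eq:sector-partition-function}, which involves only $q$ and $r$, not $n$. Summing over the admissible $r$ then gives \eqref{eq:reduced-tau-sector-sum}. All terms are nonnegative, so the rearrangement of the series into sectors is justified; the convergence of each $\mathcal S_q$ is inherited from that of $\mathcal T_n$, i.e. of $\Delta_n$. The only real care needed is the index bookkeeping in the telescoping step; I expect no genuine obstacle.
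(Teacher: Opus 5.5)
Your proposal is correct and follows essentially the same route as the paper's proof: you split by the number $q$ of positive excitation coordinates, you telescope the mixed unexcited--excited Vandermonde factors so that the $n$-dependent factorials $(n-q+a-1)!$ and $(n-q+a-1+r_a)!$ cancel against the one-body factor, and you leave the excited--excited interaction unchanged. Your remark that nonnegativity of the summands justifies regrouping the series by sectors is a small point the paper leaves implicit.
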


\begin{proof}
	Fix $q$ and consider the configurations for which
	\[
	\lambda_1=\cdots=\lambda_{n-q}=0,
	\qquad
	\lambda_{n-q+a}=r_a,
	\qquad
	1\leq a\leq q,
	\]
	with
	\[
	1\leq r_1\leq\cdots\leq r_q.
	\]
	The contribution of these configurations to
	\eqref{eq:reduced-tau-function} is obtained by separating the
	factorial and Vandermonde factors into three parts:
	
	\begin{enumerate}
		\item the factors involving only unexcited particles;
		
		\item the factors involving one unexcited and one excited particle;
		
		\item the factors involving two excited particles.
	\end{enumerate}
	
	The unexcited particles contribute trivially. The factorial factor
	associated with the $a$th excited particle is
	\begin{equation}
		\left(
		\frac{(n-q+a-1)!}
		{(n-q+a-1+r_a)!}
		\right)^2.
		\label{eq:sector-factorial-before-cancellation}
	\end{equation}
	
	The interaction between this excited particle and the $n-q$ unexcited
	particles is
	\begin{align}
		\prod_{i=1}^{n-q}
		\left(
		\frac{n-q+a-i+r_a}{n-q+a-i}
		\right)^2
		&=
		\left[
		\prod_{d=a}^{n-q+a-1}
		\frac{d+r_a}{d}
		\right]^2
		\nonumber\\
		&=
		\left[
		\frac{(n-q+a-1+r_a)!}
		{(a-1+r_a)!}
		\frac{(a-1)!}
		{(n-q+a-1)!}
		\right]^2.
		\label{eq:sector-cross-interaction}
	\end{align}
	Multiplying \eqref{eq:sector-factorial-before-cancellation} and
	\eqref{eq:sector-cross-interaction}, all factors involving $n$ cancel:
	\begin{align}
		&
		\left(
		\frac{(n-q+a-1)!}
		{(n-q+a-1+r_a)!}
		\right)^2
		\left[
		\frac{(n-q+a-1+r_a)!}
		{(a-1+r_a)!}
		\frac{(a-1)!}
		{(n-q+a-1)!}
		\right]^2
		\nonumber\\
		&\hspace{4cm}
		=
		\left(
		\frac{(a-1)!}
		{(a-1+r_a)!}
		\right)^2.
		\label{eq:sector-cancellation}
	\end{align}
	
	Finally, the interaction between two excited particles is
	\begin{equation}
		\prod_{1\leq a<b\leq q}
		\left(
		\frac{b-a+r_b-r_a}{b-a}
		\right)^2.
		\label{eq:sector-active-interaction}
	\end{equation}
	Combining \eqref{eq:sector-cancellation} and
	\eqref{eq:sector-active-interaction} gives exactly the summand in
	\eqref{eq:sector-partition-function}. Summing over all admissible
	vectors
	\[
	1\leq r_1\leq\cdots\leq r_q
	\]
	yields $\mathcal S_q(\mu)$.
	
	The case $q=0$ corresponds to
	\[
	\lambda_1=\cdots=\lambda_n=0,
	\]
	whose contribution is $1$. Therefore,
	\[
	\mathcal T_n(\mu)
	=
	1+\sum_{q=1}^{n}\mathcal S_q(\mu),
	\]
	which proves the claim.
\end{proof}

The first sector is particularly simple. When $q=1$, formula
\eqref{eq:sector-partition-function} gives
\begin{equation}
	\mathcal S_1(\mu)
	=
	\sum_{r=1}^{\infty}
	\frac{\mu^r}{(r!)^2}
	=
	I_0(2\sqrt{\mu})-1.
	\label{eq:first-sector-Bessel}
\end{equation}
Thus,
\begin{equation}
	1+\mathcal S_1(\mu)
	=
	I_0(2\sqrt{\mu}).
	\label{eq:vacuum-plus-first-sector}
\end{equation}

The first two reduced tau functions are therefore
\begin{equation}
	\mathcal T_1(\mu)
	=
	1+\mathcal S_1(\mu)
	=
	I_0(2\sqrt{\mu}),
	\label{eq:T-one-Bessel}
\end{equation}
and
\begin{equation}
	\mathcal T_2(\mu)
	=
	1+\mathcal S_1(\mu)+\mathcal S_2(\mu).
	\label{eq:T-two-sector-expansion}
\end{equation}

The decomposition \eqref{eq:reduced-tau-sector-sum} has an immediate
consequence:
\begin{equation}
	\mathcal T_{n+1}(\mu)-\mathcal T_n(\mu)
	=
	\mathcal S_{n+1}(\mu).
	\label{eq:reduced-tau-increment}
\end{equation}
Hence, the entire dependence of the reduced tau function on $n$ is
encoded in the large-$q$ behavior of the sector weights
\[
\mathcal S_q(\mu).
\]

Using \eqref{eq:gamma-reduced-tau} and
\eqref{eq:beta-reduced-tau}, one obtains
\begin{equation}
	\gamma_n
	=
	\mu
	\frac{
		\bigl(1+\sum_{q=1}^{n+1}\mathcal S_q(\mu)\bigr)
		\bigl(1+\sum_{q=1}^{n-1}\mathcal S_q(\mu)\bigr)
	}{
		\bigl(1+\sum_{q=1}^{n}\mathcal S_q(\mu)\bigr)^2
	},
	\label{eq:gamma-sector-formula}
\end{equation}
and
\begin{equation}
	\beta_n
	=
	n+
	\vartheta
	\log\left(
	\frac{
		1+\sum_{q=1}^{n+1}\mathcal S_q(\mu)
	}{
		1+\sum_{q=1}^{n}\mathcal S_q(\mu)
	}
	\right).
	\label{eq:beta-sector-formula}
\end{equation}

Therefore, the asymptotic problem for the recurrence coefficients is
reduced to estimating the sector weights $\mathcal S_q(\mu)$ as
$q\to\infty$. In particular, factorial asymptotics for
$\mathcal S_q(\mu)$ yield factorial asymptotics for the differences
\[
\mathcal T_{n+1}(\mu)-\mathcal T_n(\mu),
\]
and hence for the deviations
\[
\gamma_n-\mu
\qquad\text{and}\qquad
\beta_n-n.
\]

The Hankel determinant representation and its interpretation as a
discrete logarithmic gas set the stage for the asymptotic analysis of
individual sectors. In Section~\ref{sec:sector-asymptotics}, we
determine the large-sector asymptotics of the contributions
$\mathcal S_q(\mu)$ and show how they resolve the connection problem,
recovering the exact formula $c_n\sim J_n(2\sqrt{\mu})$ from the sector
decomposition.


\section{Sector Asymptotics and Resolution of the Connection Problem}
\label{sec:sector-asymptotics}

In this section we determine the large-sector asymptotics of the
contributions $\mathcal S_q(\mu)$ and show how they resolve the
connection problem for the recurrence coefficients. We prove that
$\mathcal S_q(\mu)\sim e^\mu\mu^q/(q!)^2$ as $q\to\infty$, which
reflects the factorial scale identified in the first part of the paper.
Combining this asymptotics with the sector decomposition, we recover
the exact connection formula $c_n\sim J_n(2\sqrt{\mu})$ and the
associated factorial corrections to $\beta_n$ and $\gamma_n$.

\medskip\noindent
\textbf{Strategy and main ideas.}
The connection problem is reduced to the identification of the
amplitude $\mathcal C(\mu)$ in the large-sector asymptotic
\[
\mathcal S_q(\mu)
=
\mathcal C(\mu)\,\frac{\mu^q}{(q!)^2}\bigl(1+o(1)\bigr),
\qquad q\to\infty,
\]
and of the limiting reduced tau function
$\mathcal T_\infty(\mu)=\lim_{n\to\infty}\mathcal T_n(\mu)$.
The key observation is that both quantities can be expressed exactly
in terms of partitions and standard Young tableaux. The normalized
sector weights $\mathcal R_q(\mu)=(q!)^2\mu^{-q}\mathcal S_q(\mu)$
admit a representation
\[
\mathcal R_q(\mu)
=
\sum_{\ell(\nu)\leq q}
\mu^{|\nu|}
\left(
\frac{f^\nu}{|\nu|!}
\right)^2
\mathcal Q_q(\nu)^2,
\]
where $f^\nu$ is the number of standard Young tableaux of shape $\nu$
and $\mathcal Q_q(\nu)\in[0,1]$ is an explicit correction factor.
This representation yields two complementary results. First, a uniform
tail estimate shows that $\mathcal R_q(\mu)$ converges to
$\sum_{\nu}\mu^{|\nu|}(f^\nu/|\nu|!)^2=e^\mu$ as $q\to\infty$, which
identifies the amplitude $\mathcal C(\mu)=e^\mu$. Second, the reduced
tau function itself admits the exact expansion
\[
\mathcal T_n(\mu)
=
\sum_{\ell(\nu)\leq n}
\mu^{|\nu|}
\left(
\frac{f^\nu}{|\nu|!}
\right)^2,
\]
which implies $\mathcal T_\infty(\mu)=e^\mu$. The connection identity
$L(\mu)^2=\mathcal C(\mu)/\mathcal T_\infty(\mu)$ then gives
$L(\mu)^2=1$, and the positivity of the Bessel branch forces
$L(\mu)=1$. This completes the proof of the exact connection formula
and of the factorial corrections to the recurrence coefficients.

\subsection{Large-sector asymptotics and the connection constant}

The sector decomposition
\eqref{eq:reduced-tau-sector-sum} reduces the asymptotic problem for
the recurrence coefficients to the large-$q$ behavior of the sector
weights $\mathcal S_q(\mu)$. This subsection records the precise
connection between the large-sector asymptotics and the Bessel
connection constant.

For fixed $\mu>0$, define the normalized sector weight by
\begin{equation}
	\mathcal R_q(\mu)
	=
	\frac{(q!)^2}{\mu^q}\mathcal S_q(\mu),
	\qquad
	q\geq1.
	\label{eq:normalized-sector-weight}
\end{equation}
The contribution corresponding to
\[
r_1=\cdots=r_q=1
\]
in \eqref{eq:sector-partition-function} is
\[
\frac{\mu^q}{(q!)^2}.
\]
Consequently,
\begin{equation}
	\mathcal R_q(\mu)\geq1.
	\label{eq:normalized-sector-lower-bound}
\end{equation}

The factorial scale of the sector weights is therefore
\begin{equation}
	\mathcal S_q(\mu)
	\asymp
	\frac{\mu^q}{(q!)^2},
	\qquad
	q\to\infty.
	\label{eq:sector-factorial-scale}
\end{equation}
The remaining issue is the identification of the amplitude multiplying
this factorial scale.

\subsection{Large-sector hypothesis}
\label{sec:large-sector-hypothesis}

We assume that there exists a positive function $\mathcal C(\mu)$ such
that
\begin{equation}
	\mathcal S_q(\mu)
	=
	\mathcal C(\mu)
	\frac{\mu^q}{(q!)^2}
	\left(
	1+o(1)
	\right),
	\qquad
	q\to\infty.
	\label{eq:sector-asymptotic-assumption}
\end{equation}
We also assume that this asymptotic relation may be differentiated with
respect to $\mu$, in the sense that
\begin{equation}
	\vartheta\mathcal S_q(\mu)
	=
	q\mathcal C(\mu)
	\frac{\mu^q}{(q!)^2}
	\left(
	1+o(1)
	\right),
	\qquad
	q\to\infty,
	\label{eq:sector-derivative-assumption}
\end{equation}
where
\begin{equation}
	\vartheta
	=
	\mu\frac{d}{d\mu}.
	\label{eq:Euler-operator-sector}
\end{equation}

Under the hypotheses of
Section~\ref{sec:large-sector-hypothesis}, the infinite sector sum
converges:
\begin{equation}
	\mathcal T_\infty(\mu)
	=
	1+
	\sum_{q=1}^{\infty}
	\mathcal S_q(\mu).
	\label{eq:infinite-reduced-tau}
\end{equation}
Moreover,
\begin{equation}
	\mathcal T_n(\mu)
	\longrightarrow
	\mathcal T_\infty(\mu),
	\qquad
	n\to\infty.
	\label{eq:reduced-tau-limit}
\end{equation}

\begin{proposition}[Sector asymptotics and recurrence coefficients]
	\label{prop:sector-asymptotics-recurrence-coefficients}
	
	Under the hypotheses of
	Section~\ref{sec:large-sector-hypothesis},
	\begin{equation}
		\gamma_n
		=
		\mu
		-
		\frac{\mathcal C(\mu)}
		{\mathcal T_\infty(\mu)}
		\frac{\mu^{n+1}}{(n!)^2}
		\left(
		1+o(1)
		\right),
		\qquad
		n\to\infty,
		\label{eq:gamma-sector-asymptotic}
	\end{equation}
	and
	\begin{equation}
		\beta_n
		=
		n
		+
		\frac{\mathcal C(\mu)}
		{\mathcal T_\infty(\mu)}
		\frac{\mu^{n+1}}{n!(n+1)!}
		\left(
		1+o(1)
		\right),
		\qquad
		n\to\infty.
		\label{eq:beta-sector-asymptotic}
	\end{equation}
\end{proposition}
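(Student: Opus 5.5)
The proof substitutes the sector decomposition directly into the exact formulas \eqref{eq:gamma-reduced-tau} and \eqref{eq:beta-reduced-tau} and uses the increment identity \eqref{eq:reduced-tau-increment}. The hypothesis \eqref{eq:sector-asymptotic-assumption} makes $\sum_q\mathcal S_q(\mu)$ converge. Hence $\mathcal T_n\to\mathcal T_\infty(\mu)\ge1$, and all the $\mathcal T_n$ stay bounded away from zero.

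\textbf{The coefficient $\gamma_n$.} Write $\mathcal T_{n+1}=\mathcal T_n+\mathcal S_{n+1}$ and $\mathcal T_{n-1}=\mathcal T_n-\mathcal S_n$. Then
\[
\frac{\mathcal T_{n+1}\mathcal T_{n-1}}{\mathcal T_n^2}
=1+\frac{\mathcal S_{n+1}-\mathcal S_n}{\mathcal T_n}-\frac{\mathcal S_n\mathcal S_{n+1}}{\mathcal T_n^2}.
\]
By the hypothesis, $\mathcal S_{n+1}=O(\mathcal S_n/n^2)$, so $\mathcal S_{n+1}=o(\mathcal S_n)$. The product term is $O(\mathcal S_n^2)=o(\mathcal S_n)$. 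Therefore
\[
\gamma_n=\mu-\mu\,\frac{\mathcal S_n}{\mathcal T_n}(1+o(1)).
\]
Replacing $\mathcal T_n$ by $\mathcal T_\infty$ costs only a factor $1+o(1)$. Inserting $\mathcal S_n=\mathcal C(\mu)\mu^n(n!)^{-2}(1+o(1))$ then gives \eqref{eq:gamma-sector-asymptotic}.

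\textbf{The coefficient $\beta_n$.} Here
\[
\vartheta\log\frac{\mathcal T_{n+1}}{\mathcal T_n}
=\vartheta\log\Bigl(1+\frac{\mathcal S_{n+1}}{\mathcal T_n}\Bigr)
=\frac{\vartheta\mathcal S_{n+1}\,\mathcal T_n-\mathcal S_{n+1}\,\vartheta\mathcal T_n}{\mathcal T_n\,\mathcal T_{n+1}}.
\]
By the differentiated hypothesis \eqref{eq:sector-derivative-assumption},
\[
\vartheta\mathcal S_{n+1}=(n+1)\,\mathcal C\,\frac{\mu^{n+1}}{((n+1)!)^2}(1+o(1))=\mathcal C\,\frac{\mu^{n+1}}{n!(n+1)!}(1+o(1)).
\]
This identifies the claimed leading term, once the second numerator term is shown to be negligible, which is the main point to check. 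For this we need $\vartheta\mathcal T_n$ to be bounded in $n$. By the differentiated hypothesis, $\vartheta\mathcal S_q\sim q\,\mathcal C\mu^q/(q!)^2$, which is summable in $q$. Hence
\[
\vartheta\mathcal T_n=\sum_{q\le n}\vartheta\mathcal S_q
\]
converges, and in particular stays bounded. The second term is then $O(\mathcal S_{n+1})=O(\mu^{n+1}/((n+1)!)^2)$. This is smaller than the main term by a factor $1/(n+1)$, hence $o$ of it. Finally, $\mathcal T_n\mathcal T_{n+1}\to\mathcal T_\infty^2$ and $\mathcal T_n/(\mathcal T_n\mathcal T_{n+1})\to1/\mathcal T_\infty$, which yields \eqref{eq:beta-sector-asymptotic}.

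The main obstacle is this derivative term. The proof relies on the differentiability assumption in \eqref{eq:sector-derivative-assumption}; without it, control of $\vartheta\mathcal T_n$ would require an independent argument, for instance dominated termwise differentiation of the positive power series $\mathcal S_q(\mu)$.
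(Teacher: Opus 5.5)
Your proof is correct and follows the same route as the paper: for $\gamma_n$ you use the same expansion of $\mathcal T_{n+1}\mathcal T_{n-1}/\mathcal T_n^2$ via the increment identity, and for $\beta_n$ you start from the same representation $\beta_n-n=\vartheta\log(1+\mathcal S_{n+1}/\mathcal T_n)$. The one difference favors your version: the paper expands the logarithm asymptotically and then applies $\vartheta$ to that asymptotic relation, silently dropping the $\mathcal S_{n+1}\,\vartheta\mathcal T_n$ contribution, whereas your exact quotient-rule identity, together with the boundedness of $\vartheta\mathcal T_n$ (from the summability of $\vartheta\mathcal S_q$), justifies that step rigorously.
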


\begin{proof}
	By \eqref{eq:reduced-tau-increment},
	\[
	\mathcal T_{n+1}
	=
	\mathcal T_n+\mathcal S_{n+1},
	\]
	whereas
	\[
	\mathcal T_{n-1}
	=
	\mathcal T_n-\mathcal S_n.
	\]
	Therefore,
	\begin{align}
		\frac{
			\mathcal T_{n+1}\mathcal T_{n-1}
		}{
			\mathcal T_n^2
		}
		&=
		\frac{
			(\mathcal T_n+\mathcal S_{n+1})
			(\mathcal T_n-\mathcal S_n)
		}{
			\mathcal T_n^2
		}
		\nonumber\\
		&=
		1
		+
		\frac{
			\mathcal S_{n+1}-\mathcal S_n
		}{
			\mathcal T_n
		}
		-
		\frac{
			\mathcal S_n\mathcal S_{n+1}
		}{
			\mathcal T_n^2
		}.
		\label{eq:tau-ratio-expansion}
	\end{align}
	
	The asymptotic assumption
	\eqref{eq:sector-asymptotic-assumption} implies
	\[
	\frac{\mathcal S_{n+1}}{\mathcal S_n}
	=
	\frac{\mu}{(n+1)^2}
	\left(
	1+o(1)
	\right).
	\]
	Hence,
	\[
	\mathcal S_{n+1}
	=
	o(\mathcal S_n).
	\]
	Since
	\[
	\mathcal T_n
	\longrightarrow
	\mathcal T_\infty,
	\]
	equation \eqref{eq:tau-ratio-expansion} yields
	\[
	\frac{
		\mathcal T_{n+1}\mathcal T_{n-1}
	}{
		\mathcal T_n^2
	}
	=
	1
	-
	\frac{\mathcal S_n}
	{\mathcal T_\infty}
	\left(
	1+o(1)
	\right).
	\]
	Using \eqref{eq:gamma-reduced-tau} and
	\eqref{eq:sector-asymptotic-assumption}, one obtains
	\[
	\gamma_n
	=
	\mu
	-
	\mu
	\frac{\mathcal C(\mu)}
	{\mathcal T_\infty(\mu)}
	\frac{\mu^n}{(n!)^2}
	\left(
	1+o(1)
	\right),
	\]
	which is \eqref{eq:gamma-sector-asymptotic}.
	
	For $\beta_n$, equations \eqref{eq:beta-reduced-tau} and
	\eqref{eq:reduced-tau-increment} give
	\[
	\beta_n-n
	=
	\vartheta
	\log\left(
	1+
	\frac{\mathcal S_{n+1}}
	{\mathcal T_n}
	\right).
	\]
	Since
	\[
	\mathcal S_{n+1}
	=
	O\!\left(
	\frac{\mu^{n+1}}{((n+1)!)^2}
	\right),
	\]
	the logarithm can be expanded as
	\[
	\log\left(
	1+
	\frac{\mathcal S_{n+1}}
	{\mathcal T_n}
	\right)
	=
	\frac{\mathcal S_{n+1}}
	{\mathcal T_\infty}
	\left(
	1+o(1)
	\right).
	\]
	Using \eqref{eq:sector-derivative-assumption}, one finds
	\[
	\beta_n-n
	=
	\frac{
		(n+1)\mathcal C(\mu)
	}{
		\mathcal T_\infty(\mu)
	}
	\frac{\mu^{n+1}}{((n+1)!)^2}
	\left(
	1+o(1)
	\right).
	\]
	Since
	\[
	\frac{n+1}{((n+1)!)^2}
	=
	\frac{1}{n!(n+1)!},
	\]
	this proves \eqref{eq:beta-sector-asymptotic}.
\end{proof}

Suppose now that the recurrence coefficients are parametrized by
\[
\beta_n-n
=
\sqrt{\mu}\,c_nc_{n+1}.
\]
If the Bessel comparison gives
\begin{equation}
	c_n
	=
	L(\mu)J_n(2\sqrt{\mu})
	\left(
	1+o(1)
	\right),
	\qquad
	n\to\infty,
	\label{eq:Bessel-connection-asymptotic}
\end{equation}
then
\begin{equation}
	\sqrt{\mu}\,c_nc_{n+1}
	=
	L(\mu)^2
	\frac{\mu^{n+1}}{n!(n+1)!}
	\left(
	1+o(1)
	\right).
	\label{eq:product-Bessel-asymptotic}
\end{equation}
Comparison of \eqref{eq:beta-sector-asymptotic} and
\eqref{eq:product-Bessel-asymptotic} gives the identity
\begin{equation}
	L(\mu)^2
	=
	\frac{\mathcal C(\mu)}
	{\mathcal T_\infty(\mu)}.
	\label{eq:connection-constant-sector-identity}
\end{equation}

Therefore, the identity
\[
L(\mu)=1
\]
is equivalent to the large-sector normalization condition
\begin{equation}
	\mathcal C(\mu)
	=
	\mathcal T_\infty(\mu).
	\label{eq:sector-normalization-target}
\end{equation}

The remaining theoretical problem is thus: establish the factorial
asymptotic
\[
\mathcal S_q(\mu)
\sim
\mathcal T_\infty(\mu)
\frac{\mu^q}{(q!)^2},
\qquad
q\to\infty.
\]
A proof of this statement determines the connection constant and yields
the sharp factorial corrections for both recurrence coefficients. This
is precisely what we do in the remainder of this section.

\subsection{Normalized sector weights}

The large-sector problem is naturally expressed in terms of the
normalized quantities
\[
\mathcal R_q(\mu)
=
\frac{(q!)^2}{\mu^q}\mathcal S_q(\mu).
\]
The normalization removes the contribution of the minimal excitation
configuration
\[
r_1=\cdots=r_q=1.
\]
It is therefore useful to rewrite the sector sum in coordinates
measuring the excess above this minimal configuration.

For $1\leq a\leq q$, set
\begin{equation}
	r_a
	=
	1+s_a.
	\label{eq:shifted-sector-coordinates}
\end{equation}
Since
\[
1\leq r_1\leq\cdots\leq r_q,
\]
the new variables satisfy
\begin{equation}
	0\leq s_1\leq\cdots\leq s_q.
	\label{eq:shifted-sector-ordering}
\end{equation}
Moreover,
\[
r_1+\cdots+r_q
=
q+s_1+\cdots+s_q.
\]

\begin{proposition}[Normalized sector representation]
	\label{prop:normalized-sector-representation}
	
	For every $q\geq1$,
	\begin{equation}
		\mathcal R_q(\mu)
		=
		(q!)^2
		\sum_{0\leq s_1\leq\cdots\leq s_q}
		\mu^{s_1+\cdots+s_q}
		\prod_{a=1}^{q}
		\left(
		\frac{(a-1)!}{(a+s_a)!}
		\right)^2
		\prod_{1\leq a<b\leq q}
		\left(
		\frac{b-a+s_b-s_a}{b-a}
		\right)^2.
		\label{eq:normalized-sector-sum}
	\end{equation}
	The contribution of the configuration
	\[
	s_1=\cdots=s_q=0
	\]
	is equal to $1$.
\end{proposition}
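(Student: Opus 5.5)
This is a direct substitution into the sector formula \eqref{eq:sector-partition-function} of Proposition~\ref{prop:exact-sector-decomposition}. We insert $r_a=1+s_a$ and then rewrite each factor of the summand separately.

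First, the power of $\mu$. Since $r_1+\cdots+r_q=q+s_1+\cdots+s_q$, the weight $\mu^{r_1+\cdots+r_q}$ becomes $\mu^q\mu^{s_1+\cdots+s_q}$. The factor $\mu^q$ is exactly cancelled by the normalization $\mu^{-q}$ in \eqref{eq:normalized-sector-weight}.

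Next, the one-body factors. We have $(a-1+r_a)!=(a+s_a)!$, so each one-body factor becomes $\bigl((a-1)!/(a+s_a)!\bigr)^2$.

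The two-body factors need only one check. Because $r_b-r_a=s_b-s_a$, they are unchanged in form and become $\bigl((b-a+s_b-s_a)/(b-a)\bigr)^2$.

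The ordering constraint $1\le r_1\le\cdots\le r_q$ becomes $0\le s_1\le\cdots\le s_q$, as recorded in \eqref{eq:shifted-sector-ordering}. The map $r\mapsto s$ is a bijection between the two index sets, so the sum is simply reindexed. Multiplying by $(q!)^2$ then gives \eqref{eq:normalized-sector-sum}.

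For the last assertion, set $s_1=\cdots=s_q=0$. The $\mu$-power is $1$, and every Vandermonde ratio equals $(b-a)/(b-a)=1$. The one-body product is
\[
\prod_{a=1}^{q}\left(\frac{(a-1)!}{a!}\right)^2=\prod_{a=1}^{q}\frac{1}{a^2}=\frac{1}{(q!)^2}.
\]
This is cancelled by the prefactor $(q!)^2$, so the contribution is $1$. This also recovers \eqref{eq:normalized-sector-lower-bound}.

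No step is a genuine obstacle. The only point needing care is that the index shift in the one-body factor is $(a-1)+(1+s_a)=a+s_a$, which explains the asymmetry between $(a-1)!$ in the numerator and $(a+s_a)!$ in the denominator. All terms are nonnegative, and the sum converges absolutely by comparison with the convergent series defining $\mathcal S_q(\mu)$ in Proposition~\ref{prop:exact-sector-decomposition}, so the reindexing is legitimate.
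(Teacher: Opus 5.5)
Your proof is correct and follows the paper's argument exactly: substitute $r_a=1+s_a$ into the sector formula, factor out $\mu^q$ against the normalization, and check that the zero configuration contributes $(q!)^2\prod_{a=1}^{q}a^{-2}=1$. Your extra remark on legitimacy of the reindexing is harmless: since all terms are nonnegative, the bijective reindexing is valid even without knowing a priori that $\mathcal S_q(\mu)$ is finite.
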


\begin{proof}
	Starting from \eqref{eq:sector-partition-function}, substitute
	\[
	r_a=1+s_a.
	\]
	Then
	\[
	\mu^{r_1+\cdots+r_q}
	=
	\mu^q\mu^{s_1+\cdots+s_q},
	\]
	and
	\[
	(a-1+r_a)!
	=
	(a+s_a)!.
	\]
	Furthermore,
	\[
	b-a+r_b-r_a
	=
	b-a+s_b-s_a.
	\]
	Therefore,
	\[
	\mathcal S_q(\mu)
	=
	\mu^q
	\sum_{0\leq s_1\leq\cdots\leq s_q}
	\mu^{s_1+\cdots+s_q}
	\prod_{a=1}^{q}
	\left(
	\frac{(a-1)!}{(a+s_a)!}
	\right)^2
	\prod_{1\leq a<b\leq q}
	\left(
	\frac{b-a+s_b-s_a}{b-a}
	\right)^2.
	\]
	Multiplication by $(q!)^2/\mu^q$ proves
	\eqref{eq:normalized-sector-sum}.
	
	For the zero configuration
	\[
	s_1=\cdots=s_q=0,
	\]
	the Vandermonde quotient is equal to $1$, while the factorial factor is
	\[
	(q!)^2
	\prod_{a=1}^{q}
	\left(
	\frac{(a-1)!}{a!}
	\right)^2.
	\]
	Since
	\[
	\prod_{a=1}^{q}
	\frac{(a-1)!}{a!}
	=
	\prod_{a=1}^{q}
	\frac{1}{a}
	=
	\frac{1}{q!},
	\]
	the contribution of the zero configuration is
	\[
	(q!)^2
	\left(
	\frac{1}{q!}
	\right)^2
	=
	1.
	\]
\end{proof}

The normalized representation \eqref{eq:normalized-sector-sum}
immediately yields
\begin{equation}
	\mathcal R_q(\mu)\geq1,
	\qquad
	q\geq1.
	\label{eq:normalized-sector-lower-bound-again}
\end{equation}

The inequality follows because all summands are nonnegative and the
zero configuration contributes exactly $1$.

The first normalized sector is
\begin{equation}
	\mathcal R_1(\mu)
	=
	\frac{1}{\mu}
	\sum_{r=1}^{\infty}
	\frac{\mu^r}{(r!)^2}
	=
	\frac{I_0(2\sqrt{\mu})-1}{\mu}.
	\label{eq:first-normalized-sector}
\end{equation}
For general $q$, the normalized sector weight contains the collective
effect of all weakly ordered excess variables
\[
0\leq s_1\leq\cdots\leq s_q.
\]

The connection problem can now be stated in terms of the limit of
$\mathcal R_q(\mu)$. Indeed, the desired large-sector relation
\[
\mathcal S_q(\mu)
\sim
\mathcal T_\infty(\mu)
\frac{\mu^q}{(q!)^2}
\]
is equivalent to
\begin{equation}
	\mathcal R_q(\mu)
	\longrightarrow
	\mathcal T_\infty(\mu),
	\qquad
	q\to\infty.
	\label{eq:normalized-sector-limit-target}
\end{equation}

Thus, the connection problem has been reduced to the following
asymptotic statement:
\[
\boxed{
	\lim_{q\to\infty}
	\mathcal R_q(\mu)
	=
	1+
	\sum_{m=1}^{\infty}
	\mathcal S_m(\mu).
}
\]

If \eqref{eq:normalized-sector-limit-target} holds, then
\[
\mathcal C(\mu)
=
\mathcal T_\infty(\mu),
\]
and the connection identity
\[
L(\mu)^2
=
\frac{\mathcal C(\mu)}
{\mathcal T_\infty(\mu)}
\]
implies
\[
L(\mu)=1.
\]

\subsection{A rigorous route to the normalized-sector limit}

The preceding sections reduce the connection problem to the limit
\[
\mathcal R_q(\mu)
\longrightarrow
\mathcal T_\infty(\mu),
\qquad
q\to\infty.
\]
At this point, the main issue is an asymptotic estimate for the
positive multiple sum \eqref{eq:normalized-sector-sum}.

A direct analysis of the full sum is difficult because the number of
variables grows with $q$. A natural first step is to truncate the
excess variables
\[
0\leq s_1\leq\cdots\leq s_q
\]
at a fixed level.

For $M\in\mathbb N_0$, define the truncated normalized sector weight
by
\begin{align}
	\mathcal R_q^{[M]}(\mu)
	:={}&
	(q!)^2
	\sum_{0\leq s_1\leq\cdots\leq s_q\leq M}
	\mu^{s_1+\cdots+s_q}
	\prod_{a=1}^{q}
	\left(
	\frac{(a-1)!}{(a+s_a)!}
	\right)^2
	\nonumber\\
	&\times
	\prod_{1\leq a<b\leq q}
	\left(
	\frac{b-a+s_b-s_a}{b-a}
	\right)^2.
	\label{eq:truncated-normalized-sector}
\end{align}
Since all summands in \eqref{eq:normalized-sector-sum} are
nonnegative, one has
\begin{equation}
	1
	\leq
	\mathcal R_q^{[M]}(\mu)
	\leq
	\mathcal R_q^{[M+1]}(\mu)
	\leq
	\mathcal R_q(\mu).
	\label{eq:truncated-sector-monotonicity}
\end{equation}

The truncated expression involves only the finite set of excess levels
\[
0,1,\ldots,M.
\]
Thus, although the number of variables still grows with $q$, every
configuration is determined by the occupation numbers of these finitely
many levels.

More precisely, for a configuration contributing to
\eqref{eq:truncated-normalized-sector}, define
\begin{equation}
	N_m
	=
	\#\{a\in\{1,\ldots,q\}:s_a=m\},
	\qquad
	0\leq m\leq M.
	\label{eq:occupation-numbers}
\end{equation}
Then
\begin{equation}
	N_0+\cdots+N_M=q.
	\label{eq:occupation-number-constraint}
\end{equation}
The weak ordering of the variables implies that each admissible
configuration is uniquely determined by the vector
\[
(N_0,\ldots,N_M).
\]

The asymptotic problem can therefore be divided into two separate
tasks.

\subsubsection{Fixed truncation level}

For each fixed $M$, one seeks to analyze
\[
\mathcal R_q^{[M]}(\mu)
\]
as $q\to\infty$. The finite number of possible excess levels suggests
that this problem should be reducible to a finite-dimensional
combinatorial expression in the occupation numbers
\[
N_0,\ldots,N_M.
\]

The first objective is to prove the existence of a limit
\begin{equation}
	\mathcal R_\infty^{[M]}(\mu)
	=
	\lim_{q\to\infty}
	\mathcal R_q^{[M]}(\mu).
	\label{eq:fixed-truncation-limit}
\end{equation}

\subsubsection{Uniform control of the tail}

The second objective is to control the contribution of configurations
for which the maximal excess is larger than $M$. Define the remainder
by
\begin{equation}
	\mathcal E_q^{[M]}(\mu)
	=
	\mathcal R_q(\mu)
	-
	\mathcal R_q^{[M]}(\mu).
	\label{eq:truncation-remainder}
\end{equation}
A sufficient uniform estimate would be
\begin{equation}
	\sup_{q\geq1}
	\mathcal E_q^{[M]}(\mu)
	\longrightarrow
	0,
	\qquad
	M\to\infty.
	\label{eq:uniform-tail-target}
\end{equation}

If both \eqref{eq:fixed-truncation-limit} and
\eqref{eq:uniform-tail-target} hold, then one may interchange the
limits in $q$ and $M$. In particular, the full normalized sector limit
would exist:
\begin{equation}
	\lim_{q\to\infty}
	\mathcal R_q(\mu)
	=
	\lim_{M\to\infty}
	\mathcal R_\infty^{[M]}(\mu).
	\label{eq:interchange-limits-target}
\end{equation}

The remaining identification problem is to prove that the limiting
quantity in \eqref{eq:interchange-limits-target} coincides with the
infinite reduced tau function
\[
\mathcal T_\infty(\mu)
=
1+\sum_{m=1}^{\infty}\mathcal S_m(\mu).
\]
That identification establishes
\[
\lim_{q\to\infty}
\mathcal R_q(\mu)
=
\mathcal T_\infty(\mu),
\]
and hence
\[
L(\mu)=1.
\]

\subsection{A uniform truncation principle}
\label{sec:uniform-truncation-principle}

The finite-level limits obtained for $M=1$ and $M=2$ suggest a general
strategy. However, the existence of a limit
\[
\mathcal R_q^{[M]}(\mu)
\longrightarrow
\mathcal R_\infty^{[M]}(\mu)
\]
for each fixed truncation level $M$ is not, by itself, sufficient to
prove convergence of the full normalized sector weight
$\mathcal R_q(\mu)$. One must also control uniformly the contribution
of excitation levels larger than $M$.

Recall the positive remainder
\begin{equation}
	\mathcal E_q^{[M]}(\mu)
	=
	\mathcal R_q(\mu)
	-
	\mathcal R_q^{[M]}(\mu).
	\label{eq:uniform-truncation-remainder}
\end{equation}
By construction,
\begin{equation}
	\mathcal E_q^{[M]}(\mu)\geq0.
	\label{eq:uniform-truncation-remainder-positive}
\end{equation}

\begin{theorem}[Uniform truncation principle]
	\label{thm:uniform-truncation-principle}
	
	Fix $\mu>0$. Assume that the following two properties hold.
	
	\begin{enumerate}
		\item For every fixed $M\in\mathbb N_0$, the limit
		\begin{equation}
			\mathcal R_\infty^{[M]}(\mu)
			=
			\lim_{q\to\infty}
			\mathcal R_q^{[M]}(\mu)
			\label{eq:fixed-level-limit-hypothesis}
		\end{equation}
		exists.
		
		\item The truncation remainders satisfy the uniform estimate
		\begin{equation}
			\sup_{q\geq1}
			\mathcal E_q^{[M]}(\mu)
			\longrightarrow0,
			\qquad
			M\to\infty.
			\label{eq:uniform-remainder-hypothesis}
		\end{equation}
	\end{enumerate}
	
	Then the full normalized-sector limit exists and is given by
	\begin{equation}
		\lim_{q\to\infty}
		\mathcal R_q(\mu)
		=
		\lim_{M\to\infty}
		\mathcal R_\infty^{[M]}(\mu).
		\label{eq:full-normalized-sector-convergence}
	\end{equation}
	
\end{theorem}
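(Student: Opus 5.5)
This is a standard interchange-of-limits argument, and it needs nothing beyond the monotonicity \eqref{eq:truncated-sector-monotonicity} and the two hypotheses. Let me set
\[
\varepsilon_M=\sup_{q\ge1}\mathcal E_q^{[M]}(\mu),
\]
which is finite for $M$ large and tends to $0$ by \eqref{eq:uniform-remainder-hypothesis}. The plan has two steps. First, show that $\mathcal R^{[M]}_\infty(\mu)$ converges as $M\to\infty$ to a finite limit $\mathcal R_\star$. Second, show that $\mathcal R_q(\mu)\to\mathcal R_\star$.

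\emph{Step 1.} By \eqref{eq:truncated-sector-monotonicity}, $\mathcal R_q^{[M]}\le\mathcal R_q^{[M+1]}$ for every $q$. Passing to the limit $q\to\infty$ with hypothesis \eqref{eq:fixed-level-limit-hypothesis} gives $\mathcal R_\infty^{[M]}\le\mathcal R_\infty^{[M+1]}$, so the sequence is nondecreasing in $M$. For boundedness, fix some $M_0$ with $\varepsilon_{M_0}<\infty$. For every $M\ge M_0$ and every $q$,
\[
\mathcal R_q^{[M]}\le\mathcal R_q=\mathcal R_q^{[M_0]}+\mathcal E_q^{[M_0]}\le\mathcal R_q^{[M_0]}+\varepsilon_{M_0}.
\]
Letting $q\to\infty$ gives $\mathcal R_\infty^{[M]}\le\mathcal R_\infty^{[M_0]}+\varepsilon_{M_0}<\infty$. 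A bounded monotone sequence converges, so $\mathcal R_\star=\lim_{M\to\infty}\mathcal R_\infty^{[M]}$ exists, and $\mathcal R_\star\ge1$.

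\emph{Step 2.} For each $M$ and $q$ we have the sandwich
\[
\mathcal R_q^{[M]}\le\mathcal R_q\le\mathcal R_q^{[M]}+\varepsilon_M .
\]
Taking $\liminf_{q}$ and $\limsup_{q}$ gives
\[
\mathcal R_\infty^{[M]}\le\liminf_{q\to\infty}\mathcal R_q\le\limsup_{q\to\infty}\mathcal R_q\le\mathcal R_\infty^{[M]}+\varepsilon_M .
\]
Now let $M\to\infty$. By Step 1 the outer bounds both tend to $\mathcal R_\star$, since $\varepsilon_M\to0$. Hence $\lim_{q\to\infty}\mathcal R_q(\mu)$ exists and equals $\mathcal R_\star$, which is \eqref{eq:full-normalized-sector-convergence}.

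\emph{Obstacle.} There is no genuine obstacle here; the argument is a Moore--Osgood / dominated-tail interchange. The one point needing care is Step 1. The finiteness of $\lim_M\mathcal R_\infty^{[M]}$ is not assumed in the theorem, so it has to be derived from the uniform tail bound as above. This uses the finiteness of $\varepsilon_{M_0}$ for some $M_0$, which is implicit in hypothesis \eqref{eq:uniform-remainder-hypothesis}. If one only read that hypothesis as extended-real convergence, one would note that $\varepsilon_M$ is eventually finite.

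The substantive difficulties, namely verifying the two hypotheses and identifying the limit with $\mathcal T_\infty(\mu)$, lie outside this theorem.
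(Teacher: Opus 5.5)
Your argument is correct and is essentially the paper's. Both proofs use the monotonicity of $\mathcal R_\infty^{[M]}$ in $M$ together with the uniform tail bound to get existence of $\lim_{M\to\infty}\mathcal R_\infty^{[M]}$ (you via boundedness, the paper via the Cauchy estimate $0\le\mathcal R_\infty^{[M_2]}-\mathcal R_\infty^{[M_1]}\le\sup_q\mathcal E_q^{[M_1]}$), and then squeeze $\mathcal R_q$; your $\liminf/\limsup$ sandwich is a slightly cleaner packaging of the paper's $\varepsilon/3$ triangle inequality, since it avoids having to re-choose $q_0$ after enlarging $M$.
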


\begin{proof}
	Let $\varepsilon>0$. By
	\eqref{eq:uniform-remainder-hypothesis}, there exists $M_0$ such that
	\begin{equation}
		\sup_{q\geq1}
		\mathcal E_q^{[M]}(\mu)
		<
		\frac{\varepsilon}{3},
		\qquad
		M\geq M_0.
		\label{eq:uniform-remainder-epsilon}
	\end{equation}
	Fix such an $M$. By the convergence
	\eqref{eq:fixed-level-limit-hypothesis}, there exists $q_0$ such that,
	for every $q\geq q_0$,
	\begin{equation}
		\left|
		\mathcal R_q^{[M]}(\mu)
		-
		\mathcal R_\infty^{[M]}(\mu)
		\right|
		<
		\frac{\varepsilon}{3}.
		\label{eq:fixed-level-convergence-epsilon}
	\end{equation}
	
	Moreover, the monotonicity in the truncation level gives
	\[
	\mathcal R_\infty^{[M]}(\mu)
	\leq
	\mathcal R_\infty^{[M+1]}(\mu).
	\]
	The uniform remainder estimate implies that the increasing sequence
	\[
	\left(
	\mathcal R_\infty^{[M]}(\mu)
	\right)_{M\geq0}
	\]
	is Cauchy. Indeed, if $M_2\geq M_1$, then
	\[
	0
	\leq
	\mathcal R_\infty^{[M_2]}(\mu)
	-
	\mathcal R_\infty^{[M_1]}(\mu)
	\leq
	\sup_{q\geq1}
	\mathcal E_q^{[M_1]}(\mu).
	\]
	Thus, the limit
	\[
	\mathcal R_\infty(\mu)
	=
	\lim_{M\to\infty}
	\mathcal R_\infty^{[M]}(\mu)
	\]
	exists. Increasing $M$, if necessary, one may also assume that
	\begin{equation}
		\left|
		\mathcal R_\infty(\mu)
		-
		\mathcal R_\infty^{[M]}(\mu)
		\right|
		<
		\frac{\varepsilon}{3}.
		\label{eq:limiting-truncation-epsilon}
	\end{equation}
	
	For $q\geq q_0$, the triangle inequality gives
	\begin{align}
		\left|
		\mathcal R_q(\mu)
		-
		\mathcal R_\infty(\mu)
		\right|
		&\leq
		\left|
		\mathcal R_q(\mu)
		-
		\mathcal R_q^{[M]}(\mu)
		\right|
		\nonumber\\
		&\quad+
		\left|
		\mathcal R_q^{[M]}(\mu)
		-
		\mathcal R_\infty^{[M]}(\mu)
		\right|
		\nonumber\\
		&\quad+
		\left|
		\mathcal R_\infty^{[M]}(\mu)
		-
		\mathcal R_\infty(\mu)
		\right|.
		\label{eq:uniform-truncation-triangle}
	\end{align}
	The three terms on the right-hand side are each smaller than
	$\varepsilon/3$ by
	\eqref{eq:uniform-remainder-epsilon},
	\eqref{eq:fixed-level-convergence-epsilon}, and
	\eqref{eq:limiting-truncation-epsilon}. Hence,
	\[
	\left|
	\mathcal R_q(\mu)
	-
	\mathcal R_\infty(\mu)
	\right|
	<
	\varepsilon.
	\]
	This proves \eqref{eq:full-normalized-sector-convergence}.
\end{proof}

The theorem reduces the full large-sector convergence problem to two
distinct estimates:

\begin{enumerate}
	\item for each fixed $M$, determine the limit of the finite-level
	truncation $\mathcal R_q^{[M]}$;
	
	\item prove the uniform tail estimate
	\[
	\sup_{q\geq1}
	\left(
	\mathcal R_q(\mu)-\mathcal R_q^{[M]}(\mu)
	\right)
	\longrightarrow0,
	\qquad
	M\to\infty.
	\]
\end{enumerate}

If, in addition, one can identify the resulting limit with the
infinite reduced tau function,
\begin{equation}
	\mathcal R_\infty(\mu)
	=
	\mathcal T_\infty(\mu),
	\label{eq:normalized-limit-tau-identification}
\end{equation}
then the large-sector asymptotic normalization becomes
\[
\mathcal S_q(\mu)
\sim
\mathcal T_\infty(\mu)
\frac{\mu^q}{(q!)^2}.
\]
By \eqref{eq:connection-constant-sector-identity}, this implies
\[
L(\mu)^2=1.
\]
The positivity of the Bessel branch then gives
\[
L(\mu)=1.
\]

\subsection{Schur-function form of the normalized sector sum}
\label{sec:Schur-sector-form}

The normalized sector representation
\eqref{eq:normalized-sector-sum} contains a Vandermonde quotient of the
form
\[
\prod_{1\leq a<b\leq q}
\left(
\frac{b-a+s_b-s_a}{b-a}
\right)^2.
\]
This quotient has a natural interpretation in terms of partitions and
Schur functions.

Let
\[
0\leq s_1\leq\cdots\leq s_q.
\]
Define a partition
\[
\nu=(\nu_1,\ldots,\nu_q)
\]
by reversing the order of the excess variables:
\begin{equation}
	\nu_i
	=
	s_{q+1-i},
	\qquad
	1\leq i\leq q.
	\label{eq:partition-from-excess-coordinates}
\end{equation}
Then
\[
\nu_1\geq\nu_2\geq\cdots\geq\nu_q\geq0,
\]
so that $\nu$ is a partition with at most $q$ parts. Moreover,
\begin{equation}
	|\nu|
	=
	\nu_1+\cdots+\nu_q
	=
	s_1+\cdots+s_q.
	\label{eq:partition-size-equals-excess}
\end{equation}

Recall the Weyl dimension formula
\begin{equation}
	s_\nu(1^q)
	=
	\prod_{1\leq i<j\leq q}
	\frac{\nu_i-\nu_j+j-i}{j-i},
	\label{eq:Weyl-dimension-formula}
\end{equation}
where $s_\nu$ denotes the Schur function associated with $\nu$ and
\[
1^q=(1,\ldots,1)
\]
denotes $q$ variables all equal to $1$.

\begin{proposition}[Schur representation of the interaction factor]
	\label{prop:Schur-interaction-factor}
	
	Under the transformation
	\eqref{eq:partition-from-excess-coordinates},
	\begin{equation}
		\prod_{1\leq a<b\leq q}
		\frac{b-a+s_b-s_a}{b-a}
		=
		s_\nu(1^q).
		\label{eq:Vandermonde-Schur-identity}
	\end{equation}
	
\end{proposition}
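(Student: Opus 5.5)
The identity is a reindexing of the Weyl dimension formula \eqref{eq:Weyl-dimension-formula}. We substitute $\nu_i=s_{q+1-i}$ and check that the two products run over the same index pairs with the same factors.

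Start from the right-hand side and write each factor in terms of the excess variables. For $1\leq i<j\leq q$, put $a=q+1-j$ and $b=q+1-i$. The map $(i,j)\mapsto(a,b)$ is a bijection from $\{1\leq i<j\leq q\}$ onto $\{1\leq a<b\leq q\}$, because $i<j$ holds exactly when $a<b$, and its inverse is $i=q+1-b$, $j=q+1-a$. Under this map we have $j-i=b-a$. By \eqref{eq:partition-from-excess-coordinates},
\[
\nu_i-\nu_j=s_{q+1-i}-s_{q+1-j}=s_b-s_a .
\]
So the factor $(\nu_i-\nu_j+j-i)/(j-i)$ equals $(b-a+s_b-s_a)/(b-a)$.

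Taking the product over all pairs and reindexing through the bijection turns the Weyl product into $\prod_{1\leq a<b\leq q}(b-a+s_b-s_a)/(b-a)$, which is \eqref{eq:Vandermonde-Schur-identity}.

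Two checks are needed for \eqref{eq:Weyl-dimension-formula} to apply. First, $\nu$ must be a genuine partition with at most $q$ parts; this follows from $0\leq s_1\leq\cdots\leq s_q$. Second, the Weyl formula itself must hold; it is the standard evaluation $s_\nu(1^q)=\det(x_j^{\nu_i+q-i})/\det(x_j^{q-i})$ at $x=(1,\ldots,1)$, obtained by taking the principal specialization $x_j=t^{j-1}$ and letting $t\to1$. The paper recalls this formula, so it can be taken as given.

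There is no real obstacle here. The only step needing care is that the index reversal matches $j-i$ with $b-a$ and not with $a-b$, so that every factor is positive and no sign appears. As a consistency check, for $s\equiv 0$ we get $\nu=\emptyset$, and both sides equal $1$, matching the zero-configuration computation in Proposition~\ref{prop:normalized-sector-representation}.
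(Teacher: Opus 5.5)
Your proposal is correct and matches the paper's proof: both use the index reversal $i=q+1-b$, $j=q+1-a$, which sends $\nu_i-\nu_j+j-i$ to $b-a+s_b-s_a$ and $j-i$ to $b-a$, reducing the claim to the Weyl dimension formula \eqref{eq:Weyl-dimension-formula}. Your additional checks (that $\nu$ is a partition with at most $q$ parts, and the sanity check at $s\equiv 0$) are not in the paper's proof but are harmless.
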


\begin{proof}
	Let
	\[
	i=q+1-b,
	\qquad
	j=q+1-a.
	\]
	The condition $a<b$ is equivalent to $i<j$. Using
	\[
	\nu_i=s_b,
	\qquad
	\nu_j=s_a,
	\]
	one obtains
	\[
	\nu_i-\nu_j+j-i
	=
	s_b-s_a+b-a.
	\]
	Likewise,
	\[
	j-i=b-a.
	\]
	Therefore,
	\[
	\frac{b-a+s_b-s_a}{b-a}
	=
	\frac{\nu_i-\nu_j+j-i}{j-i}.
	\]
	Taking the product over all pairs $1\leq a<b\leq q$ gives
	\[
	\prod_{1\leq a<b\leq q}
	\frac{b-a+s_b-s_a}{b-a}
	=
	\prod_{1\leq i<j\leq q}
	\frac{\nu_i-\nu_j+j-i}{j-i}.
	\]
	The right-hand side is $s_\nu(1^q)$ by
	\eqref{eq:Weyl-dimension-formula}.
\end{proof}

The factorial term in \eqref{eq:normalized-sector-sum} can also be
rewritten in partition coordinates. Since
\[
s_a=\nu_{q+1-a},
\]
one has
\begin{align}
	\prod_{a=1}^{q}
	\left(
	\frac{(a-1)!}{(a+s_a)!}
	\right)^2
	&=
	\prod_{i=1}^{q}
	\left(
	\frac{(q-i)!}{(q-i+1+\nu_i)!}
	\right)^2.
	\label{eq:factorial-term-partition-coordinates}
\end{align}

Consequently, the normalized sector weight becomes
\begin{align}
	\mathcal R_q(\mu)
	&=
	(q!)^2
	\sum_{\ell(\nu)\leq q}
	\mu^{|\nu|}
	s_\nu(1^q)^2
	\prod_{i=1}^{q}
	\left(
	\frac{(q-i)!}{(q-i+1+\nu_i)!}
	\right)^2,
	\label{eq:normalized-sector-Schur-sum}
\end{align}
where the sum ranges over all partitions with at most $q$ parts.

The truncation $s_q\leq M$ is equivalent to the partition constraint
\[
\nu_1\leq M.
\]
Therefore,
\begin{align}
	\mathcal R_q^{[M]}(\mu)
	&=
	(q!)^2
	\sum_{\substack{\ell(\nu)\leq q\\ \nu_1\leq M}}
	\mu^{|\nu|}
	s_\nu(1^q)^2
	\prod_{i=1}^{q}
	\left(
	\frac{(q-i)!}{(q-i+1+\nu_i)!}
	\right)^2.
	\label{eq:truncated-sector-Schur-sum}
\end{align}

Formula \eqref{eq:truncated-sector-Schur-sum} gives a partition-theoretic
interpretation of the truncation procedure. The level $M$ is the
largest allowed part of the partition $\nu$.

The main advantage of this representation is that the interaction term
is now encoded by the standard quantity $s_\nu(1^q)$. Using the
hook-content formula,
\begin{equation}
	s_\nu(1^q)
	=
	\prod_{(i,j)\in\nu}
	\frac{q+j-i}{h_{i,j}},
	\label{eq:hook-content-formula}
\end{equation}
where $h_{i,j}$ denotes the hook length of the box $(i,j)$ in the
Young diagram of $\nu$, one may obtain estimates that are uniform in
$q$.

In particular, the uniform tail estimate
\[
\sup_{q\geq1}
\left(
\mathcal R_q(\mu)
-
\mathcal R_q^{[M]}(\mu)
\right)
\longrightarrow0,
\qquad
M\to\infty,
\]
becomes a problem of controlling the contribution of partitions whose
largest part satisfies
\[
\nu_1>M.
\]
This is the natural formulation for the hook-length analysis in the
next subsection.

\subsection{Hook-length reduction and uniform tail control}
\label{sec:hook-length-tail-control}

The Schur representation \eqref{eq:normalized-sector-Schur-sum} can
be simplified further by using the hook-content formula. The resulting
expression gives a uniform bound for the normalized sector weight and,
more importantly, a uniform estimate for the truncation tail.

For a partition $\nu$, let $h_{i,j}$ denote the hook length of the box
$(i,j)$ in the Young diagram of $\nu$. Recall the hook-content formula
\begin{equation}
	s_\nu(1^q)
	=
	\prod_{(i,j)\in\nu}
	\frac{q+j-i}{h_{i,j}}.
	\label{eq:hook-content-formula-recalled}
\end{equation}

The factorial factor in \eqref{eq:normalized-sector-Schur-sum} is
\begin{equation}
	\prod_{i=1}^{q}
	\frac{(q-i)!}{(q-i+1+\nu_i)!}.
	\label{eq:factorial-factor-before-cancellation}
\end{equation}
For each fixed row $i$, one has
\begin{equation}
	\frac{(q-i)!}{(q-i+1+\nu_i)!}
	=
	\prod_{j=1}^{\nu_i+1}
	\frac{1}{q-i+j}.
	\label{eq:row-factorial-product}
\end{equation}
On the other hand, the contribution of the $i$th row to the numerator
in the hook-content formula is
\begin{equation}
	\prod_{j=1}^{\nu_i}
	(q-i+j).
	\label{eq:row-content-product}
\end{equation}
Therefore, the factors $q-i+j$, $1\leq j\leq\nu_i$, cancel between the
Schur factor and the factorial factor. The only uncancelled factor in
row $i$ is
\[
\frac{1}{q-i+\nu_i+1}.
\]

\begin{proposition}[Hook-length representation]
	\label{prop:hook-length-representation}
	
	For every $q\geq1$,
	\begin{align}
		\mathcal R_q(\mu)
		&=
		(q!)^2
		\sum_{\ell(\nu)\leq q}
		\mu^{|\nu|}
		\prod_{(i,j)\in\nu}
		\frac{1}{h_{i,j}^2}
		\prod_{i=1}^{q}
		\frac{1}{(q-i+\nu_i+1)^2}.
		\label{eq:hook-length-sector-sum}
	\end{align}
	
\end{proposition}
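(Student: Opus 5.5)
The plan is to start from the Schur form \eqref{eq:normalized-sector-Schur-sum}, insert the hook-content formula \eqref{eq:hook-content-formula-recalled}, and then simplify row by row, exactly along the cancellation already indicated before the statement. Since every term in \eqref{eq:normalized-sector-Schur-sum} is a product over rows $i=1,\dots,q$ (and over boxes, which we group by rows), it suffices to verify the identity at the level of a single summand. Concretely, for a fixed partition $\nu$ with $\ell(\nu)\le q$ we will write
\[
s_\nu(1^q)^2\prod_{i=1}^{q}\left(\frac{(q-i)!}{(q-i+1+\nu_i)!}\right)^2
=
\prod_{(i,j)\in\nu}\frac{1}{h_{i,j}^2}\;
\prod_{i=1}^{q}\left(\frac{\prod_{j=1}^{\nu_i}(q-i+j)}{\prod_{j=1}^{\nu_i+1}(q-i+j)}\right)^2 ,
\]
using that the content product $\prod_{(i,j)\in\nu}(q+j-i)$ factors as $\prod_{i=1}^{q}\prod_{j=1}^{\nu_i}(q-i+j)$, and using \eqref{eq:row-factorial-product} for each row.

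The next step is the telescoping within each row: the quotient $\prod_{j=1}^{\nu_i}(q-i+j)\big/\prod_{j=1}^{\nu_i+1}(q-i+j)$ equals $1/(q-i+\nu_i+1)$. This is valid also for rows with $\nu_i=0$ (empty numerator, denominator $q-i+1$), which occur for $i>\ell(\nu)$, so no case split is really needed beyond checking that convention. Note that all factors $q-i+j$ with $i\le q$, $j\ge1$ are positive integers, so no division by zero arises.

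Multiplying by $(q!)^2\mu^{|\nu|}$ and summing over $\ell(\nu)\le q$ then gives \eqref{eq:hook-length-sector-sum}. The only point requiring care (the mild obstacle) is the bookkeeping of indices: checking that the content of box $(i,j)$ is $j-i$ so that $q+j-i=q-i+j$ matches the factorial factor of row $i$ with the same $j$-range $1\le j\le\nu_i$, and that the reindexing $s_a=\nu_{q+1-a}$ turned $(a-1)!/(a+s_a)!$ into $(q-i)!/(q-i+1+\nu_i)!$ correctly, as in \eqref{eq:factorial-term-partition-coordinates}. Once these are confirmed, the identity is a termwise algebraic equality between absolutely convergent positive series, so no analytic issues arise.
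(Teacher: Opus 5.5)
Your proposal is correct and follows essentially the same route as the paper: insert the hook-content formula into the Schur form \eqref{eq:normalized-sector-Schur-sum}, cancel the content factors $q-i+j$, $1\le j\le\nu_i$, row by row against \eqref{eq:row-factorial-product} so that only $1/(q-i+\nu_i+1)$ survives in each row, then square and multiply by $(q!)^2$. Your remarks on the empty rows $i>\ell(\nu)$ and on the reindexing $s_a=\nu_{q+1-a}$ spell out bookkeeping that the paper leaves implicit.
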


\begin{proof}
	Starting from \eqref{eq:normalized-sector-Schur-sum}, substitute the
	hook-content formula \eqref{eq:hook-content-formula-recalled}. Using
	\eqref{eq:row-factorial-product}, one obtains
	\begin{align}
		s_\nu(1^q)
		\prod_{i=1}^{q}
		\frac{(q-i)!}{(q-i+1+\nu_i)!}
		&=
		\prod_{(i,j)\in\nu}
		\frac{1}{h_{i,j}}
		\prod_{i=1}^{q}
		\frac{1}{q-i+\nu_i+1}.
		\label{eq:hook-content-factorial-cancellation}
	\end{align}
	Squaring \eqref{eq:hook-content-factorial-cancellation} and
	multiplying by $(q!)^2$ gives
	\eqref{eq:hook-length-sector-sum}.
\end{proof}

Let $f^\nu$ denote the number of standard Young tableaux of shape
$\nu$. The hook-length formula gives
\begin{equation}
	f^\nu
	=
	\frac{|\nu|!}
	{\displaystyle\prod_{(i,j)\in\nu}h_{i,j}}.
	\label{eq:hook-length-formula}
\end{equation}
Define
\begin{equation}
	\mathcal Q_q(\nu)
	=
	q!
	\prod_{i=1}^{q}
	\frac{1}{q-i+\nu_i+1},
	\label{eq:partition-correction-factor}
\end{equation}
where the partition is extended by zero parts:
\[
\nu_i=0,
\qquad
i>\ell(\nu).
\]
Then \eqref{eq:hook-length-sector-sum} becomes
\begin{equation}
	\mathcal R_q(\mu)
	=
	\sum_{\ell(\nu)\leq q}
	\mu^{|\nu|}
	\left(
	\frac{f^\nu}{|\nu|!}
	\right)^2
	\mathcal Q_q(\nu)^2.
	\label{eq:standard-tableau-sector-sum}
\end{equation}

Since
\[
\nu_i\geq0,
\]
one has
\[
q-i+\nu_i+1
\geq
q-i+1.
\]
Therefore,
\begin{align}
	\prod_{i=1}^{q}
	(q-i+\nu_i+1)
	&\geq
	\prod_{i=1}^{q}
	(q-i+1)
	\nonumber\\
	&=
	q!.
	\label{eq:partition-denominator-lower-bound}
\end{align}
It follows that
\begin{equation}
	0\leq\mathcal Q_q(\nu)\leq1.
	\label{eq:partition-correction-bound}
\end{equation}

\begin{theorem}[Uniform tail estimate]
	\label{thm:uniform-tail-estimate}
	
	For every $\mu>0$, every $M\geq0$, and every $q\geq1$,
	\begin{equation}
		0
		\leq
		\mathcal R_q(\mu)-\mathcal R_q^{[M]}(\mu)
		\leq
		\sum_{m=M+1}^{\infty}
		\frac{\mu^m}{m!}.
		\label{eq:uniform-tail-estimate}
	\end{equation}
	Consequently,
	\begin{equation}
		\sup_{q\geq1}
		\left[
		\mathcal R_q(\mu)-\mathcal R_q^{[M]}(\mu)
		\right]
		\longrightarrow0,
		\qquad
		M\to\infty.
		\label{eq:uniform-tail-convergence-proved}
	\end{equation}
	
\end{theorem}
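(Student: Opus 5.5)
The plan is to work directly from the standard-tableau representation \eqref{eq:standard-tableau-sector-sum} together with the bound \eqref{eq:partition-correction-bound}. By \eqref{eq:truncated-sector-Schur-sum}, the truncated weight $\mathcal R_q^{[M]}(\mu)$ is the same positive sum restricted to partitions with $\nu_1\leq M$. Therefore
\[
\mathcal R_q(\mu)-\mathcal R_q^{[M]}(\mu)
=
\sum_{\substack{\ell(\nu)\leq q\\ \nu_1>M}}
\mu^{|\nu|}
\left(\frac{f^\nu}{|\nu|!}\right)^2
\mathcal Q_q(\nu)^2 .
\]
Nonnegativity of this difference is immediate, since every summand is nonnegative. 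For the upper bound, I first use $0\leq\mathcal Q_q(\nu)\leq1$ to drop the correction factor. I then drop the restriction $\ell(\nu)\leq q$, which is allowed because the summands are positive. This leaves a bound independent of $q$:
\[
\sum_{\nu:\,\nu_1>M}\mu^{|\nu|}\left(\frac{f^\nu}{|\nu|!}\right)^2 .
\]

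Next I observe that $\nu_1>M$ forces $|\nu|\geq\nu_1\geq M+1$. So the last sum is at most
\[
\sum_{m=M+1}^\infty \frac{\mu^m}{(m!)^2}\sum_{|\nu|=m}(f^\nu)^2 .
\]
The key input is the classical identity $\sum_{|\nu|=m}(f^\nu)^2=m!$, which follows from the RSK correspondence or from counting dimensions in the regular representation of $S_m$. With it, the bound becomes $\sum_{m\geq M+1}\mu^m/m!$, which is exactly \eqref{eq:uniform-tail-estimate}.

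The consequence \eqref{eq:uniform-tail-convergence-proved} then follows at once. The right-hand side does not depend on $q$, and it is the tail of the convergent series $e^\mu=\sum_m\mu^m/m!$, so it tends to $0$ as $M\to\infty$.

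The only step that needs care is the one I expect to be the real obstacle: checking that the passage from \eqref{eq:normalized-sector-sum} to \eqref{eq:standard-tableau-sector-sum} matches the truncations term by term. Concretely, the truncation $s_q\leq M$ must correspond exactly to $\nu_1\leq M$ under \eqref{eq:partition-from-excess-coordinates}, and the hook-content cancellation in Proposition~\ref{prop:hook-length-representation} must hold configuration by configuration, so that the difference is literally the sum over partitions with $\nu_1>M$. Both facts are built into the reindexing $\nu_i=s_{q+1-i}$, under which $\nu_1=s_q$. I would state this explicitly and then proceed with the comparison above.
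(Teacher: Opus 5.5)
Your proposal is correct and follows essentially the same route as the paper: you identify the difference as the sum over partitions with $\nu_1>M$ in the standard-tableau representation, drop the factor $\mathcal Q_q(\nu)^2\le 1$ and the constraint $\ell(\nu)\le q$, use $|\nu|\ge\nu_1\ge M+1$, and close with $\sum_{\nu\vdash m}(f^\nu)^2=m!$ to get the $q$-independent exponential tail. Your check that the truncation $s_q\le M$ corresponds exactly to $\nu_1\le M$ under $\nu_i=s_{q+1-i}$ is the same bookkeeping the paper records in its Schur-function section.
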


\begin{proof}
	The difference
	\[
	\mathcal R_q(\mu)-\mathcal R_q^{[M]}(\mu)
	\]
	is the sum in \eqref{eq:standard-tableau-sector-sum} over partitions
	satisfying
	\[
	\nu_1>M.
	\]
	Since
	\[
	|\nu|\geq\nu_1,
	\]
	every such partition satisfies
	\[
	|\nu|\geq M+1.
	\]
	Using \eqref{eq:partition-correction-bound}, one obtains
	\begin{align}
		\mathcal R_q(\mu)-\mathcal R_q^{[M]}(\mu)
		&\leq
		\sum_{\substack{\ell(\nu)\leq q\\ |\nu|\geq M+1}}
		\mu^{|\nu|}
		\left(
		\frac{f^\nu}{|\nu|!}
		\right)^2
		\nonumber\\
		&\leq
		\sum_{m=M+1}^{\infty}
		\mu^m
		\sum_{\nu\vdash m}
		\left(
		\frac{f^\nu}{m!}
		\right)^2.
		\label{eq:tail-bound-before-tableau-identity}
	\end{align}
	The standard Young-tableau identity
	\begin{equation}
		\sum_{\nu\vdash m}
		(f^\nu)^2
		=
		m!
		\label{eq:standard-tableau-square-sum}
	\end{equation}
	implies
	\begin{equation}
		\sum_{\nu\vdash m}
		\left(
		\frac{f^\nu}{m!}
		\right)^2
		=
		\frac{1}{m!}.
		\label{eq:normalized-tableau-square-sum}
	\end{equation}
	Substitution of \eqref{eq:normalized-tableau-square-sum} into
	\eqref{eq:tail-bound-before-tableau-identity} gives
	\[
	\mathcal R_q(\mu)-\mathcal R_q^{[M]}(\mu)
	\leq
	\sum_{m=M+1}^{\infty}
	\frac{\mu^m}{m!}.
	\]
	This proves \eqref{eq:uniform-tail-estimate}. Finally, the tail of the
	exponential series converges to zero:
	\[
	\sum_{m=M+1}^{\infty}
	\frac{\mu^m}{m!}
	\longrightarrow0,
	\qquad
	M\to\infty.
	\]
	Taking the supremum over $q\geq1$ proves
	\eqref{eq:uniform-tail-convergence-proved}.
\end{proof}

As a special case of \eqref{eq:uniform-tail-estimate}, one obtains the
uniform bound
\begin{equation}
	0<\mathcal R_q(\mu)\leq e^\mu,
	\qquad
	q\geq1.
	\label{eq:uniform-sector-bound}
\end{equation}

\subsection{Fixed-level convergence and the large-sector limit}
\label{sec:fixed-level-convergence}

The hook-length representation gives a direct proof of the fixed-level
convergence required in the uniform truncation principle. Recall that
\begin{equation}
	\mathcal R_q^{[M]}(\mu)
	=
	\sum_{\substack{\ell(\nu)\leq q\\ \nu_1\leq M}}
	\mu^{|\nu|}
	\left(
	\frac{f^\nu}{|\nu|!}
	\right)^2
	\mathcal Q_q(\nu)^2,
	\label{eq:truncated-standard-tableau-sector-sum}
\end{equation}
where
\begin{equation}
	\mathcal Q_q(\nu)
	=
	q!
	\prod_{i=1}^{q}
	\frac{1}{q-i+\nu_i+1}.
	\label{eq:partition-correction-factor-recalled}
\end{equation}

\begin{lemma}[Limit of the partition correction factor]
	\label{lem:partition-correction-factor-limit}
	
	For every fixed partition $\nu$,
	\begin{equation}
		\mathcal Q_q(\nu)
		\longrightarrow1,
		\qquad
		q\to\infty.
		\label{eq:partition-correction-factor-limit}
	\end{equation}
	Moreover,
	\begin{equation}
		0\leq\mathcal Q_q(\nu)\leq1.
		\label{eq:partition-correction-factor-limit-bound}
	\end{equation}
	
\end{lemma}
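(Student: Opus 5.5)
The bound $0\leq\mathcal Q_q(\nu)\leq1$ has already been obtained in \eqref{eq:partition-correction-bound}. It follows from $q-i+\nu_i+1\geq q-i+1$ for every $i$, which gives $\prod_{i=1}^q(q-i+\nu_i+1)\geq q!$, so only the limit statement needs an argument.

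For the limit, fix $\nu$ and let $\ell=\ell(\nu)$. Take $q\geq\ell$, so that $\nu$ is an admissible partition with at most $q$ parts. For $i>\ell$ we have $\nu_i=0$, and the corresponding factor $q-i+1$ in the denominator of \eqref{eq:partition-correction-factor-recalled} matches exactly the factor $q-i+1$ in $q!=\prod_{i=1}^{q}(q-i+1)$. Cancelling these factors leaves the finite product
\[
\mathcal Q_q(\nu)=\prod_{i=1}^{\ell}\frac{q-i+1}{q-i+1+\nu_i},
\]
with a number of factors that does not depend on $q$.

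Each of these $\ell$ factors satisfies $1-\nu_i/(q-i+1+\nu_i)\to1$ as $q\to\infty$, because $\nu_i$ and $i\leq\ell$ are fixed. A product of finitely many factors each tending to $1$ tends to $1$, which proves \eqref{eq:partition-correction-factor-limit}. If a rate is useful later, the elementary inequality $\prod(1-x_i)\geq1-\sum x_i$ gives
\[
1-\frac{|\nu|}{q-\ell+1}\leq\mathcal Q_q(\nu)\leq1 .
\]

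None of these steps is hard. The only point that needs care is checking the index bookkeeping in the cancellation: the zero rows $i>\ell$ must be matched with the complementary factors of $q!$ so that exactly $\ell$ nontrivial ratios remain.
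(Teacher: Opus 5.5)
Your proof is correct and matches the paper's argument: you cancel the factors $q-i+1$ for $i>\ell(\nu)$ against $q!$, reducing $\mathcal Q_q(\nu)$ to the fixed-length product $\prod_{i=1}^{\ell}\frac{q-i+1}{q-i+1+\nu_i}$, pass to the limit factor by factor, and take the bound $0\le\mathcal Q_q(\nu)\le1$ from the earlier inequality $\prod_{i=1}^{q}(q-i+\nu_i+1)\ge q!$. Your extra rate estimate $1-|\nu|/(q-\ell+1)\le\mathcal Q_q(\nu)\le1$ is also valid, although the paper does not need it.
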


\begin{proof}
	Let $r=\ell(\nu)$. For every $q\geq r$, the zero parts of $\nu$ give
	\[
	\nu_i=0,
	\qquad
	r+1\leq i\leq q.
	\]
	Therefore,
	\begin{align}
		\mathcal Q_q(\nu)
		&=
		q!
		\prod_{i=1}^{r}
		\frac{1}{q-i+\nu_i+1}
		\prod_{i=r+1}^{q}
		\frac{1}{q-i+1}
		\nonumber\\
		&=
		\prod_{i=1}^{r}
		\frac{q-i+1}{q-i+\nu_i+1}.
		\label{eq:partition-correction-product-form}
	\end{align}
	For each fixed $i$,
	\[
	\frac{q-i+1}{q-i+\nu_i+1}
	\longrightarrow1,
	\qquad
	q\to\infty.
	\]
	Since the product in \eqref{eq:partition-correction-product-form} has
	only finitely many factors, this proves
	\eqref{eq:partition-correction-factor-limit}. The inequality
	\eqref{eq:partition-correction-factor-limit-bound} follows from
	\eqref{eq:partition-correction-bound}.
\end{proof}

\begin{theorem}[Fixed-level convergence]
	\label{thm:fixed-level-convergence}
	
	For every $\mu>0$ and every fixed integer $M\geq0$,
	\begin{equation}
		\mathcal R_q^{[M]}(\mu)
		\longrightarrow
		\mathcal R_\infty^{[M]}(\mu),
		\qquad
		q\to\infty,
		\label{eq:fixed-level-convergence-result}
	\end{equation}
	where
	\begin{equation}
		\mathcal R_\infty^{[M]}(\mu)
		=
		\sum_{\nu_1\leq M}
		\mu^{|\nu|}
		\left(
		\frac{f^\nu}{|\nu|!}
		\right)^2.
		\label{eq:fixed-level-limit-tableau-sum}
	\end{equation}
	
\end{theorem}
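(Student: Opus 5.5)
The plan is to read \eqref{eq:truncated-standard-tableau-sector-sum} as a single series over \emph{all} partitions $\nu$ with $\nu_1\leq M$, whose terms depend on $q$, and then pass to the limit term by term using dominated convergence for series (Tannery's theorem). For each such $\nu$ I will set
\[
a_q(\nu)=\mathbf 1_{\{\ell(\nu)\leq q\}}\,
\mu^{|\nu|}\left(\frac{f^\nu}{|\nu|!}\right)^2\mathcal Q_q(\nu)^2,
\]
so that $\mathcal R_q^{[M]}(\mu)=\sum_{\nu_1\leq M}a_q(\nu)$. Even for fixed $M$ this index set is infinite, since the length $\ell(\nu)$ is unrestricted. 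So a genuine interchange of limit and sum is needed; the finiteness of the set of levels alone does not suffice.

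\emph{Pointwise convergence.} Fix $\nu$ with $\nu_1\leq M$. As soon as $q\geq\ell(\nu)$, the indicator equals $1$. Lemma~\ref{lem:partition-correction-factor-limit} gives $\mathcal Q_q(\nu)\to1$. Hence
\[
a_q(\nu)\to\mu^{|\nu|}\left(\frac{f^\nu}{|\nu|!}\right)^2 .
\]

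\emph{Domination.} Since $0\leq\mathcal Q_q(\nu)\leq1$ by \eqref{eq:partition-correction-factor-limit-bound}, we have
\[
0\leq a_q(\nu)\leq b(\nu):=\mu^{|\nu|}\left(\frac{f^\nu}{|\nu|!}\right)^2
\]
uniformly in $q$. Grouping by $m=|\nu|$ and using \eqref{eq:normalized-tableau-square-sum} gives
\[
\sum_{\nu_1\leq M}b(\nu)\leq\sum_{m\geq0}\mu^m\sum_{\nu\vdash m}\left(\frac{f^\nu}{m!}\right)^2=\sum_{m\geq0}\frac{\mu^m}{m!}=e^\mu<\infty.
\]
Here the empty partition contributes $1$, matching the zero configuration. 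So $b$ is a summable majorant that does not depend on $q$.

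\emph{Conclusion.} Dominated convergence on the counting measure over partitions then yields \eqref{eq:fixed-level-convergence-result}, with limit \eqref{eq:fixed-level-limit-tableau-sum}; this limit is finite and bounded by $e^\mu$. If an elementary argument is preferred instead of citing dominated convergence, I would split the sum at $|\nu|\leq K$ versus $|\nu|>K$. The first part is a finite sum, and it converges termwise. The second part is bounded by $\sum_{m>K}\mu^m/m!$ uniformly in $q$, exactly as in Theorem~\ref{thm:uniform-tail-estimate}. An $\varepsilon/2$ argument then finishes the proof.

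The only real point of care is the one above: recognizing that the sum over $\nu_1\leq M$ is infinite, and supplying the $q$-independent summable majorant. This comes directly from $\mathcal Q_q\leq1$ and the identity $\sum_{\nu\vdash m}(f^\nu)^2=m!$. Everything else is routine.
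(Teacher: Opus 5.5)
Your proposal is correct and follows essentially the same route as the paper. Both arguments obtain pointwise convergence $\mathcal Q_q(\nu)^2\to1$ from Lemma~\ref{lem:partition-correction-factor-limit}, and both take the $q$-independent majorant $\mu^{|\nu|}(f^\nu/|\nu|!)^2$ from $0\le\mathcal Q_q(\nu)\le1$, with summability to $e^\mu$ via $\sum_{\nu\vdash m}(f^\nu)^2=m!$; dominated convergence for series then finishes, with terms having $\ell(\nu)>q$ treated as zero exactly as your indicator does. Your optional splitting at $|\nu|\le K$ versus $|\nu|>K$ is just an elementary proof of that dominated-convergence step.
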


\begin{proof}
	For a fixed partition $\nu$ satisfying $\nu_1\leq M$, Lemma
	\ref{lem:partition-correction-factor-limit} gives
	\[
	\mathcal Q_q(\nu)^2\longrightarrow1.
	\]
	If $\ell(\nu)>q$, then the corresponding summand in
	\eqref{eq:truncated-standard-tableau-sector-sum} is absent; equivalently
	it may be regarded as equal to zero. Furthermore, by
	\eqref{eq:partition-correction-factor-limit-bound},
	\[
	0
	\leq
	\mu^{|\nu|}
	\left(
	\frac{f^\nu}{|\nu|!}
	\right)^2
	\mathcal Q_q(\nu)^2
	\leq
	\mu^{|\nu|}
	\left(
	\frac{f^\nu}{|\nu|!}
	\right)^2.
	\]
	The dominating series is summable:
	\begin{align}
		\sum_{\nu}
		\mu^{|\nu|}
		\left(
		\frac{f^\nu}{|\nu|!}
		\right)^2
		&=
		\sum_{m=0}^{\infty}
		\mu^m
		\sum_{\nu\vdash m}
		\left(
		\frac{f^\nu}{m!}
		\right)^2
		\nonumber\\
		&=
		\sum_{m=0}^{\infty}
		\frac{\mu^m}{m!}
		=
		e^\mu.
		\label{eq:fixed-level-dominating-series}
	\end{align}
	Dominated convergence for series now yields
	\eqref{eq:fixed-level-convergence-result}.
\end{proof}

The fixed-level limits are increasing in $M$:
\begin{equation}
	\mathcal R_\infty^{[M]}(\mu)
	\leq
	\mathcal R_\infty^{[M+1]}(\mu).
	\label{eq:fixed-level-limit-monotonicity}
\end{equation}
Moreover,
\begin{align}
	\lim_{M\to\infty}
	\mathcal R_\infty^{[M]}(\mu)
	&=
	\sum_{\nu}
	\mu^{|\nu|}
	\left(
	\frac{f^\nu}{|\nu|!}
	\right)^2
	\nonumber\\
	&=
	\sum_{m=0}^{\infty}
	\frac{\mu^m}{m!}
	\nonumber\\
	&=
	e^\mu.
	\label{eq:limit-of-fixed-level-limits}
\end{align}

The fixed-level convergence established in
Theorem~\ref{thm:fixed-level-convergence}, together with the uniform
tail estimate of Theorem~\ref{thm:uniform-tail-estimate}, verifies the
two hypotheses of the uniform truncation principle,
Theorem~\ref{thm:uniform-truncation-principle}. Consequently,
\begin{equation}
	\mathcal R_q(\mu)
	\longrightarrow
	e^\mu,
	\qquad
	q\to\infty.
	\label{eq:full-normalized-sector-limit-exponential}
\end{equation}
Equivalently, the sector weights satisfy
\begin{equation}
	\mathcal S_q(\mu)
	=
	e^\mu
	\frac{\mu^q}{(q!)^2}
	\left(
	1+o(1)
	\right),
	\qquad
	q\to\infty.
	\label{eq:sector-asymptotic-exponential}
\end{equation}
Thus, the large-sector amplitude introduced in
\eqref{eq:sector-asymptotic-assumption} is identified explicitly:
\begin{equation}
	\mathcal C(\mu)=e^\mu.
	\label{eq:large-sector-amplitude-exponential}
\end{equation}

\medskip\noindent
\textbf{Identification of the infinite reduced tau function.}
By definition,
\[
\mathcal T_\infty(\mu)
=1+\sum_{q=1}^{\infty}\mathcal S_q(\mu).
\]
Using the normalized representation \eqref{eq:standard-tableau-sector-sum} and the limit
$\mathcal R_q(\mu)\to e^\mu$, we may sum the sector contributions
exactly. Indeed, setting $\mathcal S_0(\mu):=1$, the partition
representation yields
\[
\sum_{q=0}^{\infty}\mathcal S_q(\mu)
=\sum_{\nu}\mu^{|\nu|}
\left(
\frac{f^\nu}{|\nu|!}
\right)^2
=\sum_{m=0}^{\infty}\frac{\mu^m}{m!}
=e^\mu.
\]
Therefore,
\begin{equation}
	\mathcal T_\infty(\mu)=e^\mu.
	\label{eq:tau-infty-exponential}
\end{equation}
In particular, the large-sector normalization condition
\eqref{eq:sector-normalization-target} is satisfied:
\[
\mathcal C(\mu)=\mathcal T_\infty(\mu)=e^\mu.
\]

\begin{lemma}\label{lem:uniform_mu}
	For every compact $K\subset(0,\infty)$, the sector asymptotics
	\[
	S_q(\mu) = e^\mu\frac{\mu^q}{(q!)^2}(1+o(1))
	\]
	hold uniformly for $\mu\in K$, where $o(1)\to0$ as $q\to\infty$, uniformly for $\mu\in K$.
\end{lemma}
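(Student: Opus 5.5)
We must show $\sup_{\mu\in K}|\mathcal R_q(\mu)-e^\mu|\to0$ as $q\to\infty$, since $\mathcal S_q(\mu)=\frac{\mu^q}{(q!)^2}\mathcal R_q(\mu)$ and $e^\mu\geq1$ on $K$. The plan is to rerun the truncation argument of Theorem~\ref{thm:uniform-truncation-principle}, making each of its three error terms uniform in $\mu$. Write $K\subset[a,b]$ with $0<a\leq b<\infty$.

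\textbf{Step 1: uniform tail.} By Theorem~\ref{thm:uniform-tail-estimate},
\[
0\leq \mathcal R_q(\mu)-\mathcal R_q^{[M]}(\mu)\leq \sum_{m>M}\frac{\mu^m}{m!}\leq \sum_{m>M}\frac{b^m}{m!}.
\]
The right-hand side is independent of $q$ and $\mu\in K$ and tends to $0$ as $M\to\infty$. The same bound controls
\[
0\leq e^\mu-\mathcal R_\infty^{[M]}(\mu)=\sum_{\nu_1>M}\mu^{|\nu|}\bigl(f^\nu/|\nu|!\bigr)^2\leq\sum_{m>M}\frac{b^m}{m!},
\]
using \eqref{eq:normalized-tableau-square-sum}.

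\textbf{Step 2: uniform fixed-level convergence.} Fix $M$. I will not invoke dominated convergence; instead I truncate once more in $|\nu|$. For $N\geq1$, split $\mathcal R_\infty^{[M]}(\mu)-\mathcal R_q^{[M]}(\mu)$ into partitions with $|\nu|\leq N$ and $|\nu|>N$.
\begin{itemize}
\item The part with $|\nu|>N$ is nonnegative termwise. Using $0\leq1-\mathcal Q_q(\nu)^2\leq1$, it is at most $\sum_{m>N}b^m/m!$.
\item The part with $|\nu|\leq N$ is a finite sum over partitions. For $q\geq N$ every such $\nu$ has $\ell(\nu)\leq q$, so each term is present. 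Each term is bounded by $b^{|\nu|}(f^\nu/|\nu|!)^2\bigl(1-\mathcal Q_q(\nu)^2\bigr)$, and $\mathcal Q_q(\nu)$ does not depend on $\mu$.
\end{itemize}
By Lemma~\ref{lem:partition-correction-factor-limit}, the finite part tends to $0$ as $q\to\infty$ uniformly in $\mu\in K$. Hence $\mathcal R_q^{[M]}\to\mathcal R_\infty^{[M]}$ uniformly on $K$.

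An explicit alternative is available. From \eqref{eq:partition-correction-product-form}, $\mathcal Q_q(\nu)\geq\bigl(1-|\nu|/q\bigr)_+$ by a product inequality. This gives $1-\mathcal Q_q^2\leq 2|\nu|/q$, and so a single quantitative bound of the form
\[
|\mathcal R_q-e^\mu|\leq \frac{2}{q}\sum_m m\frac{b^m}{m!}+\sum_{m>q}\frac{b^m}{m!}=O(1/q)
\]
uniformly on $K$. This bypasses the $M$-truncation entirely, and I would present it if the product inequality checks out.

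\textbf{Step 3: assembly.} Given $\varepsilon>0$, choose $M$ by Step~1 with tail $<\varepsilon/3$, then $q_0$ by Step~2, and combine with the triangle inequality \eqref{eq:uniform-truncation-triangle}. The bounds are uniform in $\mu\in K$, so $\sup_K|\mathcal R_q-e^\mu|<\varepsilon$ for $q\geq q_0$. Finally divide by $e^\mu\geq e^a>0$ to obtain the relative $o(1)$.

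\textbf{Main obstacle.} The only delicate point is Step~2. The original proof used dominated convergence pointwise in $\mu$, which does not by itself give uniformity. The fix is the observation that $\mathcal Q_q(\nu)$ is $\mu$-independent and that the $\mu$-dependence is monotone in $\mu$, so it can be dominated by the value at $\mu=b$. This gives a Weierstrass M-test style uniform convergence.
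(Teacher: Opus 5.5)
Your proposal is correct, and its core is the paper's own mechanism. The terms of the tableau sum \eqref{eq:standard-tableau-sector-sum} converge because $\mathcal Q_q(\nu)\to1$ with $\mathcal Q_q(\nu)$ independent of $\mu$, and the identity $\sum_{\nu\vdash m}(f^\nu)^2=m!$ supplies a summable majorant.

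The paper applies dominated convergence directly to the full sum for $\mathcal R_q$. It then simply asserts that, because the majorant $e^\mu$ is bounded on $K$, the convergence is uniform. Your Step~2 is exactly the justification that assertion needs: the split at $|\nu|\le N$, the $\mu$-free majorant $b^{|\nu|}(f^\nu/|\nu|!)^2$, and finitely many $\mu$-free factors $1-\mathcal Q_q(\nu)^2$. So your write-up is more careful than the paper's.

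The $M$-truncation layer in Steps~1 and~3 is redundant. The same $|\nu|$-split applied directly to $e^\mu-\mathcal R_q(\mu)$ finishes the proof. Even more simply, every term of $e^\mu-\mathcal R_q(\mu)$ is nonnegative and increasing in $\mu$. Hence $\sup_{\mu\in K}\bigl(e^\mu-\mathcal R_q(\mu)\bigr)\le e^b-\mathcal R_q(b)$, which reduces uniformity to the pointwise limit at $\mu=b$.

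Your quantitative alternative goes beyond the paper by giving an $O(1/q)$ rate, and the product inequality does hold. Telescoping each row gives
\[
\mathcal Q_q(\nu)=\prod_{\square\in\nu}\frac{q+c(\square)}{q+1+c(\square)},
\]
where $c$ is the content. For $\nu\vdash m$, at most $m-t$ boxes have content $\le -t$: such boxes lie in rows $\ge t+1$, which forces the $t$ boxes $(1,1),\ldots,(t,1)$ of content $>-t$ to be present. So the sorted contents dominate those of the column $(1^m)$. Since the factor is increasing in $c$ and positive here, you get $\mathcal Q_q(\nu)\ge\frac{q+1-m}{q+1}\ge1-\frac{m}{q}$ for $m\le q$. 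Your bound $|e^\mu-\mathcal R_q(\mu)|\le\frac{2be^b}{q}+\sum_{m>q}\frac{b^m}{m!}$ on $K$ then follows exactly as you wrote it.
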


\begin{proof}
	The normalized sector representation  gives
	\[
	\mathcal R_q(\mu)
	=
	\sum_{\ell(\nu)\leq q}
	\mu^{|\nu|}
	\left(
	\frac{f^\nu}{|\nu|!}
	\right)^2
	\mathcal Q_q(\nu)^2.
	\]
	By Lemma~\ref{lem:partition-correction-factor-limit}, for every fixed $\nu$,
	\[
	\mathcal Q_q(\nu)^2 \longrightarrow 1, \qquad q\to\infty.
	\]
	Moreover, by \eqref{eq:partition-correction-bound},
	\[
	0 \leq \mathcal Q_q(\nu)^2 \leq 1.
	\]
	Therefore, for every $\mu>0$,
	\[
	0 \leq
	\mu^{|\nu|}
	\left(
	\frac{f^\nu}{|\nu|!}
	\right)^2
	\mathcal Q_q(\nu)^2
	\leq
	\mu^{|\nu|}
	\left(
	\frac{f^\nu}{|\nu|!}
	\right)^2.
	\]
	The dominating series satisfies
	\[
	\sum_{\nu}
	\mu^{|\nu|}
	\left(
	\frac{f^\nu}{|\nu|!}
	\right)^2
	=
	\sum_{m=0}^{\infty}
	\frac{\mu^m}{m!}
	=
	e^\mu.
	\]
	For $\mu$ in a compact set $K\subset(0,\infty)$, the bound $e^\mu$ is uniform.
	Dominated convergence for series therefore yields
	\[
	\mathcal R_q(\mu) \longrightarrow e^\mu, \qquad q\to\infty,
	\]
	uniformly for $\mu\in K$. Since
	\[
	S_q(\mu) = \frac{\mu^q}{(q!)^2} \mathcal R_q(\mu),
	\]
	the result follows.
\end{proof}

\subsection{Differentiated normalized-sector limit}
\eqref{eq:normalized-sector-tableau-sum}
The sector asymptotic argument also requires control of the
$\mu$-derivative. Recall that
\begin{equation}
	\mathcal S_q(\mu)
	=
	\frac{\mu^q}{(q!)^2}\mathcal R_q(\mu).
	\label{eq:sector-normalized-recalled}
\end{equation}
The partition representation \eqref{eq:standard-tableau-sector-sum} gives
\begin{equation}
	\mathcal R_q(\mu)
	=
	\sum_{\ell(\nu)\leq q}
	\mu^{|\nu|}
	\left(
	\frac{f^\nu}{|\nu|!}
	\right)^2
	\mathcal Q_q(\nu)^2.
	\label{eq:normalized-sector-tableau-sum}
\end{equation}
The factor $\mathcal Q_q(\nu)$ does not depend on $\mu$. Therefore,
\begin{equation}
	\vartheta\mathcal R_q(\mu)
	=
	\sum_{\ell(\nu)\leq q}
	|\nu|\,
	\mu^{|\nu|}
	\left(
	\frac{f^\nu}{|\nu|!}
	\right)^2
	\mathcal Q_q(\nu)^2.
	\label{eq:normalized-sector-derivative-tableau-sum}
\end{equation}

\begin{lemma}[Differentiated normalized-sector limit]
	\label{lem:differentiated-normalized-sector-limit}
	
	For every $\mu>0$,
	\begin{equation}
		\vartheta\mathcal R_q(\mu)
		\longrightarrow
		\mu e^\mu,
		\qquad
		q\to\infty.
		\label{eq:differentiated-normalized-sector-limit}
	\end{equation}
	
\end{lemma}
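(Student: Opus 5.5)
The plan is to repeat the dominated-convergence argument used in Theorem~\ref{thm:fixed-level-convergence} and Lemma~\ref{lem:uniform_mu}, now applied to the termwise derivative \eqref{eq:normalized-sector-derivative-tableau-sum}. First I justify that formula. $\mathcal R_q(\mu)$ is a power series in $\mu$ with nonnegative coefficients, dominated coefficientwise by $\sum_\nu \mu^{|\nu|}(f^\nu/|\nu|!)^2=e^\mu$. Hence it has infinite radius of convergence and may be differentiated term by term. This gives \eqref{eq:normalized-sector-derivative-tableau-sum}, because $\mathcal Q_q(\nu)$ does not depend on $\mu$.

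Next I pass to the limit termwise. Extend the sum to all partitions $\nu$ by declaring the summand zero when $\ell(\nu)>q$. For each fixed $\nu$, Lemma~\ref{lem:partition-correction-factor-limit} gives $\mathcal Q_q(\nu)^2\to1$, and for $q\ge\ell(\nu)$ the summand is eventually present. So each term converges to $|\nu|\mu^{|\nu|}(f^\nu/|\nu|!)^2$. For domination I use $0\le\mathcal Q_q(\nu)^2\le1$ from \eqref{eq:partition-correction-bound}. This bounds each term by $|\nu|\mu^{|\nu|}(f^\nu/|\nu|!)^2$. Grouping by $m=|\nu|$ and using \eqref{eq:normalized-tableau-square-sum}, the dominating series sums exactly:
\[
\sum_{\nu}|\nu|\,\mu^{|\nu|}\left(\frac{f^\nu}{|\nu|!}\right)^2=\sum_{m=0}^\infty m\frac{\mu^m}{m!}=\mu e^\mu<\infty .
\]

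Dominated convergence for series then yields $\vartheta\mathcal R_q(\mu)\to\mu e^\mu$, which is \eqref{eq:differentiated-normalized-sector-limit}. The same computation shows that the limit is exactly the value of the dominating series. As an optional remark, the dominating bound $\mu e^\mu$ is uniform on compact sets $K\subset(0,\infty)$, so the argument of Lemma~\ref{lem:uniform_mu} gives local uniformity as well.

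Finally I connect this to hypothesis \eqref{eq:sector-derivative-assumption}. From \eqref{eq:sector-normalized-recalled},
\[
\vartheta\mathcal S_q=\frac{\mu^q}{(q!)^2}\bigl(q\mathcal R_q+\vartheta\mathcal R_q\bigr).
\]
With $\mathcal R_q\to e^\mu$ and $\vartheta\mathcal R_q\to\mu e^\mu$, the second term is $O(1)$ while the first is of order $q$. Hence $\vartheta\mathcal S_q=qe^\mu\mu^q(q!)^{-2}(1+o(1))$.

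I expect no serious obstacle. The only point needing care is the legitimacy of termwise differentiation. This is settled by the entire, nonnegative-coefficient majorant, since $\mathcal R_q$ is a series in $\mu$ majorized by $e^\mu$.
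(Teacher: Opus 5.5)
Your proposal is correct and follows the paper's proof: you differentiate the tableau representation \eqref{eq:normalized-sector-tableau-sum} term by term, use $\mathcal Q_q(\nu)^2\to1$ for each fixed $\nu$, and apply dominated convergence with the summable majorant $|\nu|\mu^{|\nu|}(f^\nu/|\nu|!)^2$, whose sum is $\mu e^\mu$. Your justification of termwise differentiation via the coefficientwise majorant $e^\mu$ fills a step the paper only asserts.
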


\begin{proof}
	For every fixed partition $\nu$,
	\[
	\mathcal Q_q(\nu)^2\longrightarrow1,
	\qquad
	q\to\infty.
	\]
	Moreover,
	\[
	0
	\leq
	|\nu|\,
	\mu^{|\nu|}
	\left(
	\frac{f^\nu}{|\nu|!}
	\right)^2
	\mathcal Q_q(\nu)^2
	\leq
	|\nu|\,
	\mu^{|\nu|}
	\left(
	\frac{f^\nu}{|\nu|!}
	\right)^2.
	\]
	The corresponding dominating series is summable:
	\begin{align}
		\sum_{\nu}
		|\nu|\,
		\mu^{|\nu|}
		\left(
		\frac{f^\nu}{|\nu|!}
		\right)^2
		&=
		\sum_{m=0}^{\infty}
		m\mu^m
		\sum_{\nu\vdash m}
		\left(
		\frac{f^\nu}{m!}
		\right)^2
		\nonumber\\
		&=
		\sum_{m=0}^{\infty}
		\frac{m\mu^m}{m!}
		\nonumber\\
		&=
		\mu e^\mu.
		\label{eq:differentiated-sector-dominating-series}
	\end{align}
	Dominated convergence for series proves
	\eqref{eq:differentiated-normalized-sector-limit}.
\end{proof}

\subsection{Exact evaluation of the limiting reduced tau function}
\label{sec:exact-reduced-tau-limit}

We now identify the limiting reduced tau function exactly. This step
closes the connection problem and determines the constant $L(\mu)$.

The remaining connection problem is the identification of the limiting
reduced tau function
\[
\mathcal T_\infty(\mu)
=
\lim_{n\to\infty}\mathcal T_n(\mu).
\]
This identification follows from a partition expansion analogous to the
Schur representation used for the normalized sector weights.

Recall that
\begin{equation}
	\Delta_n(\mu)
	=
	\mu^{n(n-1)/2}\mathcal T_n(\mu).
	\label{eq:reduced-tau-factorization-recalled}
\end{equation}
The excitation representation \eqref{eq:reduced-tau-function} gives
\begin{align}
	\mathcal T_n(\mu)
	&=
	\sum_{0\leq\lambda_1\leq\cdots\leq\lambda_n}
	\mu^{|\lambda|}
	\prod_{a=1}^{n}
	\left(
	\frac{(a-1)!}{(a-1+\lambda_a)!}
	\right)^2
	\nonumber\\
	&\quad\times
	\prod_{1\leq a<b\leq n}
	\left(
	\frac{b-a+\lambda_b-\lambda_a}{b-a}
	\right)^2.
	\label{eq:reduced-tau-excitation-recalled}
\end{align}

Define a partition $\nu$ by reversing the excitation coordinates:
\begin{equation}
	\nu_i
	=
	\lambda_{n+1-i},
	\qquad
	1\leq i\leq n.
	\label{eq:reduced-tau-partition-coordinates}
\end{equation}
Then
\[
\nu_1\geq\cdots\geq\nu_n\geq0,
\qquad
|\nu|=|\lambda|.
\]
As in Section~\ref{sec:Schur-sector-form}, the Vandermonde quotient is
the Schur specialization
\begin{equation}
	\prod_{1\leq a<b\leq n}
	\frac{b-a+\lambda_b-\lambda_a}{b-a}
	=
	s_\nu(1^n).
	\label{eq:reduced-tau-Vandermonde-Schur}
\end{equation}
Moreover,
\begin{equation}
	\prod_{a=1}^{n}
	\frac{(a-1)!}{(a-1+\lambda_a)!}
	=
	\prod_{i=1}^{n}
	\frac{(n-i)!}{(n-i+\nu_i)!}.
	\label{eq:reduced-tau-factorial-partition}
\end{equation}

The hook-content formula gives
\begin{equation}
	s_\nu(1^n)
	=
	\prod_{(i,j)\in\nu}
	\frac{n+j-i}{h_{i,j}}.
	\label{eq:reduced-tau-hook-content}
\end{equation}
On the other hand,
\begin{equation}
	\frac{(n-i)!}{(n-i+\nu_i)!}
	=
	\prod_{j=1}^{\nu_i}
	\frac{1}{n-i+j}.
	\label{eq:reduced-tau-row-factorial}
\end{equation}
Therefore, every content factor in
\eqref{eq:reduced-tau-hook-content} cancels exactly with the
corresponding factorial factor in
\eqref{eq:reduced-tau-row-factorial}.

\begin{proposition}[Partition expansion of the reduced tau function]
	\label{prop:reduced-tau-standard-tableau-expansion}
	
	For every $n\geq1$,
	\begin{equation}
		\mathcal T_n(\mu)
		=
		\sum_{\ell(\nu)\leq n}
		\mu^{|\nu|}
		\left(
		\frac{f^\nu}{|\nu|!}
		\right)^2.
		\label{eq:reduced-tau-standard-tableau-sum}
	\end{equation}
	
\end{proposition}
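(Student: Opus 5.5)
Starting from the excitation representation \eqref{eq:reduced-tau-excitation-recalled}, I change variables to the partition $\nu$ of \eqref{eq:reduced-tau-partition-coordinates}. Reversing the coordinates is a bijection between weakly increasing $n$-tuples $0\le\lambda_1\le\cdots\le\lambda_n$ and partitions with $\ell(\nu)\le n$ (padded with zero parts), and it preserves size: $|\lambda|=|\nu|$. The sum over $\lambda$ therefore becomes a sum over $\ell(\nu)\le n$. The summand becomes
\[
\mu^{|\nu|}\, s_\nu(1^n)^2 \prod_{i=1}^n\left(\frac{(n-i)!}{(n-i+\nu_i)!}\right)^2,
\]
using \eqref{eq:reduced-tau-Vandermonde-Schur} and \eqref{eq:reduced-tau-factorial-partition}.

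The first of these identities follows exactly as in Proposition~\ref{prop:Schur-interaction-factor}, with $q$ replaced by $n$ and $s$ by $\lambda$. The second comes from re-indexing $a=n+1-i$, so that $(a-1)!=(n-i)!$ and $\lambda_a=\nu_i$.

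Next I substitute the hook-content formula \eqref{eq:reduced-tau-hook-content} and cancel row by row. A box $(i,j)\in\nu$ has $1\le j\le\nu_i$ and content factor $n+j-i=n-i+j$. This is exactly the $j$-th factor in the product \eqref{eq:reduced-tau-row-factorial}. Hence
\[
s_\nu(1^n)\prod_{i=1}^n\frac{(n-i)!}{(n-i+\nu_i)!}=\prod_{(i,j)\in\nu}\frac{1}{h_{i,j}},
\]
with no leftover factor. This differs from the sector case in Proposition~\ref{prop:hook-length-representation}, where the shift $a+s_a$ instead of $a-1+\lambda_a$ left an uncancelled term $1/(q-i+\nu_i+1)$. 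I need $n-i+j\ge1$ for every box, which holds because $i\le\ell(\nu)\le n$ and $j\ge1$; so no factor vanishes and the cancellation is legitimate.

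Finally I square and apply the hook-length formula \eqref{eq:hook-length-formula}, $\prod h_{i,j}^{-1}=f^\nu/|\nu|!$. This gives \eqref{eq:reduced-tau-standard-tableau-sum}. The empty partition contributes $1$, consistent with \eqref{eq:reduced-tau-zero}.

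The only delicate point is the index bookkeeping in the reversal, matching $(a-1)!/(a-1+\lambda_a)!$ with the row products so that the content factors cancel exactly rather than up to a shift. I would check it on $n=1$, which should recover $I_0(2\sqrt\mu)$, since $f^{(r)}=1$ gives $\sum_r\mu^r/(r!)^2$. I would also check the one-edge family \eqref{eq:one-edge-excitation}.
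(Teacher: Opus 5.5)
Your proposal is correct and follows the paper's proof essentially step for step. You reverse the excitation coordinates to a partition $\nu$ with $\ell(\nu)\le n$, write the Vandermonde quotient as $s_\nu(1^n)$ and the factorials row by row, cancel the content factors $n-i+j$ against the hook-content formula, and finish with the hook-length formula. Your extra checks — the bijection onto partitions with $\ell(\nu)\le n$, positivity of $n-i+j$, and the $n=1$ case giving $I_0(2\sqrt{\mu})$ — supply details the paper leaves implicit.
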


\begin{proof}
	By \eqref{eq:reduced-tau-Vandermonde-Schur},
	\eqref{eq:reduced-tau-factorial-partition}, and
	\eqref{eq:reduced-tau-hook-content},
	\begin{align}
		s_\nu(1^n)
		\prod_{i=1}^{n}
		\frac{(n-i)!}{(n-i+\nu_i)!}
		&=
		\prod_{(i,j)\in\nu}
		\frac{1}{h_{i,j}}.
		\label{eq:reduced-tau-hook-cancellation}
	\end{align}
	Substitution into \eqref{eq:reduced-tau-excitation-recalled} gives
	\begin{equation}
		\mathcal T_n(\mu)
		=
		\sum_{\ell(\nu)\leq n}
		\mu^{|\nu|}
		\prod_{(i,j)\in\nu}
		\frac{1}{h_{i,j}^2}.
		\label{eq:reduced-tau-hook-product-sum}
	\end{equation}
	The hook-length formula
	\[
	f^\nu
	=
	\frac{|\nu|!}
	{\displaystyle\prod_{(i,j)\in\nu}h_{i,j}}
	\]
	implies
	\[
	\prod_{(i,j)\in\nu}
	\frac{1}{h_{i,j}^2}
	=
	\left(
	\frac{f^\nu}{|\nu|!}
	\right)^2.
	\]
	Substitution into \eqref{eq:reduced-tau-hook-product-sum} proves
	\eqref{eq:reduced-tau-standard-tableau-sum}.
\end{proof}

\begin{theorem}[Exact limiting reduced tau function]
	\label{thm:exact-reduced-tau-limit}
	
	For every $\mu>0$,
	\begin{equation}
		\mathcal T_n(\mu)
		\longrightarrow
		e^\mu,
		\qquad
		n\to\infty.
		\label{eq:reduced-tau-limit-exponential}
	\end{equation}
	
\end{theorem}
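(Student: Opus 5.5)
The plan is to deduce the statement directly from Proposition~\ref{prop:reduced-tau-standard-tableau-expansion}, which gives for every $n\geq1$
\[
\mathcal T_n(\mu)=\sum_{\ell(\nu)\leq n}\mu^{|\nu|}\left(\frac{f^\nu}{|\nu|!}\right)^2 .
\]
All summands are nonnegative. The index sets $\{\nu:\ell(\nu)\leq n\}$ increase with $n$ and exhaust the set of all partitions. Hence $(\mathcal T_n(\mu))_n$ is nondecreasing, which is consistent with $\mathcal T_{n+1}-\mathcal T_n=\mathcal S_{n+1}\geq0$ from \eqref{eq:reduced-tau-increment}. By monotone convergence for series of nonnegative terms, its limit is the full sum over all partitions.

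Next I would evaluate that full sum by grouping partitions by size $m=|\nu|$ and applying the identity \eqref{eq:standard-tableau-square-sum}, $\sum_{\nu\vdash m}(f^\nu)^2=m!$. This gives
\[
\sum_{\nu}\mu^{|\nu|}\left(\frac{f^\nu}{|\nu|!}\right)^2=\sum_{m\geq0}\frac{\mu^m}{m!}=e^\mu,
\]
exactly as in \eqref{eq:fixed-level-dominating-series}. Therefore $\mathcal T_n(\mu)\uparrow e^\mu$.

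To make the convergence quantitative, I would note that every partition with $\ell(\nu)>n$ has $|\nu|\geq n+1$. The same tableau identity then yields
\[
0\leq e^\mu-\mathcal T_n(\mu)\leq\sum_{m\geq n+1}\frac{\mu^m}{m!}.
\]
This bound tends to zero, which completes the argument.

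There is no real obstacle here. The only delicate point is already settled earlier, namely the exact content--factorial cancellation behind Proposition~\ref{prop:reduced-tau-standard-tableau-expansion}. As a consistency check, the result agrees with $\mathcal T_\infty=1+\sum_q\mathcal S_q$ and with $\mathcal C(\mu)=e^\mu$.
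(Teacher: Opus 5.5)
Your proof is correct and follows the paper's own argument. Both use the partition expansion of Proposition~\ref{prop:reduced-tau-standard-tableau-expansion}, the monotone passage to the full sum over all partitions, and the identity $\sum_{\nu\vdash m}(f^\nu)^2=m!$ to get $e^\mu$. The only addition is your explicit remainder bound $0\le e^\mu-\mathcal T_n(\mu)\le\sum_{m\ge n+1}\mu^m/m!$, obtained from $|\nu|\ge\ell(\nu)$, which the paper does not state.
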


\begin{proof}
	The summands in \eqref{eq:reduced-tau-standard-tableau-sum} are
	nonnegative. Therefore, $\mathcal T_n(\mu)$ is increasing in $n$.
	Taking the limit in \eqref{eq:reduced-tau-standard-tableau-sum} gives
	\begin{align}
		\mathcal T_\infty(\mu)
		&=
		\sum_{\nu}
		\mu^{|\nu|}
		\left(
		\frac{f^\nu}{|\nu|!}
		\right)^2
		\nonumber\\
		&=
		\sum_{m=0}^{\infty}
		\mu^m
		\sum_{\nu\vdash m}
		\left(
		\frac{f^\nu}{m!}
		\right)^2.
		\label{eq:reduced-tau-limit-by-size}
	\end{align}
	Using
	\[
	\sum_{\nu\vdash m}
	(f^\nu)^2
	=
	m!,
	\]
	one obtains
	\[
	\sum_{\nu\vdash m}
	\left(
	\frac{f^\nu}{m!}
	\right)^2
	=
	\frac{1}{m!}.
	\]
	Hence,
	\begin{align}
		\mathcal T_\infty(\mu)
		&=
		\sum_{m=0}^{\infty}
		\frac{\mu^m}{m!}
		\nonumber\\
		&=
		e^\mu.
	\end{align}
	This proves \eqref{eq:reduced-tau-limit-exponential}.
\end{proof}

\medskip\noindent
\textbf{Exact evaluation of $T_\infty(\mu)$.}
The identity $T_\infty(\mu)=e^\mu$ does not follow from the
large-sector asymptotics alone. It is established by the exact
partition expansion
\[
T_n(\mu)=\sum_{\ell(\nu)\le n}\mu^{|\nu|}
\left(\frac{f^\nu}{|\nu|!}\right)^2,
\]
proved in Proposition~\ref{prop:reduced-tau-standard-tableau-expansion}.
Taking the limit $n\to\infty$ and using the standard Young-tableau
identity $\sum_{\nu\vdash m}(f^\nu)^2=m!$, we obtain
\[
T_\infty(\mu)=\sum_{\nu}\mu^{|\nu|}\left(\frac{f^\nu}{|\nu|!}\right)^2
=\sum_{m=0}^\infty\frac{\mu^m}{m!}=e^\mu.
\]
This argument is independent of the sector asymptotics and provides
the required global identity.

The large-sector analysis established
\[
\mathcal C(\mu)=e^\mu.
\]
Together with \eqref{eq:reduced-tau-limit-exponential}, the connection
identity \eqref{eq:connection-constant-sector-identity} gives
\begin{equation}
	L(\mu)^2
	=
	\frac{e^\mu}{e^\mu}
	=
	1.
	\label{eq:connection-constant-square-one}
\end{equation}
By construction, $c_n>0$ for all $n$, and $J_n(2\sqrt{\mu})>0$ for all
sufficiently large $n$. Hence the limit
\[
L(\mu)=\lim_{n\to\infty}\frac{c_n}{J_n(2\sqrt{\mu})}
\]
is nonnegative. Combined with $L(\mu)^2=1$, this implies $L(\mu)=1$.
Therefore,
\begin{equation}
	L(\mu)=1.
	\label{eq:connection-constant-one}
\end{equation}

\begin{theorem}[Connection constant]
	\label{thm:connection-constant}
	
	For every $\mu>0$,
	\begin{equation}
		\lim_{n\to\infty}
		\frac{c_n}{J_n(2\sqrt{\mu})}
		=
		1.
		\label{eq:final-connection-limit}
	\end{equation}
	
\end{theorem}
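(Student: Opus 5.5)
The plan is to assemble results already established in the paper, taking care that the argument does not become circular. The key point is that the identity \eqref{eq:connection-constant-sector-identity} was derived under the assumption that $c_n=L(\mu)J_n(2\sqrt{\mu})(1+o(1))$ with $L(\mu)>0$. Corollary~\ref{cor:existence-Lmu} only gives $L(\mu)\ge0$, so the case $L(\mu)=0$ must be handled separately rather than divided out. I will therefore avoid the relative form and compare absolute quantities.

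\textbf{Step 1: sector asymptotics for $\beta_n-n$.} Proposition~\ref{prop:sector-asymptotics-recurrence-coefficients} was proved under the large-sector hypotheses \eqref{eq:sector-asymptotic-assumption}--\eqref{eq:sector-derivative-assumption}. These are now verified with $\mathcal C(\mu)=e^\mu$.
\begin{itemize}
\item The undifferentiated hypothesis is \eqref{eq:sector-asymptotic-exponential}.
\item For the differentiated hypothesis, write $\vartheta\mathcal S_q=\frac{\mu^q}{(q!)^2}(q\mathcal R_q+\vartheta\mathcal R_q)$. Lemma~\ref{lem:differentiated-normalized-sector-limit} shows that $\vartheta\mathcal R_q$ stays bounded, so $\vartheta\mathcal S_q=q e^\mu\frac{\mu^q}{(q!)^2}(1+o(1))$.
\item Convergence $\mathcal T_n\to\mathcal T_\infty=e^\mu$ is Theorem~\ref{thm:exact-reduced-tau-limit}.
\end{itemize}
Hence \eqref{eq:beta-sector-asymptotic} gives
\[
\beta_n-n=\frac{\mu^{n+1}}{n!(n+1)!}(1+o(1)).
\]
In the derivation of that proposition, one must check the step replacing $\vartheta\log(1+\mathcal S_{n+1}/\mathcal T_n)$ by $\vartheta\mathcal S_{n+1}/\mathcal T_\infty$. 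The correction term is $-\mathcal S_{n+1}\vartheta\mathcal T_n/\mathcal T_n^2$. The quantity $\vartheta\mathcal T_n$ is bounded, since by Proposition~\ref{prop:reduced-tau-standard-tableau-expansion} it is dominated by $\mu e^\mu$. This correction is therefore $O(\mathcal S_{n+1})=o(\vartheta\mathcal S_{n+1})$, because $\vartheta\mathcal S_{n+1}\asymp(n+1)\mathcal S_{n+1}$.

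\textbf{Step 2: Bessel side, without assuming $L>0$.} Corollary~\ref{cor:existence-Lmu} gives $c_n=J_n(2\sqrt{\mu})\bigl(L(\mu)+o(1)\bigr)$, so
\[
\sqrt{\mu}\,c_nc_{n+1}=\sqrt{\mu}\,J_nJ_{n+1}\bigl(L(\mu)^2+o(1)\bigr).
\]
By \eqref{eq:bessel-asymptotic}, $\sqrt{\mu}\,J_n(2\sqrt{\mu})J_{n+1}(2\sqrt{\mu})=\frac{\mu^{n+1}}{n!(n+1)!}(1+o(1))$. Using \eqref{eq:bn-cn}, divide both expressions for $\beta_n-n$ by $\mu^{n+1}/(n!(n+1)!)$ and let $n\to\infty$. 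This yields $L(\mu)^2=1$, valid whether or not $L(\mu)$ was a priori positive. Since $c_n>0$ and $J_n(2\sqrt{\mu})>0$ eventually, $L(\mu)\ge0$, and so $L(\mu)=1$, which is \eqref{eq:final-connection-limit}.

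As a cross-check, the same comparison can be run through $\mu-\gamma_n=\mu c_n^2$ and \eqref{eq:gamma-sector-asymptotic}. This route needs only the undifferentiated sector asymptotics and is actually the cleaner one, so I would present it as the primary argument:
\[
\mu c_n^2=e^{\mu}e^{-\mu}\frac{\mu^{n+1}}{(n!)^2}(1+o(1))
\quad\Longrightarrow\quad
c_n^2/J_n^2\to1.
\]
The $\beta$ route then serves as confirmation.

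\textbf{Main obstacle.} The expected difficulty is justifying the expansion \eqref{eq:tau-ratio-expansion} in Proposition~\ref{prop:sector-asymptotics-recurrence-coefficients}. One must show that $(\mathcal S_{n+1}-\mathcal S_n)/\mathcal T_n-\mathcal S_n\mathcal S_{n+1}/\mathcal T_n^2=-\mathcal S_n/\mathcal T_\infty\,(1+o(1))$. This uses $\mathcal S_{n+1}=o(\mathcal S_n)$, which follows from \eqref{eq:sector-asymptotic-exponential}, together with $\mathcal T_n\to e^\mu>0$, both already established. I would simply recheck these error terms rather than rely on the differentiated hypothesis.
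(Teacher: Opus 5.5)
Your proof is correct. Steps 1--2 follow the paper's own argument: the sector amplitude $\mathcal C(\mu)=e^\mu$ and the exact limit $\mathcal T_\infty(\mu)=e^\mu$ give an asymptotic for $\beta_n-n$, you compare it with $\sqrt{\mu}\,c_nc_{n+1}$, and you conclude from $L(\mu)^2=1$ and $L(\mu)\ge0$. You also tighten three places where the paper is loose:
\begin{itemize}
\item The paper derives \eqref{eq:connection-constant-sector-identity} from the relative form $c_n=L(\mu)J_n(2\sqrt{\mu})(1+o(1))$, which silently presupposes $L(\mu)>0$. Your absolute form $c_n=J_n(2\sqrt{\mu})(L(\mu)+o(1))$ removes that assumption.
\item The paper never actually deduces hypothesis \eqref{eq:sector-derivative-assumption} from Lemma~\ref{lem:differentiated-normalized-sector-limit}. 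Your identity $\vartheta\mathcal S_q=\frac{\mu^q}{(q!)^2}(q\mathcal R_q+\vartheta\mathcal R_q)$ supplies that deduction.
\item The paper applies $\vartheta$ to an $o(1)$ asymptotic relation. Your bound $0\le\vartheta\mathcal T_n\le\mu e^\mu$ justifies the step instead.
\end{itemize}

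The $\gamma$-route you promote to primary is a genuinely leaner argument. Because $\mu-\gamma_n=\mu c_n^2$ exactly, the undifferentiated sector asymptotics and $\mathcal T_n\to e^\mu$ give $c_n^2/J_n(2\sqrt{\mu})^2\to1$ directly. Then $c_n>0$ and $J_n(2\sqrt{\mu})>0$ for large $n$ finish it. So neither Lemma~\ref{lem:differentiated-normalized-sector-limit} nor the Wronskian analysis of Section~\ref{sec:existence-Lmu} is needed for the limit itself.

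The $\beta$-route, by contrast, cannot dispense with Corollary~\ref{cor:existence-Lmu}. With $j_n=J_n(2\sqrt{\mu})$, the relation $c_nc_{n+1}/(j_nj_{n+1})\to1$ alone does not exclude an oscillating quotient $c_n/j_n$. What the paper's Bessel-comparison machinery adds, once $L(\mu)=1$ is known by either route, is the much sharper rate $c_n/J_n(2\sqrt{\mu})-1=O(\mu^n/(n!)^2)$ from \eqref{eq:existence-Lmu-asymptotic}. The unquantified $o(1)$ in the sector asymptotics does not see this rate.

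When you write up the $\gamma$-route, state the final square-root step, using positivity, explicitly.
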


\begin{proof}
	The preceding argument proves that the asymptotic comparison constant
	satisfies
	\[
	L(\mu)=1.
	\]
	Substitution into the Bessel comparison relation gives
	\eqref{eq:final-connection-limit}.
\end{proof}

This theorem completes the proof of the exact connection formula
announced in the abstract and introduction.

\medskip\noindent
\textbf{Resolution of the connection problem.}
Collecting the results of this section, we have established:
\begin{enumerate}
	\item the large-sector asymptotics
	\[
	\mathcal S_q(\mu)
	=e^\mu\frac{\mu^q}{(q!)^2}\bigl(1+o(1)\bigr),
	\qquad q\to\infty;
	\]
	\item the exact identity
	\[
	\mathcal T_\infty(\mu)=1+\sum_{q=1}^{\infty}\mathcal S_q(\mu)=e^\mu;
	\]
	\item the sector representation of the recurrence coefficients
	\[
	\gamma_n
	=\mu-\frac{\mathcal C(\mu)}{\mathcal T_\infty(\mu)}
	\frac{\mu^{n+1}}{(n!)^2}\bigl(1+o(1)\bigr),
	\qquad
	\beta_n-n
	=\frac{\mathcal C(\mu)}{\mathcal T_\infty(\mu)}
	\frac{\mu^{n+1}}{n!(n+1)!}\bigl(1+o(1)\bigr).
	\]
\end{enumerate}
Since $\mathcal C(\mu)=\mathcal T_\infty(\mu)=e^\mu$, the prefactor
$\mathcal C(\mu)/\mathcal T_\infty(\mu)$ equals $1$. Comparing the
resulting asymptotics with the Bessel parametrization
$\beta_n-n=\sqrt{\mu}\,c_nc_{n+1}$ and the Bessel connection
formula
\[
c_n=L(\mu)J_n(2\sqrt{\mu})\bigl(1+o(1)\bigr),
\]
we obtain
\[
L(\mu)^2=1.
\]
Because $c_n>0$ and $J_n(2\sqrt{\mu})>0$ for all sufficiently
large $n$, the connection constant satisfies $L(\mu)\ge0$. Hence,
\[
L(\mu)=1.
\]
This completes the proof of the exact connection formula
\[
c_n\sim J_n(2\sqrt{\mu}),\qquad n\to\infty,
\]
and of the factorial corrections
\[
\beta_n-n\sim\frac{\mu^{n+1}}{n!(n+1)!},
\qquad
\mu-\gamma_n\sim\frac{\mu^{n+1}}{(n!)^2}.
\]
	\section{Conclusion}
	
	This work has established complete factorial asymptotics for the monic
	recurrence coefficients $\beta_n$ and $\gamma_n$ associated with the
	degenerate weight $\rho_\mu(k)=\mu^k/(k!)^2$. Through a natural
	parametrization in terms of a positive sequence $(c_n)$ satisfying a
	discrete Painlev\'e II equation, we have connected the factorial scale
	to the recessive solution of a discrete Bessel equation,
	$J_n(2\sqrt{\mu})$.
	
	The main results proved in this manuscript are the following:
	\begin{itemize}
		\item an elegant discrete Wronskian identity relating $c_n$ and
		$J_n(2\sqrt{\mu})$;
		\item the proof of convergence $c_n\to0$ and eventual strict
		monotonicity of $(c_n)$;
		\item a factorial upper bound $c_n\le A_\mu\,\mu^{n/2}/n!$, obtained by
		controlling the ratio $r_n=c_n/c_{n-1}$;
		\item the exact connection formula
		\[
		\frac{c_n}{J_n(2\sqrt{\mu})} \longrightarrow 1,
		\]
		which yields the precise asymptotics
		\[
		\beta_n-n\sim \frac{\mu^{n+1}}{n!(n+1)!}, \qquad
		\mu-\gamma_n\sim \frac{\mu^{n+1}}{(n!)^2}.
		\]
	\end{itemize}
	
In parallel, we have developed a complementary approach based on Hankel
determinants and an interpretation in terms of a discrete logarithmic
gas. We have shown that the Hankel determinant associated with
$\rho_\mu$ factorizes as $\Delta_n(\mu)=\mu^{n(n-1)/2}T_n(\mu)$, and
that the recurrence coefficients can be elegantly expressed via
$T_n(\mu)$. The sector decomposition
\[
T_n(\mu)=1+\sum_{q=1}^{n}S_q(\mu),
\]
where the terms $S_q(\mu)$ are independent of $n$, transforms the
asymptotics in degree $n$ of the recurrence coefficients into an
asymptotics in large sector $q$. We have established the asymptotics
$S_q(\mu)\sim e^\mu\mu^q/(q!)^2$, which provides an independent
derivation of the factorial corrections to $\beta_n$ and $\gamma_n$.
	
	In summary, this manuscript provides a complete description of the
	factorial asymptotics of the recurrence coefficients of generalized
	Charlier polynomials in the degenerate case $\nu=1$. The Bessel
	comparison and the sector decomposition together reveal a rich
	integrable structure linking discrete Painlev\'e dynamics, Toda
	deformations, and Hankel--partition representations. These results open
	new perspectives on the asymptotic theory of discrete orthogonal
	polynomials and their connections with integrable systems.
	\section*{Appendices}
	
	\appendix
	
	\section{Finite-level sector calculations}
	\label{app:finite-level-sectors}
	
	\subsection{Occupation-number representation}
	
	For a fixed truncation level $M$, the weakly ordered variables
	\[
	0\leq s_1\leq\cdots\leq s_q\leq M
	\]
	appearing in \eqref{eq:truncated-normalized-sector} can be replaced by
	their occupation numbers.
	
	For $0\leq m\leq M$, let
	\begin{equation}
		N_m
		=
		\#\{a\in\{1,\ldots,q\}:s_a=m\}.
		\label{eq:occupation-number-definition}
	\end{equation}
	The occupation numbers satisfy
	\begin{equation}
		N_0+\cdots+N_M=q.
		\label{eq:occupation-number-sum}
	\end{equation}
	Conversely, every vector of nonnegative integers
	\[
	(N_0,\ldots,N_M)
	\]
	satisfying \eqref{eq:occupation-number-sum} determines a unique weakly
	ordered configuration
	\[
	0\leq s_1\leq\cdots\leq s_q\leq M.
	\]
	
	Define the cumulative occupation numbers by
	\begin{equation}
		Q_m
		=
		N_0+\cdots+N_m,
		\qquad
		0\leq m\leq M,
		\label{eq:cumulative-occupation-numbers}
	\end{equation}
	and set
	\begin{equation}
		Q_{-1}=0.
		\label{eq:initial-cumulative-occupation}
	\end{equation}
	Then the indices $a$ for which $s_a=m$ are precisely
	\begin{equation}
		Q_{m-1}+1
		\leq a\leq Q_m.
		\label{eq:occupation-block-indices}
	\end{equation}
	
	For a given occupation vector, define the factorial contribution
	\begin{equation}
		\mathcal F_q(N_0,\ldots,N_M)
		=
		(q!)^2
		\prod_{m=0}^{M}
		\prod_{a=Q_{m-1}+1}^{Q_m}
		\left(
		\frac{(a-1)!}{(a+m)!}
		\right)^2.
		\label{eq:occupation-factorial-weight}
	\end{equation}
	Define also the interaction contribution
	\begin{equation}
		\mathcal V_q(N_0,\ldots,N_M)
		=
		\prod_{0\leq m<\ell\leq M}
		\;
		\prod_{a=Q_{m-1}+1}^{Q_m}
		\;
		\prod_{b=Q_{\ell-1}+1}^{Q_\ell}
		\left(
		\frac{b-a+\ell-m}{b-a}
		\right)^2.
		\label{eq:occupation-interaction-weight}
	\end{equation}
	As usual, an empty product is interpreted as equal to $1$.
	
	The interaction between two indices belonging to the same occupation
	block is equal to $1$. Indeed, if $s_a=s_b=m$, then
	\[
	\frac{b-a+s_b-s_a}{b-a}
	=
	1.
	\]
	For this reason, only interactions between distinct occupation levels
	appear in \eqref{eq:occupation-interaction-weight}.
	
	\begin{proposition}[Occupation-number formula]
		\label{prop:occupation-number-formula}
		
		For every $q\geq1$ and every $M\geq0$,
		\begin{align}
			\mathcal R_q^{[M]}(\mu)
			&=
			\sum_{\substack{N_0,\ldots,N_M\geq0\\
					N_0+\cdots+N_M=q}}
			\mu^{\sum_{m=0}^{M}mN_m}
			\mathcal F_q(N_0,\ldots,N_M)
			\mathcal V_q(N_0,\ldots,N_M).
			\label{eq:occupation-number-sector-sum}
		\end{align}
		
	\end{proposition}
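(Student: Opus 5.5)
The plan is to show that \eqref{eq:occupation-number-sector-sum} is a reindexing of the truncated sum \eqref{eq:truncated-normalized-sector}. No analysis is involved. The first step is the bijection recorded just before the statement. Weakly ordered tuples $0\leq s_1\leq\cdots\leq s_q\leq M$ correspond exactly to vectors $(N_0,\ldots,N_M)\in\mathbb N_0^{M+1}$ with $N_0+\cdots+N_M=q$, via \eqref{eq:occupation-number-definition}. The inverse map sets $s_a=m$ exactly for $Q_{m-1}<a\leq Q_m$, as in \eqref{eq:occupation-block-indices}. Weak ordering forces each level set $\{a:s_a=m\}$ to be a consecutive block, and the blocks appear in increasing order of $m$; this is what makes the map injective. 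Surjectivity is immediate from the explicit inverse. So the summation over configurations may be replaced by the summation over occupation vectors.

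Next I will rewrite each factor of the summand in occupation coordinates. The exponent is $\sum_a s_a=\sum_{m}\sum_{a\in\text{block }m} m=\sum_m mN_m$. For the factorial product, I will group the factors $\bigl((a-1)!/(a+s_a)!\bigr)^2$ according to the block containing $a$, substituting $s_a=m$. Together with the prefactor $(q!)^2$, this gives exactly $\mathcal F_q$ as defined in \eqref{eq:occupation-factorial-weight}.

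For the interaction product over $1\leq a<b\leq q$, I will split the pairs according to the pair of levels $(s_a,s_b)=(m,\ell)$. If $m=\ell$, the factor equals $1$, as already observed in the text. Otherwise $a<b$ together with weak ordering forces $m<\ell$, and every $a$ in block $m$ is automatically smaller than every $b$ in block $\ell$. Hence the pairs with $m<\ell$ are precisely the products of the two blocks, and each contributes $\bigl((b-a+\ell-m)/(b-a)\bigr)^2$. Here $b-a\geq1$, so no factor is ill-defined. This reproduces $\mathcal V_q$ from \eqref{eq:occupation-interaction-weight}. Empty blocks ($N_m=0$) contribute empty products equal to $1$, consistent with the convention stated there.

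Summing over all occupation vectors then gives \eqref{eq:occupation-number-sector-sum}. The only point needing care, which I expect to be the main obstacle, is the pair-decomposition step. One must make sure that the condition $a<b$ is equivalent to the block ordering $m<\ell$ for distinct levels, so that no pairs are double-counted or omitted. This follows directly from the contiguity of the blocks in the weakly ordered configuration.
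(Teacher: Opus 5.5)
Your proposal is correct and follows essentially the same approach as the paper. The paper also reindexes weakly ordered configurations by occupation vectors, then rewrites the $\mu$-power, the factorial factor and the Vandermonde quotient block by block. Your explicit check that the pairs $a<b$ with distinct levels are exactly the block products with $m<\ell$ is a more careful version of a step the paper states in one line.
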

	
	\begin{proof}
		Every weakly ordered vector
		\[
		0\leq s_1\leq\cdots\leq s_q\leq M
		\]
		determines a unique occupation vector
		\[
		(N_0,\ldots,N_M)
		\]
		satisfying \eqref{eq:occupation-number-sum}. Conversely, every such
		occupation vector uniquely determines the ordered configuration by
		placing first $N_0$ zeros, then $N_1$ ones, and so on.
		
		For the corresponding configuration,
		\[
		s_1+\cdots+s_q
		=
		\sum_{m=0}^{M}mN_m.
		\]
		Hence, the power of $\mu$ in
		\eqref{eq:truncated-normalized-sector} becomes
		\[
		\mu^{\sum_{m=0}^{M}mN_m}.
		\]
		The factorial factor in
		\eqref{eq:truncated-normalized-sector} is exactly
		\[
		\mathcal F_q(N_0,\ldots,N_M).
		\]
		The Vandermonde quotient is equal to $1$ for pairs of indices belonging
		to the same occupation block. For indices in two distinct blocks
		corresponding to levels $m<\ell$, it is
		\[
		\left(
		\frac{b-a+\ell-m}{b-a}
		\right)^2.
		\]
		Taking the product over all pairs of distinct blocks gives
		\[
		\mathcal V_q(N_0,\ldots,N_M).
		\]
		Substitution into \eqref{eq:truncated-normalized-sector} proves
		\eqref{eq:occupation-number-sector-sum}.
	\end{proof}
	
	The formula \eqref{eq:occupation-number-sector-sum} replaces a
	multiple sum over ordered variables by a finite sum over the occupation
	simplex
	\[
	\left\{
	(N_0,\ldots,N_M)\in\mathbb N_0^{M+1}:
	N_0+\cdots+N_M=q
	\right\}.
	\]
	For fixed $M$, this set has cardinality
	\begin{equation}
		\binom{q+M}{M}.
		\label{eq:occupation-simplex-cardinality}
	\end{equation}
	Thus, the number of summation variables is independent of $q$ when the
	truncation level $M$ is fixed.
	
	The case $M=0$ contains only the configuration $N_0=q$. Formula
	\eqref{eq:occupation-number-sector-sum} then gives
	\begin{equation}
		\mathcal R_q^{[0]}(\mu)=1.
		\label{eq:zero-truncation}
	\end{equation}
	This is precisely the contribution of the minimal sector configuration
	\[
	r_1=\cdots=r_q=1.
	\]

\subsection{The two-level truncation}

The first nontrivial truncated model corresponds to $M=1$. In this
case, every excess variable satisfies
\[
s_a\in\{0,1\}.
\]
Since the variables are weakly ordered, a configuration is uniquely
determined by the number
\[
j=N_1
\]
of variables equal to $1$. The remaining variables are equal to $0$:
\[
N_0=q-j,
\qquad
N_1=j,
\qquad
0\leq j\leq q.
\]
The corresponding ordered configuration is
\[
\underbrace{0,\ldots,0}_{q-j\text{ times}},
\underbrace{1,\ldots,1}_{j\text{ times}}.
\]

\begin{proposition}[Explicit two-level formula]
	\label{prop:two-level-formula}
	
	For every $q\geq1$,
	\begin{equation}
		\mathcal R_q^{[1]}(\mu)
		=
		\sum_{j=0}^{q}
		\frac{\mu^j}{(j!)^2}
		\left(
		\frac{q-j+1}{q+1}
		\right)^2.
		\label{eq:two-level-truncation-formula}
	\end{equation}
	Consequently,
	\begin{equation}
		\mathcal R_q^{[1]}(\mu)
		\longrightarrow
		I_0(2\sqrt{\mu}),
		\qquad
		q\to\infty.
		\label{eq:two-level-truncation-limit}
	\end{equation}
	
\end{proposition}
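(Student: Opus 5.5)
The most direct route is through the hook-length representation \eqref{eq:standard-tableau-sector-sum}, restricted to partitions with $\nu_1\leq1$; this avoids the occupation-number bookkeeping entirely. By \eqref{eq:truncated-sector-Schur-sum}, the truncation $s_q\leq1$ is exactly the constraint $\nu_1\leq1$. Such partitions are the single columns $\nu=(1^j)$ with $0\leq j\leq q$, and the length constraint $\ell(\nu)\leq q$ is automatic in that range. So the plan is to write
\[
\mathcal R_q^{[1]}(\mu)=\sum_{j=0}^{q}\mu^{j}\left(\frac{f^{(1^j)}}{j!}\right)^2\mathcal Q_q\bigl((1^j)\bigr)^2 ,
\]
and then evaluate the two ingredients separately.

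\emph{The tableau factor.} A single column admits exactly one standard Young tableau, so $f^{(1^j)}=1$. Equivalently, the hooks in the column are $1,2,\ldots,j$, so their product is $j!$. Either way the factor is $\mu^j/(j!)^2$.

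\emph{The correction factor.} Here I use the product form \eqref{eq:partition-correction-product-form} with $r=j$ and $\nu_i=1$ for $i\leq j$:
\[
\mathcal Q_q\bigl((1^j)\bigr)=\prod_{i=1}^{j}\frac{q-i+1}{q-i+2}.
\]
This product telescopes to $(q-j+1)/(q+1)$. Substituting both factors gives \eqref{eq:two-level-truncation-formula}.

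As a cross-check, one could instead use Proposition~\ref{prop:occupation-number-formula} with $N_0=q-j$ and $N_1=j$. The factorial weight is $(q!)^2\prod_{a\leq q-j}a^{-2}\prod_{a>q-j}a^{-2}(a+1)^{-2}$. The interaction weight $\prod_{a\leq q-j<b}\bigl((b-a+1)/(b-a)\bigr)^2$ telescopes over $a$ for each fixed $b$ to $\bigl(b/(b-q+j)\bigr)^2$. Combining these should reproduce the same expression. Simplifying this product is the only place where an arithmetic slip is likely, which is why I would rely on the hook-length route as the main argument.

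\emph{The limit.} Each summand is bounded by $\mu^j/(j!)^2$, which is summable in $j$, and for each fixed $j$ the factor $\bigl((q-j+1)/(q+1)\bigr)^2$ tends to $1$ as $q\to\infty$. Dominated convergence for series (extending the sum by zero for $j>q$) then yields the limit $\sum_{j\geq0}\mu^j/(j!)^2=I_0(2\sqrt{\mu})$. This agrees with Theorem~\ref{thm:fixed-level-convergence} at $M=1$, since the partitions with $\nu_1\leq1$ are exactly the columns. I expect no genuine obstacle beyond confirming that the telescoping indices line up.
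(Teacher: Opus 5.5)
Your proof is correct, but it takes a different route from the paper's. The paper stays inside the appendix and uses only the occupation-number formula. For $N_0=q-j$, $N_1=j$ it computes the factorial weight $\mathcal F_q(q-j,j)=\bigl((q-j+1)!/(q+1)!\bigr)^2$. It then obtains the interaction weight $\mathcal V_q(q-j,j)=\binom{q}{j}^2$ by telescoping over $a$ for fixed $b$ and multiplying over $b$, and the product of the two is the stated summand. That is exactly your ``cross-check''. The simplification you were wary of is just $\frac{(q-j+1)!}{(q+1)!}\binom{q}{j}=\frac{1}{j!}\cdot\frac{q-j+1}{q+1}$.

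You instead specialize the standard-tableau representation of $\mathcal R_q^{[M]}$ from the main text to the columns $(1^j)$. There $f^{(1^j)}=1$, and the product form of $\mathcal Q_q$ (valid since $\ell((1^j))=j\le q$) telescopes to $(q-j+1)/(q+1)$; every step checks. Your route is shorter and shows that the two-level formula is simply the $M=1$ case of the general mechanism. The same computation with $\nu=(2^k1^j)$ gives the paper's exact three-level coefficient, including the correction factor $\frac{(q-j-k+1)(q-k+2)}{(q+1)(q+2)}$, with no separate bookkeeping.

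The price is that your route rests on the Weyl dimension, hook-content and hook-length formulas. The appendix computation uses only the definition of $\mathcal R_q^{[M]}$ and elementary telescoping. That independence lets the paper present the $M=1$ and $M=2$ calculations as finite-dimensional confirmations of the Schur/hook-length argument. Your derivation cannot play that role, because it presupposes that argument. The dominated-convergence step for the limit is the same in both.
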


\begin{proof}
	For the occupation vector
	\[
	N_0=q-j,
	\qquad
	N_1=j,
	\]
	the factorial contribution defined in
	\eqref{eq:occupation-factorial-weight} is
	\begin{align}
		\mathcal F_q(q-j,j)
		&=
		(q!)^2
		\prod_{a=1}^{q-j}
		\left(
		\frac{(a-1)!}{a!}
		\right)^2
		\prod_{a=q-j+1}^{q}
		\left(
		\frac{(a-1)!}{(a+1)!}
		\right)^2.
		\label{eq:two-level-factorial-start}
	\end{align}
	The first product is
	\[
	\prod_{a=1}^{q-j}
	\left(
	\frac{(a-1)!}{a!}
	\right)^2
	=
	\frac{1}{((q-j)!)^2}.
	\]
	For the second product,
	\[
	\frac{(a-1)!}{(a+1)!}
	=
	\frac{1}{a(a+1)},
	\]
	and hence
	\begin{align}
		\prod_{a=q-j+1}^{q}
		\left(
		\frac{(a-1)!}{(a+1)!}
		\right)^2
		&=
		\prod_{a=q-j+1}^{q}
		\frac{1}{a^2(a+1)^2}
		\nonumber\\
		&=
		\left(
		\frac{(q-j)!(q-j+1)!}{q!(q+1)!}
		\right)^2.
		\label{eq:two-level-factorial-second-product}
	\end{align}
	Combining the two factors gives
	\begin{equation}
		\mathcal F_q(q-j,j)
		=
		\left(
		\frac{(q-j+1)!}{(q+1)!}
		\right)^2.
		\label{eq:two-level-factorial-simplified}
	\end{equation}
	
	It remains to compute the interaction factor. Only pairs consisting of
	one variable at level $0$ and one variable at level $1$ contribute.
	Thus,
	\begin{align}
		\mathcal V_q(q-j,j)
		&=
		\prod_{a=1}^{q-j}
		\prod_{b=q-j+1}^{q}
		\left(
		\frac{b-a+1}{b-a}
		\right)^2.
		\label{eq:two-level-interaction-start}
	\end{align}
	For fixed $b$, the product over $a$ telescopes:
	\begin{align}
		\prod_{a=1}^{q-j}
		\frac{b-a+1}{b-a}
		&=
		\frac{b}{b-q+j}.
		\label{eq:two-level-telescoping}
	\end{align}
	Therefore,
	\begin{align}
		\mathcal V_q(q-j,j)
		&=
		\prod_{b=q-j+1}^{q}
		\left(
		\frac{b}{b-q+j}
		\right)^2
		\nonumber\\
		&=
		\left(
		\frac{q!}{j!(q-j)!}
		\right)^2
		\nonumber\\
		&=
		\binom{q}{j}^2.
		\label{eq:two-level-interaction-simplified}
	\end{align}
	
	Substituting \eqref{eq:two-level-factorial-simplified} and
	\eqref{eq:two-level-interaction-simplified} into
	\eqref{eq:occupation-number-sector-sum} gives
	\begin{align}
		\mathcal R_q^{[1]}(\mu)
		&=
		\sum_{j=0}^{q}
		\mu^j
		\left(
		\frac{(q-j+1)!}{(q+1)!}
		\right)^2
		\left(
		\frac{q!}{j!(q-j)!}
		\right)^2
		\nonumber\\
		&=
		\sum_{j=0}^{q}
		\frac{\mu^j}{(j!)^2}
		\left(
		\frac{q-j+1}{q+1}
		\right)^2.
	\end{align}
	This proves \eqref{eq:two-level-truncation-formula}.
	
	For every fixed $j$,
	\[
	\left(
	\frac{q-j+1}{q+1}
	\right)^2
	\longrightarrow1,
	\qquad
	q\to\infty.
	\]
	Moreover,
	\[
	0
	\leq
	\frac{\mu^j}{(j!)^2}
	\left(
	\frac{q-j+1}{q+1}
	\right)^2
	\leq
	\frac{\mu^j}{(j!)^2}.
	\]
	Since
	\[
	\sum_{j=0}^{\infty}
	\frac{\mu^j}{(j!)^2}
	=
	I_0(2\sqrt{\mu})
	\]
	is convergent, dominated convergence yields
	\[
	\mathcal R_q^{[1]}(\mu)
	\longrightarrow
	\sum_{j=0}^{\infty}
	\frac{\mu^j}{(j!)^2}
	=
	I_0(2\sqrt{\mu}).
	\]
\end{proof}

The result \eqref{eq:two-level-truncation-limit} gives the first
nontrivial limiting contribution to the normalized sector weight:
\[
\mathcal R_q^{[0]}(\mu)=1,
\qquad
\mathcal R_q^{[1]}(\mu)
\longrightarrow
I_0(2\sqrt{\mu}).
\]
The appearance of the Bessel function is consistent with the moment
identity
\[
\sum_{j=0}^{\infty}
\frac{\mu^j}{(j!)^2}
=
I_0(2\sqrt{\mu}).
\]
It also shows that the first correction above the minimal sector
configuration remains of order one as $q\to\infty$.

	\subsection{The three-level truncation}
	
	The next truncation level is $M=2$. The excess variables now take
	values in the set
	\[
	\{0,1,2\}.
	\]
	A weakly ordered configuration is uniquely determined by the two
	occupation numbers
	\[
	j=N_1,
	\qquad
	k=N_2,
	\]
	since
	\[
	N_0=q-j-k.
	\]
	The admissible domain is
	\begin{equation}
		j\geq0,
		\qquad
		k\geq0,
		\qquad
		j+k\leq q.
		\label{eq:three-level-domain}
	\end{equation}
	
	For convenience, define
	\begin{equation}
		n_0=q-j-k,
		\qquad
		n_1=q-k.
		\label{eq:three-level-block-boundaries}
	\end{equation}
	The corresponding ordered configuration is
	\[
	\underbrace{0,\ldots,0}_{n_0\text{ times}},
	\underbrace{1,\ldots,1}_{j\text{ times}},
	\underbrace{2,\ldots,2}_{k\text{ times}}.
	\]
	
	\begin{proposition}[Explicit three-level formula]
		\label{prop:three-level-formula}
		
		For every $q\geq1$,
		\begin{equation}
			\mathcal R_q^{[2]}(\mu)
			=
			\sum_{\substack{j,k\geq0\\j+k\leq q}}
			\mu^{j+2k}
			\mathcal A_q(j,k)
			\mathcal B_q(j,k)
			\mathcal C_q(j,k)
			\mathcal D_q(j,k),
			\label{eq:three-level-truncation-formula}
		\end{equation}
		where
		\begin{equation}
			\mathcal A_q(j,k)
			=
			\left(
			\frac{(q-j-k+1)!(q-k+2)!}
			{(q+1)!(q+2)!}
			\right)^2,
			\label{eq:three-level-factorial-factor}
		\end{equation}
		\begin{equation}
			\mathcal B_q(j,k)
			=
			\left(
			\frac{(q-k)!}{j!(q-j-k)!}
			\right)^2,
			\label{eq:three-level-zero-one-interaction}
		\end{equation}
		\begin{equation}
			\mathcal C_q(j,k)
			=
			\left(
			\frac{(j+k)!}{j!k!}
			\right)^2,
			\label{eq:three-level-one-two-interaction}
		\end{equation}
		and
		\begin{equation}
			\mathcal D_q(j,k)
			=
			\left(
			\frac{q!(q+1)!j!(j+1)!}
			{(q-k)!(q-k+1)!(j+k)!(j+k+1)!}
			\right)^2.
			\label{eq:three-level-zero-two-interaction}
		\end{equation}
		
	\end{proposition}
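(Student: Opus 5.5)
We specialize the occupation-number formula of Proposition~\ref{prop:occupation-number-formula} to $M=2$, with $N_0=n_0=q-j-k$, $N_1=j$, $N_2=k$. The blocks are then $1\le a\le n_0$ at level $0$, $n_0<a\le n_1$ at level $1$, and $n_1<a\le q$ at level $2$. The power of $\mu$ is $\mu^{j+2k}$ at once. It remains to evaluate $\mathcal F_q$ and the three pairwise block interactions in $\mathcal V_q$ in closed form. We will then match them, up to regrouping, with $\mathcal A_q\mathcal B_q\mathcal C_q\mathcal D_q$, in the same way as the two-level computation of Proposition~\ref{prop:two-level-formula}.

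\textbf{Factorial part.} We have
\[
\mathcal F_q=(q!)^2\prod_{a\le n_0}a^{-2}\prod_{n_0<a\le n_1}\bigl(a(a+1)\bigr)^{-2}\prod_{n_1<a\le q}\bigl(a(a+1)(a+2)\bigr)^{-2}.
\]
Each product telescopes into ratios of factorials. We get $\prod_{a\le n_0}a=n_0!$. The middle block gives $\frac{n_1!}{n_0!}\cdot\frac{(n_1+1)!}{(n_0+1)!}$. The last block gives $\frac{q!}{n_1!}\frac{(q+1)!}{(n_1+1)!}\frac{(q+2)!}{(n_1+2)!}$. After the $q!$ cancels, we are left with
\[
\mathcal F_q=\left(\frac{(n_0+1)!\,(n_1+1)!\,(n_1+2)!}{n_1!\,(n_1+1)!\,(q+1)!\,(q+2)!}\cdot n_1!\right)^{2},
\]
which we simplify carefully. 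We expect it to reduce to $\bigl((n_0+1)!(n_1+2)!/((q+1)!(q+2)!)\bigr)^2=\mathcal A_q$.

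\textbf{Interactions.} For the $0$--$1$ pairs we use the same telescoping in $a$ as in \eqref{eq:two-level-telescoping}. For fixed $b$ in block $1$, $\prod_{a\le n_0}\frac{b-a+1}{b-a}=\frac{b}{b-n_0}$. The product over $n_0<b\le n_1$ is $\frac{n_1!}{n_0!\,j!}$, whose square is $\mathcal B_q$.

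For the $1$--$2$ pairs, fix $b$ in block $2$. Then $\prod_{n_0<a\le n_1}\frac{b-a+1}{b-a}=\frac{b-n_0}{b-n_1}$. Setting $t=b-n_1$ with $1\le t\le k$ gives $\prod_t\frac{t+j}{t}=\frac{(j+k)!}{j!\,k!}$, whose square is $\mathcal C_q$.

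For the $0$--$2$ pairs the factor is $\frac{b-a+2}{b-a}$. For fixed $b$ this telescopes in steps of two, giving $\frac{b(b+1)}{(b-n_0)(b-n_0+1)}$. Taking the product over $n_1<b\le q$ yields
\[
\frac{q!}{n_1!}\cdot\frac{(q+1)!}{(n_1+1)!}\cdot\frac{j!}{(j+k)!}\cdot\frac{(j+1)!}{(j+k+1)!},
\]
using $b-n_0=j+(b-n_1)$. Its square is exactly $\mathcal D_q$.

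Multiplying these pieces and summing over the simplex \eqref{eq:three-level-domain} gives \eqref{eq:three-level-truncation-formula}. The main obstacle is bookkeeping rather than any conceptual step. The step-two telescoping for the $0$--$2$ interaction is where an off-by-one error would most likely occur, and so is the simplification of $\mathcal F_q$ to the stated form of $\mathcal A_q$. We will check both at small cases: $k=0$ should reproduce \eqref{eq:two-level-truncation-formula}, and $q=1$ with $(j,k)=(0,1)$ should give the weight $\mu^2/(3!/1)^2\cdot(1!)^2=\mu^2/36$. These checks confirm the boundary indices.
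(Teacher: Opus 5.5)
Your proposal is correct and follows the paper's proof essentially line by line. You specialize the occupation-number formula to $M=2$, telescope the three factorial blocks into $\mathcal A_q$, and telescope the $0$--$1$, $1$--$2$ and step-two $0$--$2$ interactions into $\mathcal B_q$, $\mathcal C_q$, $\mathcal D_q$, all of which you evaluate correctly. One cosmetic point: your displayed expression for $\mathcal F_q$ already equals $\mathcal A_q$ once $n_1!\,(n_1+1)!$ cancels, so ``we expect'' can be replaced by an equality. Your numerator $(n_0+1)!$ is the right one, whereas the paper's intermediate line misprints it as $n_0!$ before stating the same $\mathcal A_q$.
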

	
	\begin{proof}
		The factorial contribution is obtained from
		\eqref{eq:occupation-factorial-weight}. The level-$0$ variables
		contribute
		\[
		\prod_{a=1}^{n_0}
		\left(
		\frac{(a-1)!}{a!}
		\right)^2
		=
		\frac{1}{(n_0!)^2}.
		\]
		The level-$1$ variables contribute
		\begin{align}
			\prod_{a=n_0+1}^{n_1}
			\left(
			\frac{(a-1)!}{(a+1)!}
			\right)^2
			&=
			\prod_{a=n_0+1}^{n_1}
			\left(
			\frac{1}{a(a+1)}
			\right)^2
			\nonumber\\
			&=
			\left(
			\frac{n_0!(n_0+1)!}{n_1!(n_1+1)!}
			\right)^2.
			\label{eq:three-level-level-one-factor}
		\end{align}
		The level-$2$ variables contribute
		\begin{align}
			\prod_{a=n_1+1}^{q}
			\left(
			\frac{(a-1)!}{(a+2)!}
			\right)^2
			&=
			\prod_{a=n_1+1}^{q}
			\left(
			\frac{1}{a(a+1)(a+2)}
			\right)^2
			\nonumber\\
			&=
			\left(
			\frac{n_1!(n_1+1)!(n_1+2)!}
			{q!(q+1)!(q+2)!}
			\right)^2.
			\label{eq:three-level-level-two-factor}
		\end{align}
		After multiplication by $(q!)^2$, the product of these three factors
		simplifies to
		\[
		\left(
		\frac{n_0!(n_1+2)!}{(q+1)!(q+2)!}
		\right)^2.
		\]
		Using
		\[
		n_0=q-j-k,
		\qquad
		n_1=q-k,
		\]
		this gives \eqref{eq:three-level-factorial-factor}.
		
		There are three nontrivial interaction terms. First, the interaction
		between levels $0$ and $1$ is
		\begin{align}
			\prod_{a=1}^{n_0}
			\prod_{b=n_0+1}^{n_1}
			\left(
			\frac{b-a+1}{b-a}
			\right)^2
			&=
			\left(
			\frac{n_1!}{j!n_0!}
			\right)^2,
		\end{align}
		which is \eqref{eq:three-level-zero-one-interaction}.
		
		Second, the interaction between levels $1$ and $2$ is
		\begin{align}
			\prod_{a=n_0+1}^{n_1}
			\prod_{b=n_1+1}^{q}
			\left(
			\frac{b-a+1}{b-a}
			\right)^2
			&=
			\left(
			\frac{(j+k)!}{j!k!}
			\right)^2,
		\end{align}
		which is \eqref{eq:three-level-one-two-interaction}.
		
		Finally, the interaction between levels $0$ and $2$ has displacement
		difference equal to $2$:
		\begin{align}
			\prod_{a=1}^{n_0}
			\prod_{b=n_1+1}^{q}
			\left(
			\frac{b-a+2}{b-a}
			\right)^2
			&=
			\prod_{b=n_1+1}^{q}
			\left[
			\frac{b(b+1)}
			{(b-n_0)(b-n_0+1)}
			\right]^2.
			\label{eq:three-level-zero-two-start}
		\end{align}
		Since
		\[
		n_1=q-k,
		\qquad
		n_0=q-j-k,
		\]
		the expression in \eqref{eq:three-level-zero-two-start} becomes
		\[
		\left(
		\frac{q!(q+1)!j!(j+1)!}
		{(q-k)!(q-k+1)!(j+k)!(j+k+1)!}
		\right)^2,
		\]
		which is \eqref{eq:three-level-zero-two-interaction}.
		
		Multiplying the weight $\mu^{j+2k}$ by the factorial factor and the
		three interaction factors proves
		\eqref{eq:three-level-truncation-formula}.
	\end{proof}
	
	The formula \eqref{eq:three-level-truncation-formula} extends the
	two-level expression \eqref{eq:two-level-truncation-formula}. Indeed,
	when $k=0$, one has $N_2=0$, and the three-level expression reduces to
	the two-level truncation.
	
	The term $j=0$, $k=0$ is equal to $1$. It corresponds to the minimal
	excitation configuration
	\[
	s_1=\cdots=s_q=0.
	\]
	The terms with $k=0$ describe configurations involving only the levels
	$0$ and $1$, whereas the terms with $k\geq1$ contain the first genuine
	contributions from the excess level $2$.
	
	For fixed $j$ and $k$, the coefficient in
	\eqref{eq:three-level-truncation-formula} has a finite limit as
	$q\to\infty$. The next subsection simplifies this limiting coefficient
	and justifies the passage from the finite double sum to an infinite
	double series.

	\subsection{Fixed-occupation asymptotics in the three-level model}
	
	The explicit formula \eqref{eq:three-level-truncation-formula} allows
	one to compute the large-$q$ limit of each individual occupation
	contribution. For fixed nonnegative integers $j$ and $k$, define
	\begin{equation}
		\mathcal W_q(j,k)
		=
		\mathcal A_q(j,k)
		\mathcal B_q(j,k)
		\mathcal C_q(j,k)
		\mathcal D_q(j,k).
		\label{eq:three-level-coefficient}
	\end{equation}
	Then
	\begin{equation}
		\mathcal R_q^{[2]}(\mu)
		=
		\sum_{\substack{j,k\geq0\\j+k\leq q}}
		\mu^{j+2k}\mathcal W_q(j,k).
		\label{eq:three-level-coefficient-sum}
	\end{equation}
	
	\begin{proposition}[Fixed-occupation limit]
		\label{prop:three-level-fixed-occupation-limit}
		
		For every fixed pair of nonnegative integers $j,k$,
		\begin{equation}
			\mathcal W_q(j,k)
			\longrightarrow
			\frac{(j+1)^2}
			{(k!)^2(j+k+1)!^2},
			\qquad
			q\to\infty.
			\label{eq:three-level-fixed-occupation-limit}
		\end{equation}
		
	\end{proposition}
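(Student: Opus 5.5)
The plan is to compute the product $\mathcal W_q(j,k)=\mathcal A_q\mathcal B_q\mathcal C_q\mathcal D_q$ exactly, before taking any limit, by working with its square root. Each of the four factors in Proposition~\ref{prop:three-level-formula} is the square of a ratio of factorials, so
\[
\sqrt{\mathcal W_q(j,k)}
=
\frac{(q-j-k+1)!\,(q-k+2)!}{(q+1)!\,(q+2)!}
\cdot
\frac{(q-k)!}{j!\,(q-j-k)!}
\cdot
\frac{(j+k)!}{j!\,k!}
\cdot
\frac{q!\,(q+1)!\,j!\,(j+1)!}{(q-k)!\,(q-k+1)!\,(j+k)!\,(j+k+1)!}.
\]
I will sort the factorials into those whose argument grows with $q$ and those depending only on $j$ and $k$, and cancel within each group. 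The goal is an exact identity for every $q\geq j+k$, which then gives the limit at once.

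For the $q$-dependent factorials I would use three elementary ratios. First, $(q-j-k+1)!/(q-j-k)!=q-j-k+1$. Second, $(q-k+2)!/(q-k+1)!=q-k+2$, while $(q-k)!$ occurs once in the numerator and once in the denominator and cancels. Third, $q!\,(q+1)!/\bigl((q+1)!\,(q+2)!\bigr)=1/\bigl((q+1)(q+2)\bigr)$.

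For the factorials in $j$ and $k$ alone, the factors $(j+k)!$ cancel and one $j!$ cancels. What remains is $(j+1)!/\bigl(j!\,k!\,(j+k+1)!\bigr)=(j+1)/\bigl(k!\,(j+k+1)!\bigr)$.

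Combining the two groups should give the exact formula
\[
\mathcal W_q(j,k)
=
\frac{(j+1)^2}{(k!)^2\,(j+k+1)!^2}
\left(
\frac{(q-j-k+1)(q-k+2)}{(q+1)(q+2)}
\right)^2,
\qquad q\geq j+k.
\]
For fixed $j,k$ the last factor is a ratio of two monic quadratics in $q$, so it tends to $1$, and this proves \eqref{eq:three-level-fixed-occupation-limit}. As a by-product, for $j+k\leq q$ the same bracket lies in $[0,1]$, since $q-j-k+1\leq q+1$ and $q-k+2\leq q+2$. This yields the domination bound that the next subsection will need, in the same spirit as $\mathcal Q_q(\nu)\leq 1$.

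I expect no genuine obstacle. The only delicate step is the bookkeeping of the factorials: in particular, confirming that the $(q-k)!$ and $(q-k+1)!$ factors produced by $\mathcal B_q$ and $\mathcal D_q$ cancel against those in $\mathcal A_q$. As a consistency check I would set $k=0$, where the formula must reduce to the two-level coefficient $\bigl((q-j+1)/(q+1)\bigr)^2/(j!)^2$ of Proposition~\ref{prop:two-level-formula}. Indeed, at $k=0$ one gets $(j+1)^2/(j+1)!^2=1/(j!)^2$ times $\bigl((q-j+1)(q+2)/((q+1)(q+2))\bigr)^2$, which is the two-level coefficient.
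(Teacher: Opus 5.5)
Your argument is correct. The square-root cancellation checks out term by term, and the exact identity holds for all $q\geq j+k$.

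It differs from the paper's proof of this proposition. The paper does not compute $\mathcal W_q(j,k)$ exactly. Instead it expands $\mathcal A_q$, $\mathcal B_q$, $\mathcal D_q$ separately to leading order, namely $q^{-2j-4k}$, $q^{2j}/(j!)^2$ and $q^{4k}\bigl(j!(j+1)!/((j+k)!(j+k+1)!)\bigr)^2$, each with a $1+O(1/q)$ error. It then observes that the powers of $q$ cancel and simplifies the constant.

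Your exact computation is precisely the content of the paper's next result, Proposition~\ref{prop:three-level-exact-factorization}. The paper proves that result separately, because it needs the uniform majorant for Theorem~\ref{thm:three-level-truncation-limit}.

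Your route therefore gives the limit, the exact finite-$q$ coefficient and the domination bound in one step. It also avoids justifying three separate asymptotic expansions. The paper's route shows more directly how each factor scales with $q$ and how the scalings balance, but on its own it yields no uniform bound.

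As a further consistency check, your bracket $\frac{(q-j-k+1)(q-k+2)}{(q+1)(q+2)}$ is exactly $\mathcal Q_q(\nu)$ from \eqref{eq:partition-correction-factor} for $\nu=(2^k,1^j)$. Likewise, the limit constant is the inverse squared hook product of that shape. This agrees with the hook-length representation in the main text.
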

	
	\begin{proof}
		Recall that
		\[
		\mathcal W_q(j,k)
		=
		\mathcal A_q(j,k)
		\mathcal B_q(j,k)
		\mathcal C_q(j,k)
		\mathcal D_q(j,k).
		\]
		The factor $\mathcal C_q(j,k)$ is independent of $q$:
		\[
		\mathcal C_q(j,k)
		=
		\left(
		\frac{(j+k)!}{j!k!}
		\right)^2.
		\]
		
		For fixed $j$ and $k$, the factorial ratios in
		\eqref{eq:three-level-factorial-factor} satisfy
		\begin{align}
			\mathcal A_q(j,k)
			&=
			q^{-2j-4k}
			\left(
			1+O\!\left(\frac{1}{q}\right)
			\right).
			\label{eq:three-level-A-asymptotic}
		\end{align}
		Likewise,
		\begin{align}
			\mathcal B_q(j,k)
			&=
			\frac{q^{2j}}{(j!)^2}
			\left(
			1+O\!\left(\frac{1}{q}\right)
			\right),
			\label{eq:three-level-B-asymptotic}
		\end{align}
		whereas
		\begin{align}
			\mathcal D_q(j,k)
			&=
			q^{4k}
			\left(
			\frac{j!(j+1)!}
			{(j+k)!(j+k+1)!}
			\right)^2
			\left(
			1+O\!\left(\frac{1}{q}\right)
			\right).
			\label{eq:three-level-D-asymptotic}
		\end{align}
		The powers of $q$ cancel:
		\[
		q^{-2j-4k}q^{2j}q^{4k}
		=
		1.
		\]
		Therefore,
		\begin{align}
			\mathcal W_q(j,k)
			&=
			\frac{1}{(j!)^2}
			\left(
			\frac{(j+k)!}{j!k!}
			\right)^2
			\left(
			\frac{j!(j+1)!}
			{(j+k)!(j+k+1)!}
			\right)^2
			\left(
			1+O\!\left(\frac{1}{q}\right)
			\right).
			\label{eq:three-level-coefficient-before-simplification}
		\end{align}
		The factorial factors simplify to
		\begin{align}
			\frac{1}{(j!)^2}
			\left(
			\frac{(j+k)!}{j!k!}
			\right)^2
			\left(
			\frac{j!(j+1)!}
			{(j+k)!(j+k+1)!}
			\right)^2
			&=
			\frac{(j+1)^2}
			{(k!)^2(j+k+1)!^2}.
			\label{eq:three-level-coefficient-simplification}
		\end{align}
		This proves \eqref{eq:three-level-fixed-occupation-limit}.
	\end{proof}
	
	The preceding proposition suggests the limiting double series
	\begin{equation}
		\mathcal R_\infty^{[2]}(\mu)
		=
		\sum_{j=0}^{\infty}
		\sum_{k=0}^{\infty}
		\mu^{j+2k}
		\frac{(j+1)^2}
		{(k!)^2(j+k+1)!^2}.
		\label{eq:three-level-candidate-limit}
	\end{equation}
	This double series is absolutely convergent. Indeed,
	\[
	(j+k+1)!\geq(j+1)!,
	\]
	and hence
	\begin{align}
		\mu^{j+2k}
		\frac{(j+1)^2}
		{(k!)^2(j+k+1)!^2}
		&\leq
		\frac{\mu^j}{(j!)^2}
		\frac{\mu^{2k}}{(k!)^2}.
		\label{eq:three-level-majorant}
	\end{align}
	The majorant is summable:
	\begin{align}
		\sum_{j=0}^{\infty}
		\sum_{k=0}^{\infty}
		\frac{\mu^j}{(j!)^2}
		\frac{\mu^{2k}}{(k!)^2}
		&=
		I_0(2\sqrt{\mu})I_0(2\mu).
		\label{eq:three-level-majorant-sum}
	\end{align}
	
	\subsection{Uniform domination in the three-level model}
	
	The coefficient $\mathcal W_q(j,k)$ introduced in
	\eqref{eq:three-level-coefficient} admits an exact simplification. This
	gives the uniform estimate required to justify the passage to the limit
	in the three-level truncation.
	
	\begin{proposition}[Exact coefficient factorization]
		\label{prop:three-level-exact-factorization}
		
		For every $q\geq1$ and every pair of nonnegative integers $j,k$
		satisfying $j+k\leq q$, one has
		\begin{equation}
			\mathcal W_q(j,k)
			=
			\frac{(j+1)^2}
			{(k!)^2(j+k+1)!^2}
			\left(
			\frac{(q-j-k+1)(q-k+2)}
			{(q+1)(q+2)}
			\right)^2.
			\label{eq:three-level-exact-coefficient}
		\end{equation}
		
	\end{proposition}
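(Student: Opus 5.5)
The plan is a direct algebraic simplification of the product $\mathcal A_q\mathcal B_q\mathcal C_q\mathcal D_q$. All four factors in Proposition~\ref{prop:three-level-formula} are perfect squares. So I will multiply their square roots, simplify that single rational expression, and square at the end. Throughout I use the block boundaries $n_0=q-j-k$ and $n_1=q-k$ from \eqref{eq:three-level-block-boundaries}. The hypothesis $j+k\le q$ guarantees that every factorial appearing has a nonnegative argument.

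In these variables the square root of $\mathcal W_q(j,k)$ is
\[
\frac{(n_0+1)!\,(n_1+2)!}{(q+1)!\,(q+2)!}\cdot
\frac{n_1!}{j!\,n_0!}\cdot
\frac{(j+k)!}{j!\,k!}\cdot
\frac{q!\,(q+1)!\,j!\,(j+1)!}{n_1!\,(n_1+1)!\,(j+k)!\,(j+k+1)!}.
\]
I will group the cancellations by type, as follows.
\begin{enumerate}
\item The factors $n_1!$ and $(j+k)!$ each appear once in a numerator and once in a denominator, so they cancel.
\item Two ratios reduce to single linear factors: $(n_0+1)!/n_0!=n_0+1$ and $(n_1+2)!/(n_1+1)!=n_1+2$.
\item The purely $q$-dependent part is $q!\,(q+1)!/\bigl((q+1)!\,(q+2)!\bigr)=1/\bigl((q+1)(q+2)\bigr)$.
\item What remains in $j$ and $k$ is $j!\,(j+1)!/\bigl(j!\,j!\,k!\,(j+k+1)!\bigr)$, which equals $(j+1)/\bigl(k!\,(j+k+1)!\bigr)$.
\end{enumerate}

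Collecting these pieces, the square root of $\mathcal W_q(j,k)$ equals
\[
\frac{(j+1)}{k!\,(j+k+1)!}\cdot\frac{(q-j-k+1)(q-k+2)}{(q+1)(q+2)}.
\]
Squaring this gives \eqref{eq:three-level-exact-coefficient}.

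There is no genuine obstacle here; the only care needed is bookkeeping. The main point to check is that $\mathcal D_q$ was already correctly reduced to closed form in Proposition~\ref{prop:three-level-formula}, via the telescoping over $a$. I will take that as given.

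As a consistency check, letting $q\to\infty$ sends the last factor to $1$, which recovers the limit in Proposition~\ref{prop:three-level-fixed-occupation-limit}. That last factor also lies in $[0,1]$. This is precisely the uniform domination by the limiting coefficient that the subsection is after.
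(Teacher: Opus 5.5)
Your proposal is correct and matches the paper's proof. Both take square roots of the four squared factors, cancel $n_1!$, $(j+k)!$, $(q+1)!$ and one $j!$, reduce what remains to $(n_0+1)(n_1+2)(j+1)/\bigl((q+1)(q+2)\,k!\,(j+k+1)!\bigr)$, and square, treating the closed forms of $\mathcal A_q,\dots,\mathcal D_q$ as given, just as the paper does.
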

	
	\begin{proof}
		Recall that
		\[
		\mathcal W_q(j,k)
		=
		\mathcal A_q(j,k)
		\mathcal B_q(j,k)
		\mathcal C_q(j,k)
		\mathcal D_q(j,k).
		\]
		Taking square roots of the positive factors gives
		\begin{align}
			\sqrt{\mathcal W_q(j,k)}
			&=
			\frac{(q-j-k+1)!(q-k+2)!}
			{(q+1)!(q+2)!}
			\frac{(q-k)!}{j!(q-j-k)!}
			\frac{(j+k)!}{j!k!}
			\nonumber\\
			&\quad\times
			\frac{q!(q+1)!j!(j+1)!}
			{(q-k)!(q-k+1)!(j+k)!(j+k+1)!}.
			\label{eq:three-level-square-root-start}
		\end{align}
		Cancelling the common factorial factors yields
		\begin{align}
			\sqrt{\mathcal W_q(j,k)}
			&=
			\frac{(q-j-k+1)(q-k+2)(j+1)}
			{(q+1)(q+2)k!(j+k+1)!}.
			\label{eq:three-level-square-root-simplified}
		\end{align}
		Thus,
		\begin{align}
			\sqrt{\mathcal W_q(j,k)}
			&=
			\frac{j+1}{k!(j+k+1)!}
			\frac{(q-j-k+1)(q-k+2)}
			{(q+1)(q+2)}.
			\label{eq:three-level-square-root-final}
		\end{align}
		Squaring \eqref{eq:three-level-square-root-final} proves
		\eqref{eq:three-level-exact-coefficient}.
	\end{proof}
	
	The finite-$q$ correction factor in
	\eqref{eq:three-level-exact-coefficient} satisfies
	\begin{equation}
		0
		\leq
		\frac{(q-j-k+1)(q-k+2)}
		{(q+1)(q+2)}
		\leq
		1.
		\label{eq:three-level-correction-bound}
	\end{equation}
	Indeed,
	\[
	q-j-k+1\leq q+1,
	\qquad
	q-k+2\leq q+2.
	\]
	Therefore,
	\begin{equation}
		0
		\leq
		\mathcal W_q(j,k)
		\leq
		\frac{(j+1)^2}
		{(k!)^2(j+k+1)!^2}.
		\label{eq:three-level-uniform-majorant}
	\end{equation}
	
	\begin{theorem}[Three-level truncation limit]
		\label{thm:three-level-truncation-limit}
		
		For every $\mu>0$,
		\begin{equation}
			\mathcal R_q^{[2]}(\mu)
			\longrightarrow
			\sum_{j=0}^{\infty}
			\sum_{k=0}^{\infty}
			\mu^{j+2k}
			\frac{(j+1)^2}
			{(k!)^2(j+k+1)!^2},
			\qquad
			q\to\infty.
			\label{eq:three-level-limit-theorem}
		\end{equation}
		
	\end{theorem}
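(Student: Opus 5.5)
The plan is a dominated-convergence argument for the finite double sum \eqref{eq:three-level-coefficient-sum}. It mirrors the proof of the two-level limit in Proposition~\ref{prop:two-level-formula}.

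First, rewrite $\mathcal R_q^{[2]}(\mu)$ as a series over all pairs $(j,k)\in\mathbb N_0^2$:
\[
\mathcal R_q^{[2]}(\mu)=\sum_{j,k\geq0}\mu^{j+2k}\,\mathcal W_q(j,k)\,\mathbf 1_{\{j+k\leq q\}} .
\]
Here the indicator simply records that terms with $j+k>q$ are absent. By Proposition~\ref{prop:three-level-exact-factorization}, for $j+k\leq q$ we have
\[
\mathcal W_q(j,k)=\frac{(j+1)^2}{(k!)^2(j+k+1)!^2}\,\Phi_q(j,k)^2,
\qquad
\Phi_q(j,k)=\frac{(q-j-k+1)(q-k+2)}{(q+1)(q+2)}.
\]

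Second, establish pointwise convergence. For fixed $(j,k)$, the indicator equals $1$ once $q\geq j+k$. Moreover $\Phi_q(j,k)\to1$ as $q\to\infty$, since both the numerator and the denominator are monic quadratics in $q$. The summand therefore converges to $\mu^{j+2k}(j+1)^2/\bigl((k!)^2(j+k+1)!^2\bigr)$. This agrees with Proposition~\ref{prop:three-level-fixed-occupation-limit}, but using the exact form makes the convergence immediate.

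Third, supply a summable majorant independent of $q$. The bound \eqref{eq:three-level-correction-bound} gives $0\leq\Phi_q\leq1$, so \eqref{eq:three-level-uniform-majorant} holds. Combining it with \eqref{eq:three-level-majorant} bounds each summand by $\frac{\mu^j}{(j!)^2}\frac{\mu^{2k}}{(k!)^2}$. This majorant is summable with total $I_0(2\sqrt\mu)\,I_0(2\mu)$, by \eqref{eq:three-level-majorant-sum}. Dominated convergence for series (Tannery's theorem) then yields \eqref{eq:three-level-limit-theorem}.

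The only point needing care is the majorant inequality \eqref{eq:three-level-majorant}. It requires $(j+1)^2/(j+k+1)!^2\leq 1/(j!)^2$, which follows from $(j+k+1)!\geq(j+1)!=(j+1)\,j!$. Everything else is routine.

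As a consistency check, the result should coincide with $\mathcal R_\infty^{[2]}(\mu)$ from Theorem~\ref{thm:fixed-level-convergence}, i.e. $\sum_{\nu_1\leq2}\mu^{|\nu|}(f^\nu/|\nu|!)^2$. Take $\nu=(2^k,1^j)$, so $|\nu|=j+2k$. The hook-length formula gives hook product $k!\,(j+k+1)!/(j+1)$, so the Schur-side coefficient $1/(\text{hook product})^2$ is exactly $(j+1)^2/\bigl((k!)^2(j+k+1)!^2\bigr)$. The two limits therefore agree term by term.
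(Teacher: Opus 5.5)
Your proposal is correct and follows the paper's own argument: the exact factorization of $\mathcal W_q(j,k)$, the pointwise limit of the correction factor, and dominated convergence with the $q$-independent majorant $\frac{\mu^j}{(j!)^2}\frac{\mu^{2k}}{(k!)^2}$, which sums to $I_0(2\sqrt{\mu})\,I_0(2\mu)$. Your hook-length check with $\nu=(2^k,1^j)$, giving hook product $k!\,(j+k+1)!/(j+1)$, is also correct and ties the limit to Theorem~\ref{thm:fixed-level-convergence}.
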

	
	\begin{proof}
		By \eqref{eq:three-level-coefficient-sum} and
		\eqref{eq:three-level-exact-coefficient},
		\begin{align}
			\mathcal R_q^{[2]}(\mu)
			&=
			\sum_{\substack{j,k\geq0\\j+k\leq q}}
			\mu^{j+2k}
			\frac{(j+1)^2}
			{(k!)^2(j+k+1)!^2}
			\left(
			\frac{(q-j-k+1)(q-k+2)}
			{(q+1)(q+2)}
			\right)^2.
			\label{eq:three-level-prelimit-sum}
		\end{align}
		For every fixed pair $j,k$,
		\[
		\left(
		\frac{(q-j-k+1)(q-k+2)}
		{(q+1)(q+2)}
		\right)^2
		\longrightarrow1,
		\qquad
		q\to\infty.
		\]
		Moreover, by \eqref{eq:three-level-uniform-majorant},
		\[
		0
		\leq
		\mu^{j+2k}
		\frac{(j+1)^2}
		{(k!)^2(j+k+1)!^2}
		\left(
		\frac{(q-j-k+1)(q-k+2)}
		{(q+1)(q+2)}
		\right)^2
		\leq
		\mu^{j+2k}
		\frac{(j+1)^2}
		{(k!)^2(j+k+1)!^2}.
		\]
		The right-hand side is summable by
		\eqref{eq:three-level-majorant}. The dominated convergence theorem
		therefore yields \eqref{eq:three-level-limit-theorem}.
	\end{proof}
	
	The limiting expression in \eqref{eq:three-level-limit-theorem} can
	be written as
	\begin{equation}
		\mathcal R_\infty^{[2]}(\mu)
		=
		\sum_{j=0}^{\infty}
		\sum_{k=0}^{\infty}
		\mu^{j+2k}
		\frac{(j+1)^2}
		{(k!)^2(j+k+1)!^2}.
		\label{eq:three-level-limit-definition}
	\end{equation}
	The restriction to $k=0$ recovers the two-level limit:
	\begin{align}
		\sum_{j=0}^{\infty}
		\mu^j
		\frac{(j+1)^2}{(j+1)!^2}
		&=
		\sum_{j=0}^{\infty}
		\frac{\mu^j}{(j!)^2}
		\nonumber\\
		&=
		I_0(2\sqrt{\mu}).
		\label{eq:three-level-reduces-to-two-level}
	\end{align}
	Thus, the terms with $k\geq1$ give the genuine new contribution of the
	third excitation level.

\subsection{Consequences of the first truncation levels}

The first three truncation levels provide rigorous lower approximations
to the normalized sector weight $\mathcal R_q(\mu)$. Since every
summand in the normalized sector representation
\eqref{eq:normalized-sector-sum} is nonnegative, the truncations are
monotone in the truncation level:
\begin{equation}
	\mathcal R_q^{[0]}(\mu)
	\leq
	\mathcal R_q^{[1]}(\mu)
	\leq
	\mathcal R_q^{[2]}(\mu)
	\leq
	\mathcal R_q(\mu).
	\label{eq:truncation-level-monotonicity}
\end{equation}

The zeroth truncation consists only of the minimal configuration
\[
s_1=\cdots=s_q=0.
\]
Therefore,
\begin{equation}
	\mathcal R_q^{[0]}(\mu)=1.
	\label{eq:zeroth-truncation-value}
\end{equation}

The two-level analysis gives
\begin{equation}
	\mathcal R_q^{[1]}(\mu)
	\longrightarrow
	I_0(2\sqrt{\mu}),
	\qquad
	q\to\infty.
	\label{eq:first-truncation-limit-recalled}
\end{equation}

The three-level analysis gives
\begin{equation}
	\mathcal R_q^{[2]}(\mu)
	\longrightarrow
	\mathcal R_\infty^{[2]}(\mu),
	\qquad
	q\to\infty,
	\label{eq:second-truncation-limit-recalled}
\end{equation}
where
\begin{equation}
	\mathcal R_\infty^{[2]}(\mu)
	=
	\sum_{j=0}^{\infty}
	\sum_{k=0}^{\infty}
	\mu^{j+2k}
	\frac{(j+1)^2}
	{(k!)^2(j+k+1)!^2}.
	\label{eq:second-truncation-limit-recalled-definition}
\end{equation}

Since the subfamily $k=0$ in
\eqref{eq:second-truncation-limit-recalled-definition} reproduces the
two-level limit, one has
\begin{align}
	\mathcal R_\infty^{[2]}(\mu)
	&\geq
	\sum_{j=0}^{\infty}
	\mu^j
	\frac{(j+1)^2}{(j+1)!^2}
	\nonumber\\
	&=
	\sum_{j=0}^{\infty}
	\frac{\mu^j}{(j!)^2}
	\nonumber\\
	&=
	I_0(2\sqrt{\mu}).
	\label{eq:second-truncation-Bessel-lower-bound}
\end{align}

Consequently,
\begin{equation}
	\liminf_{q\to\infty}
	\mathcal R_q(\mu)
	\geq
	\mathcal R_\infty^{[2]}(\mu)
	\geq
	I_0(2\sqrt{\mu})
	\geq
	1.
	\label{eq:normalized-sector-liminf-lower-bounds}
\end{equation}

The inequalities in
\eqref{eq:normalized-sector-liminf-lower-bounds} show that the minimal
sector contribution alone does not capture the large-sector
normalization. Even the first nontrivial excitation level produces the
Bessel factor $I_0(2\sqrt{\mu})$. The level-$2$ configurations give an
additional positive correction.

More generally, for every fixed truncation level $M$, define
\begin{equation}
	\mathcal R_\infty^{[M]}(\mu)
	=
	\lim_{q\to\infty}
	\mathcal R_q^{[M]}(\mu),
	\label{eq:general-truncated-limit-definition}
\end{equation}
whenever the limit exists. The positivity of the sector weights implies
the formal monotonicity relation
\begin{equation}
	\mathcal R_\infty^{[M]}(\mu)
	\leq
	\mathcal R_\infty^{[M+1]}(\mu).
	\label{eq:general-truncated-limit-monotonicity}
\end{equation}

The main remaining question is whether the increasing sequence of
truncated limits converges to the infinite reduced tau function:
\begin{equation}
	\lim_{M\to\infty}
	\mathcal R_\infty^{[M]}(\mu)
	=
	\mathcal T_\infty(\mu).
	\label{eq:truncated-limit-identification-target}
\end{equation}

The hook-length analysis in the main text proves precisely this
identification (see Sections~\ref{sec:hook-length-tail-control} and
\ref{sec:fixed-level-convergence}):
\[
\mathcal T_\infty(\mu)
=
e^\mu,
\qquad
\mathcal R_q(\mu)
\longrightarrow
e^\mu.
\]
By the connection identity
\eqref{eq:connection-constant-sector-identity}, this yields
\[
L(\mu)^2=1.
\]
Since the relevant Bessel branch is positive, one obtains
\[
L(\mu)=1.
\]

The calculations for $M=1$ and $M=2$ therefore provide explicit,
finite-dimensional illustrations of the general Schur-function and
hook-length mechanism used in the proof.

	\section{Partition identities and standard Young tableaux}
	\label{app:partition-identities}
	
	This appendix records the combinatorial identities used in the
	partition expansions of the reduced tau functions and normalized sector
	weights.
	
	\subsection{Partitions and hook lengths}
	
	A partition is a finite nonincreasing sequence
	\[
	\nu
	=
	(\nu_1,\nu_2,\ldots),
	\qquad
	\nu_1\geq\nu_2\geq\cdots\geq0.
	\]
	Its size is
	\[
	|\nu|
	=
	\sum_{i\geq1}\nu_i,
	\]
	and its length is
	\[
	\ell(\nu)
	=
	\#\{i:\nu_i>0\}.
	\]
	
	The Young diagram of $\nu$ consists of the boxes
	\[
	(i,j),
	\qquad
	1\leq i\leq\ell(\nu),
	\qquad
	1\leq j\leq\nu_i.
	\]
	For a box $(i,j)$, let
	\[
	h_{i,j}
	\]
	denote its hook length.
	
	A standard Young tableau of shape $\nu$ is a filling of the Young
	diagram of $\nu$ by the integers
	\[
	1,2,\ldots,|\nu|
	\]
	which is strictly increasing along every row and every column. Let
	\[
	f^\nu
	\]
	denote the number of standard Young tableaux of shape $\nu$.
	
	\subsection{Hook-length formula}
	
	The hook-length formula states that
	\begin{equation}
		f^\nu
		=
		\frac{|\nu|!}
		{\displaystyle\prod_{(i,j)\in\nu}h_{i,j}}.
		\label{eq:appendix-hook-length-formula}
	\end{equation}
	Equivalently,
	\begin{equation}
		\prod_{(i,j)\in\nu}
		\frac{1}{h_{i,j}}
		=
		\frac{f^\nu}{|\nu|!}.
		\label{eq:appendix-inverse-hook-product}
	\end{equation}
	
	This identity was used to convert the hook-product expressions in the
	tau-function expansions into sums involving the standard-tableau
	numbers $f^\nu$.
	
	\subsection{Hook-content formula}
	
	Let $s_\nu$ denote the Schur function associated with the partition
	$\nu$. For an integer $q\geq\ell(\nu)$, the specialization at
	\[
	1^q=(1,\ldots,1)
	\]
	is given by the hook-content formula
	\begin{equation}
		s_\nu(1^q)
		=
		\prod_{(i,j)\in\nu}
		\frac{q+j-i}{h_{i,j}}.
		\label{eq:appendix-hook-content-formula}
	\end{equation}
	
	Equivalently, one has the Weyl dimension formula
	\begin{equation}
		s_\nu(1^q)
		=
		\prod_{1\leq i<j\leq q}
		\frac{\nu_i-\nu_j+j-i}{j-i},
		\label{eq:appendix-Weyl-dimension-formula}
	\end{equation}
	where zero parts are appended to $\nu$ when necessary.
	
	The identity
	\eqref{eq:appendix-Weyl-dimension-formula} is the partition-theoretic
	form of the Vandermonde quotient that appears in the normalized sector
	sum.
	
	\subsection{The tableau square-sum identity}
	
	For every integer $m\geq0$,
	\begin{equation}
		\sum_{\nu\vdash m}
		(f^\nu)^2
		=
		m!.
		\label{eq:appendix-tableau-square-sum}
	\end{equation}
	
	A combinatorial explanation follows from the Robinson--Schensted
	correspondence. Every permutation
	\[
	\sigma\in S_m
	\]
	is placed in bijection with a pair
	\[
	(P,Q)
	\]
	of standard Young tableaux having the same shape $\nu\vdash m$.
	For a fixed shape $\nu$, there are
	\[
	(f^\nu)^2
	\]
	such pairs. Summing over all partitions of $m$ therefore counts all
	permutations in the symmetric group $S_m$:
	\[
	|S_m|
	=
	m!
	=
	\sum_{\nu\vdash m}
	(f^\nu)^2.
	\]
	
	\subsection{Exponential partition sum}
	
	Combining \eqref{eq:appendix-hook-length-formula} and
	\eqref{eq:appendix-tableau-square-sum} gives
	\begin{align}
		\sum_{\nu}
		\mu^{|\nu|}
		\left(
		\frac{f^\nu}{|\nu|!}
		\right)^2
		&=
		\sum_{m=0}^{\infty}
		\mu^m
		\sum_{\nu\vdash m}
		\left(
		\frac{f^\nu}{m!}
		\right)^2
		\nonumber\\
		&=
		\sum_{m=0}^{\infty}
		\frac{\mu^m}{m!}
		\nonumber\\
		&=
		e^\mu.
		\label{eq:appendix-exponential-partition-sum}
	\end{align}
	
	Formula \eqref{eq:appendix-exponential-partition-sum} is the
	combinatorial identity underlying both limits
	\[
	\mathcal T_\infty(\mu)
	=
	e^\mu
	\]
	and
	\[
	\lim_{q\to\infty}
	\mathcal R_q(\mu)
	=
	e^\mu.
	\]

	\end{document}